\tikzset{
diagonal fill/.style 2 args={fill=#2, path picture={
\fill[#1, sharp corners] (path picture bounding box.south west) -|
                         (path picture bounding box.north east) -- cycle;}},
reversed diagonal fill/.style 2 args={fill=#2, path picture={
\fill[#1, sharp corners] (path picture bounding box.north west) |- 
                         (path picture bounding box.south east) -- cycle;}}
}
 \def\desclabel#1#2{\begingroup
    \def\@currentlabel{#1}%
    #1\label{#2}\endgroup
 }
\definecolor{mygreen}{RGB}{36, 200, 100}
\definecolor{myyellow}{RGB}{220, 220, 30}
\definecolor{ao(english)}{rgb}{0.0, 0.5, 0.0}
\newcommand{\mathcommand}[2]{\newcommand{#1}{\ensuremath{#2}\xspace}}
\newcommand{\mylogic}[1]{\textsc{#1}}
\mathcommand{\logic}{\mathcal L}
\mathcommand{\logicbis}{\logic'}
\mathcommand{\FO}{\mylogic{FO}}
\mathcommand{\FOtwo}{\FO^2}
\mathcommand{\FOt}{\FOtwo}
\mathcommand{\FOthree}{\FO^3}
\mathcommand{\Ctwo}{\mylogic{C}^2}
\mathcommand{\LFP}{\mylogic{LFP}}
\mathcommand{\PFP}{\mylogic{PFP}}
\mathcommand{\MSO}{\mylogic{MSO}}
\mathcommand{\ESO}{\mylogic{ESO}}
\mathcommand{\USO}{\mylogic{USO}}
\mathcommand{\CMSO}{\mylogic{CMSO}}
\mathcommand{\oi}{<\!\textup{-inv }}
\mathcommand{\oilogic}{\oi\logic}
\mathcommand{\oifo}{\oi\FO}
\mathcommand{\oifotwo}{\oi\FOtwo}
\mathcommand{\oifothree}{\oi\FOthree}
\mathcommand{\oictwo}{\oi\Ctwo}
\mathcommand{\oimso}{\oi\MSO}
\mathcommand{\oilfp}{\oi\LFP}
\mathcommand{\oipfp}{\oi\PFP}
\mathcommand{\si}{\textup{Succ-inv }}
\mathcommand{\sifo}{\si\FO}
\mathcommand{\simso}{\si\MSO}
\mathcommand{\arbi}{\textup{Arb-inv }}
\mathcommand{\arbifo}{\arbi\FO}
\NewDocumentCommand{\logeq}{O{k}}{\equiv^{{\logic}}_{#1}}
\NewDocumentCommand{\foeq}{O{k}}{\equiv^{{\FO}}_{#1}}
\NewDocumentCommand{\foneq}{O{k}}{\not\equiv^{{\FO}}_{#1}}
\NewDocumentCommand{\fotwoeq}{O{k}}{\equiv^{{\FOtwo}}_{#1}}
\NewDocumentCommand{\ctwoeq}{O{k,k}}{\equiv^{{\Ctwo}}_{#1}}
\NewDocumentCommand{\msoeq}{O{k}}{\equiv^{{\MSO}}_{#1}}
\NewDocumentCommand{\oifoeq}{O{k}}{\equiv^{{\oifo}}_{#1}}
\NewDocumentCommand{\oifotwoeq}{O{k}}{\equiv^{{\oifotwo}}_{#1}}
\NewDocumentCommand{\oictwoeq}{O{k,k}}{\equiv^{{\oictwo}}_{#1}}
\NewDocumentCommand{\sifoeq}{O{k}}{\equiv^{{\sifo}}_{#1}}
\newcommand{\myclass}[1]{\textsc{#1}}
\mathcommand{\PTIME}{\myclass{PTime}}
\mathcommand{\LOGSPACE}{\myclass{LogSpace}}
\mathcommand{\NLOGSPACE}{\myclass{NLogSpace}}
\mathcommand{\PSPACE}{\myclass{PSpace}}
\mathcommand{\ACO}{\myclass{AC}^0}
\mathcommand{\NCO}{\myclass{NC}^0}
\mathcommand{\TCO}{\myclass{TC}^0}
\mathcommand{\coNET}{\myclass{coNExpTime}}
\mathcommand{\NET}{\myclass{NExpTime}}
\mathcommand{\PT}{\myclass{PTime}}
\mathcommand{\coNtwoET}{\myclass{coN2ExpTime}}
\mathcommand{\Ackermann}{\myclass{Ackermann}}
\newcommand{\nocc}[2]{|#1|_{#2}}
\newcommand{\numtp}[1]{\llbracket#1\rrbracket}
\NewDocumentCommand{\boule}{O{\elem}O{\struct}O{k}}{N_{#2}^{#3}(#1)}
\NewDocumentCommand{\boulegen}{O{\elem}O{\structgen}O{k}}{N_{#2}^{#3}(#1)}
\NewDocumentCommand{\neigh}{O{\elem}O{\struct}O{k}}{\mathcal N_{#2}^{#3}(#1)}
\NewDocumentCommand{\neighgen}{O{\elem}O{\structgen}O{k}}{\mathcal N_{#2}^{#3}(#1)}
\newcommand{\threq}[4]{\numtp{#1}_{#3}=^{#4}\numtp{#2}_{#3}}
\NewDocumentCommand{\voisdom}{O{\pebbletwo}O{\struct}}{\boule[#1][#2][1]}
\NewDocumentCommand{\vois}{O{\pebbletwo}O{\struct}O{r}}{\neigh[#1][#2][#3]}
\NewDocumentCommand{\ntpgen}{O{\elem}O{\structgen}O{k}}{\text{neigh-tp}_{#2}^{#3}(#1)}
\NewDocumentCommand{\ntp}{O{\elem}O{k}}{\ntpgen[#1][\struct][#2]}
\NewDocumentCommand{\ntpbis}{O{\elem}O{k}}{\ntpgen[#1][\structbis][#2]}
\NewDocumentCommand{\ntpind}{O{\elem}O{k}}{\ntpgen[#1][\structind][#2]}
\NewDocumentCommand{\tpgen}{O{\elem}O{\structgen}}{\text{tp}_{#2}(#1)}
\NewDocumentCommand{\tp}{O{\pebbleone,\pebbletwo}}{\tpgen[#1][\orderstruct]}
\NewDocumentCommand{\tpbis}{O{\pebbleonebis,\pebbletwobis}}{\tpgen[#1][\orderstructbis]}
\NewDocumentCommand{\ordtpgen}{O{\elem,\elembis}O{\ordgen}}{\text{tp}_{#2}(#1)}
\NewDocumentCommand{\ordtp}{O{\pebbleone,\pebbletwo}}{\ordtpgen[#1][\ordgen]}
\NewDocumentCommand{\ordtpbis}{O{\pebbleonebis,\pebbletwobis}}{\ordtpgen[#1][\ordgen]}
\NewDocumentCommand{\vocabtp}{O{\pebbleone,\pebbletwo}}{\tpgen[#1][\struct]}
\NewDocumentCommand{\vocabtpbis}{O{\pebbleonebis,\pebbletwobis}}{\tpgen[#1][\structbis]}
\newcommand{\set}{P_0}
\newcommand{\setbis}{P_1}
\newcommand{\vocabtpone}{\vocabtp[\pebbleone]}
\newcommand{\vocabtptwo}{\vocabtp[\pebbletwo]}
\newcommand{\vocabtponebis}{\vocabtpbis[\pebbleonebis]}
\newcommand{\vocabtptwobis}{\vocabtpbis[\pebbletwobis]}
\mathcommand{\N}{\mathbb{N}}
\mathcommand{\im}{\operatorname{Im}}
\newcommand{\dist}[3]{\text{dist}_{#1}(#2,#3)}
\newcommand{\EF}{Ehrenfeucht-Fra\"iss\'e\xspace}
\newcommand{\existsm}[1]{\exists^{\geq #1}}
\NewDocumentCommand{\inter}{O{}O{}}{\mathcal I_{#1}^{#2}}
\mathcommand{\elem}{a}
\mathcommand{\elembis}{b}
\mathcommand{\elemter}{c}
\mathcommand{\var}{x}
\mathcommand{\varbis}{y}
\mathcommand{\classe}{\mathdutchcal{C}}
\mathcommand{\classebis}{\classe'}
\mathcommand{\propriete}{\mathdutchcal{P}}
\mathcommand{\ind}{i}
\mathcommand{\indin}{\ind\in\{0,1\}}
\mathcommand{\structdom}{A_0}
\mathcommand{\struct}{\mathcal A_0}
\mathcommand{\structbisdom}{A_1}
\mathcommand{\structbis}{\mathcal A_1}
\mathcommand{\structinddom}{A_\ind}
\mathcommand{\structind}{\mathcal A_\ind}
\mathcommand{\structgendom}{A}
\mathcommand{\structgen}{\mathcal A}
\mathcommand{\ord}{<_0}
\mathcommand{\ordbis}{<_1}
\mathcommand{\ordind}{<_\ind}
\mathcommand{\ordgen}{<}
\mathcommand{\orderstruct}{(\struct,\ordgen)}
\mathcommand{\orderstructbis}{(\structbis,\ordgen)}
\mathcommand{\orderstructind}{(\structind,\ordgen)}
\mathcommand{\orderstructgen}{(\structgen,\ordgen)}
\mathcommand{\type}{\tau}
\NewDocumentCommand{\otype}{O{l}}{\type_{#1}}
\NewDocumentCommand{\otypefirst}{O{1}}{\otype[#1]}
\NewDocumentCommand{\otypelast}{O{|\Ord|}}{\otype[#1]}
\mathcommand{\formule}{\varphi}
\mathcommand{\formulebis}{\psi}
\mathcommand{\formuleter}{\theta}
\mathcommand{\vocab}{\Sigma}
\mathcommand{\vocabbis}{\vocab'}
\mathcommand{\arbre}{\mathcal T}
\mathcommand{\graphe}{\mathcal\graphedom}
\mathcommand{\graphedom}{G}
\mathcommand{\graphedombis}{H}
\mathcommand{\edgerel}{E}
\mathcommand{\gaifman}{\graphe_{\structgen}}
\NewDocumentCommand{\powgaifman}{O{\delta}}{\gaifman^{#1}}
\NewDocumentCommand{\vocabgraphe}{sO{\vocab}O{\edgerel}}{
  \IfBooleanTF#1
              {#2_{#3}=\{#3^{(2)}\}}
              {#2_{#3}}
} 
\NewDocumentCommand{\structgraphe}{sO{\graphedom}O{\edgerel}}{
  \IfBooleanTF#1
              {\mathcal{#2}=(#2,#3^{\mathcal{#2}})}
              {\mathcal{#2}}
}
\NewDocumentCommand{\structgraphebis}{sO{\graphedombis}O{\edgerel}}{
  \IfBooleanTF#1
              {\mathcal{#2}=(#2,#3^{\mathcal{#2}})}
              {\mathcal{#2}}
} 
\mathcommand{\arbredom}{T}
\mathcommand{\childrel}{\edgerel}
\NewDocumentCommand{\structarbre}{sO{\arbredom}O{\childrel}}{
  \IfBooleanTF#1
              {\structgraphe*[#2][#3]}
              {\structgraphe[#2][#3]}
} 
\mathcommand{\structarbrebis}{\structarbre'}
\mathcommand{\treedec}{(\structarbre,\bag)}
\mathcommand{\relsymb}{R}
\NewDocumentCommand{\pebblegen}{O{i}O{\alpha}}{p_{#1}^{#2}}
\NewDocumentCommand{\pebbleone}{}{\pebblegen[0][x]}
\NewDocumentCommand{\pebbletwo}{}{\pebblegen[0][y]}
\NewDocumentCommand{\pebbleonebis}{}{\pebblegen[1][x]}
\NewDocumentCommand{\pebbletwobis}{}{\pebblegen[1][y]}
\newcommand{\epsin}{\alpha\in\{x,y\}}
\NewDocumentCommand{\bijsymb}{O{0}}{\phi_{#1}}
\NewDocumentCommand{\bijgen}{O{\elem}O{i}}{\bijsymb[#2](#1)}
\NewDocumentCommand{\bij}{O{\pebbleone}}{\bijgen[#1][0]}
\NewDocumentCommand{\bijbis}{O{\pebbleonebis}}{\bijgen[#1][1]}
\NewDocumentCommand{\Ord}{O{\type}}{\ensuremath{\textsc{Env}(#1)}\xspace}
\NewDocumentCommand{\ordORD}{O{k}O{d}O{\vocab}}{\prec_{#1}^{#2}}
\NewDocumentCommand{\Ntp}{O{k}O{d}O{\vocab}}{\textsc{NeighType}_{#1}^{#2}}
\NewDocumentCommand{\Freq}{O{\struct}O{k}O{}}{\textsc{Freq}[#1]_{#2}^{#3}}
\NewDocumentCommand{\nbrtp}{O{k}O{d}}{\ensuremath{\#\text{\small neigh}_{#1}^{#2}}\xspace}
\NewDocumentCommand{\pin}{O{\otype}O{*}O{j}}{\elem[#1]_{#2}^{#3}}
\NewDocumentCommand{\pinLeft}{O{\otype}O{j}}{\pin[#1][L][#2]}
\NewDocumentCommand{\pinRight}{O{\otype}O{j}}{\pin[#1][R][#2]}
\NewDocumentCommand{\envtpgen}{O{\elem}O{\orderstructgen}O{k}}{\text{env-tp}_{#2}^{#3}(#1)}
\NewDocumentCommand{\envtp}{O{\elem}O{k}}{\envtpgen[#1][\orderstruct][#2]}
\NewDocumentCommand{\envtpbis}{O{\elem}O{k}}{\envtpgen[#1][\orderstructbis][#2]}
\NewDocumentCommand{\Envgen}{O{\elem}O{\orderstructgen}O{k}}{\text{Env}_{#2}^{#3}(#1)}
\NewDocumentCommand{\Env}{O{\elem}O{k}}{\Envgen[#1][\orderstruct][#2]}
\NewDocumentCommand{\Envbis}{O{\elem}O{k}}{\Envgen[#1][\orderstructbis][#2]}
\NewDocumentCommand{\Leftgen}{O{}O{j}}{L_{#1}^{#2}}
\NewDocumentCommand{\Left}{O{j}}{\Leftgen[0][#1]}
\NewDocumentCommand{\Leftbis}{O{j}}{\Leftgen[1][#1]}
\NewDocumentCommand{\Leftind}{O{j}}{\Leftgen[\ind][#1]}
\NewDocumentCommand{\NLeftgen}{O{}O{j}}{N\!L_{#1}^{#2}}
\NewDocumentCommand{\NLeft}{O{j}}{\NLeftgen[0][#1]}
\NewDocumentCommand{\NLeftbis}{O{j}}{\NLeftgen[1][#1]}
\NewDocumentCommand{\NLeftind}{O{j}}{\NLeftgen[\ind][#1]}
\NewDocumentCommand{\ULeftgen}{O{}O{j}}{U\!L_{#1}^{#2}}
\NewDocumentCommand{\ULeft}{O{j}}{\ULeftgen[0][#1]}
\NewDocumentCommand{\ULeftbis}{O{j}}{\ULeftgen[1][#1]}
\NewDocumentCommand{\ULeftind}{O{j}}{\ULeftgen[\ind][#1]}
\NewDocumentCommand{\Rightgen}{O{}O{j}}{R_{#1}^{#2}}
\NewDocumentCommand{\Right}{O{j}}{\Rightgen[0][#1]}
\NewDocumentCommand{\Rightbis}{O{j}}{\Rightgen[1][#1]}
\NewDocumentCommand{\Rightind}{O{j}}{\Rightgen[\ind][#1]}
\NewDocumentCommand{\NRightgen}{O{}O{j}}{N\!R_{#1}^{#2}}
\NewDocumentCommand{\NRight}{O{j}}{\NRightgen[0][#1]}
\NewDocumentCommand{\NRightbis}{O{j}}{\NRightgen[1][#1]}
\NewDocumentCommand{\NRightind}{O{j}}{\NRightgen[\ind][#1]}
\NewDocumentCommand{\URightgen}{O{}O{j}}{U\!R_{#1}^{#2}}
\NewDocumentCommand{\URight}{O{j}}{\URightgen[0][#1]}
\NewDocumentCommand{\URightbis}{O{j}}{\URightgen[1][#1]}
\NewDocumentCommand{\URightind}{O{j}}{\URightgen[\ind][#1]}
\NewDocumentCommand{\Raregen}{O{}}{X_{#1}}
\mathcommand{\Rare}{\Raregen[0]}
\mathcommand{\Rarebis}{\Raregen[1]}
\mathcommand{\Rareind}{\Raregen[\ind]}
\NewDocumentCommand{\Middlegen}{O{}}{M_{#1}}
\mathcommand{\Middle}{\Middlegen[0]}
\mathcommand{\Middlebis}{\Middlegen[1]}
\mathcommand{\Middleind}{\Middlegen[\ind]}
\NewDocumentCommand{\Segmentgen}{O{}O{j}}{S_{#1}^{#2}}
\NewDocumentCommand{\Segment}{O{r}}{\Segmentgen[0][#1]}
\NewDocumentCommand{\Segmentbis}{O{r}}{\Segmentgen[1][#1]}
\NewDocumentCommand{\Segmentind}{O{r}}{\Segmentgen[\ind][#1]}
\NewDocumentCommand{\Neighgen}{mO{\structgen}}{N_{#2}(#1)}
\NewDocumentCommand{\Neigh}{m}{\Neighgen{#1}[\struct]}
\NewDocumentCommand{\nextordgen}{O{\ordgen}}{\cdot_{#1}}
\mathcommand{\thr}{\Upsilon}
\NewDocumentCommand{\labsr}{O{r}}{\mathcal{S}_{#1}}
\NewDocumentCommand{\refsr}{O{r}}{$(\labsr[#1])$\xspace}
\NewDocumentCommand{\laber}{O{r}}{\mathcal{E}_{#1}}
\NewDocumentCommand{\refer}{O{r}}{$(\laber[#1])$\xspace}
\NewDocumentCommand{\labtr}{O{r}}{\mathcal{T}_{#1}}
\NewDocumentCommand{\reftr}{O{r}}{$(\labtr[#1])$\xspace}
\DeclareMathAlphabet{\mathdutchcal}{U}{dutchcal}{m}{n}
\SetMathAlphabet{\mathdutchcal}{bold}{U}{dutchcal}{b}{n}
\DeclareMathAlphabet{\mathdutchbcal}{U}{dutchcal}{b}{n}
\renewcommand{\restriction}{\mathord{\upharpoonright}}
\newcommand{\restr}[2]{#1\restriction_{#2}} 
\newcommand{\AAA}{\mbox{\large \boldmath $\alpha$}}
\newcommand{\BBB}{\mbox{\large \boldmath $\beta$}}
\title[Order-Invariance with Two Variables]{About the Expressive Power and Complexity of Order-Invariance with Two Variables}%
\author[B.~Bednarczyk]{Bartosz Bednarczyk\lmcsorcid{0000-0002-8267-7554}}[a]
\author[J.~Grange]{Julien Grange\lmcsorcid{0009-0005-0470-1781}}[b]
\address{Institute of Computer Science, University of Wrocław}
\email{bartosz.bednarczyk@cs.uni.wroc.pl}
\address{LACL, Université Paris-Est Créteil, France}
\email{julien.grange@lacl.fr}
\keywords{Finite model theory, order-invariance, two-variable logic, complexity}
\begin{document}

\maketitle


\begin{abstract}
  Order-invariant first-order logic is an extension of first-order logic ($\FO$)
  where formulae can make use of a linear order on the structures, 
  under the proviso that they are order-invariant, 
  \ie that their truth value is the same for all linear orders.
  We continue the study of the two-variable fragment of order-invariant 
  first-order logic initiated by Zeume and Harwath, and study its complexity and expressive power.
  We first establish $\coNET$-completeness for the problem of deciding if a given two-variable 
  formula is order-invariant, which tightens and significantly simplifies the $\coNtwoET$ proof by Zeume and Harwath.
  Second, we address the question of whether every property expressible in order-invariant two-variable logic is also expressible in first-order logic without the use of a linear order.
  We suspect that the answer is ``no''. To justify our claim, we present
  a class of finite tree-like structures (of unbounded degree) in which a relaxed variant of order-invariant two-variable $\FO$ expresses properties that are not definable in plain $\FO$.
  By contrast, we show that if one restricts their attention to classes of structures of bounded degree, then the expressive power of order-invariant two-variable $\FO$ is contained within~$\FO$.
\end{abstract}


\section{Introduction}\label{sec:introduction}
The main goal of finite model theory is to understand formal languages describing finite structures: their complexity and their expressive power.
Such languages are ubiquitous in computer science, starting from descriptive complexity, where they are used to provide machine-independent characterisations of complexity classes, and ending up on database theory and knowledge representation, where formal languages serve as fundamental querying formalism.
A classical idea in finite model theory is to employ invariantly-used relations, capturing the data-independence principle in databases: it makes sense to give queries the ability to exploit the presence of the order in which the data is stored in the memory, but at the same time we would like to make query results independent of this specific ordering.
It is not immediately clear that the addition of an invariantly-used linear order to first-order logic ($\FO$) allow us to gain anything from the standpoint of expressive power.
Indeed, as long as we consider arbitrary (\ie not necessarily finite) structures it does not, which is a direct consequence of $\FO$ having the Craig Interpolation Property.
However, as it was first shown by Gurevich~\cite[Thm.~5.3]{Libkin04}, the claim holds true over finite structures: order-invariant $\FO$ is more expressive~than~plain~$\FO$ (already the four-variable fragment of $\FO$ can express properties that are not $\FO$-definable).

Unfortunately, order-invariant $\FO$ is poorly understood. 
As stated in~\cite{BarceloL16}, one of the reasons why progress in understanding order-invariance is rather slow is the lack of logical toolkit.
The classical model-theoretic methods based on types were proposed only recently~\cite{BarceloL16}, and order-invariant $\FO$ is not even a logic in the classical sense, as its syntax is undecidable. 
Moreover, the availability of locality-based methods is limited: order-invariant $\FO$ is known to be Gaifman-local~\cite[Thm. 2]{GroheS00} but the status of its Hanf-locality remains open.
This suggests that a good way to understand order-invariant $\FO$ is to first look at its fragments, \eg the fragments with a limited~number~of~variables.

\paragraph*{Our contribution}\label{subsec:our-contribution}
We continue the line of research initiated in~\cite{ZeumeH16}, which aims to study the complexity and the expressive power of order-invariant $\FOt$, the two-variable fragment of order-invariant $\FO$.
From a complexity point of view, it is known that order-invariant $\FOt$ has a $\coNET$-complete 
validity problem (which is inherited from $\FOt$ with a single linear order, see~\cite[Thm. 1.2]{Otto01}),
and that whether a given $\FOt$-formula is order-invariant is decidable in~$\coNtwoET$~\cite[Thm.~12]{ZeumeH16}.
From an expressive power point of view, order-invariant $\FOt$ is more expressive than plain $\FOt$ 
as it can count globally, see~\cite[Example 2]{ZeumeH16}. 
It remains open~\cite[Sec. 7]{ZeumeH16}, however, whether it is true that every order-invariant $\FOt$-formula 
is equivalent to an $\FO$-formula without the linear order predicate. 
This paper contributes to the field in the three following ways: 

\begin{itemize}\itemsep0em
\item We provide a tight bound for 
deciding order-invariance for $\FOt$; namely, we show that this problem is $\coNET$-complete. 
Our proof method relies on establishing an exponential-size counter-model property for order-invariance, 
and is significantly easier than the proof of Zeume and Harwath~\cite[Thm.~12]{ZeumeH16}.

\item We present a class $\classe_{\textit{tree}}$ of tree-like structures, 
inspired by~\cite{Potthoff94}, and show that there exists an $\FOt$-formula that 
is \emph{order-invariant over} $\classe_{\textit{tree}}$ (but not over all finite structures!)
which is not equivalent to any $\FO$-formula without the linear order predicate.
This leads us to believe that the answer to the question of~\cite[Sec. 7]{ZeumeH16} of whether the expressive power of order-invariant $\FOt$ lies inside $\FO$ is ``\emph{no}''. The problem remains open, though.

\item In stark contrast to the previous result, we show that order-invariant $\FOt$ 
cannot express properties beyond the scope of $\FO$ over classes of structures of bounded degree.\@ 
We show that this upper bound remains when adding counting to $\FOtwo$. 
\end{itemize}

This work is an extended version of~\cite{Bednarczyk22} and~\cite{Grange23}. 


\section{Preliminaries}\label{sec:prelim}

\noindent We employ standard terminology from finite model theory, assuming that the reader is familiar with the syntax and the semantics of first-order logic ($\FO$)~\cite[Sec.~2.1]{Libkin04}, basics on computability and complexity~\cite[Secs.~2.2--2.3]{Libkin04}, and order-invariant queries~\cite[Secs.~5.1--5.2]{Libkin04}.
By $\FO(\Sigma)$ we denote the first-order logic with equality 
(written \FO when \vocab is clear from the context) on a finite signature \vocab composed of 
relation and constant symbols. We say that \vocab is \emph{purely relational} if it contains no constant symbol.
By~\FOtwo we denote the fragment of \FO in which only two variables \var and \varbis are used.

As~usual in mathematics, for a given set $S$ we write $|S|$ to denote its cardinality.

\paragraph*{Structures}\label{para:intro-structures}
Structures are denoted by calligraphic upper-case letters $\mathcal{A}, \mathcal{B}$ and their domains are denoted by the corresponding Roman letters $A, B$.
We assume that structures have non-empty, \emph{finite} domains.
We write $\varphi[{R}{/}{S}]$ to denote the formula obtained from $\varphi$ by replacing each occurrence of the symbol $R$ with~$S$.
We write~$\varphi(\bar{x})$ to indicate that all the free variables of~$\varphi$ are in~$\bar{x}$.
A~sentence is a formula without free variables.
By $\restr{\mathcal{A}}{\Delta}$ we denote the substructure of the structure $\mathcal{A}$ restricted to the set $\Delta \subseteq A$.
Given a $\Sigma$-structure $\mathcal{A}$ and a relational vocabulary $\tau \subseteq \Sigma$, by the \emph{$\tau$-reduct} of $\mathcal{A}$ we mean the unique $\tau$-structure $\mathcal{B}$ such that $A = B$ and for every predicate $R$ from $\tau$ we have $R^{\mathcal{A}} = R^{\mathcal{B}}$.

\paragraph{Order-invariance}
A sentence $\formule\in\FOtwo(\vocab\cup\{\ordgen\})$, where \ordgen is a binary
relation symbol not belonging to \vocab, is said to be \emph{order-invariant} if for every finite
\vocab-structure \structgen, and every pair of strict linear orders \ord and
\ordbis on \structgendom, $(\structgen,\ord)\models\formule$ if and only if $(\structgen,\ordbis)\models\formule$.
It is then convenient to omit the interpretation for
the symbol \ordgen, and to write $\structgen\models\formule$ if $(\structgen,\ordgen)\models\formule$ for any (or, equivalently, every)
linear order \ordgen.
Note that $\varphi$ is \emph{not} order-invariant if there is a structure~$\mathcal{A}$ and two linear orders $<_0, <_1$ on $A$ such that $(\mathcal{A}, <_0) \models \varphi$ and $(\mathcal{A}, <_1) \not\models \varphi$. 
The set of order-invariant sentences using two variables is denoted \oifotwo.
While determining whether an $\FO$-sentence is order-invariant is undecidable~\cite[Ex.~9.3]{Libkin04}, the situation improves when we allow only two variables: checking order-invariance for $\FOt$-formulae was shown to be in $\coNtwoET$ in~\cite[Thm.~12]{ZeumeH16}.\footnote{The authors of~\cite{ZeumeH16} incorrectly stated the complexity in their Thm. 12, mistaking ``invariance'' with ``non-invariance''.}

\paragraph*{Decision problems}\label{para:intro-decision-problems}
The \emph{finite satisfiability} \emph{problem} for a logic $\logic$ asks whether an input sentence $\varphi$ from $\logic$ is satisfied in some finite structure.
The \emph{finite satisfiability} \emph{problem} for a logic $\logic$ asks whether an input sentence $\varphi$ from $\logic$ is satisfied by all finite structure.
Both the finite satisfiability and validity problems for $\FO$ are undecidable~\cite{turing1938computable,trakhtenbrot1950impossibility}, while for $\FOt$ they are $\NET$-complete and $\coNET$-complete, see~\cite[Thm.~5.3]{GradelKV97} and~\cite[Thm.~3]{Furer83}, respectively.
Note that $\varphi$ is finitely valid iff $\neg \varphi$ is finitely~unsatisfiable.

\paragraph{Definability and similarity}
Let \logic, \logicbis be two logics defined over the same signature, and~\classe be a class of finite structures on this signature.
We say that a property $\propriete\subseteq\classe$ is \emph{definable} (or expressible) in \logic on \classe if there exists an \logic-sentence \formule such that $\propriete=\{\structgen\in\classe:\ \structgen\models\formule\}$. When \classe is the class of all finite structures, we omit it.
We say that $\logic\subseteq\logicbis$ on \classe if every property on \classe definable in \logic is also definable in \logicbis.
Since a sentence which does not mention the linear order predicate is trivially order-invariant, we get the inclusion $\FOtwo\subseteq\ \oifotwo$.
This inclusion is strict~\cite[Example 2]{ZeumeH16}.

The \emph{quantifier rank} of a formula is the maximal number of quantifiers in a branch of its syntactic tree.
Given two \vocab-structures \struct and \structbis, and \logic being one of \FO, \FOtwo and \oifotwo, we write $\struct\logeq\structbis$ if \struct and \structbis satisfy the same \logic-sentences of quantifier rank at most $k$. In this case, we say that \struct and \structbis are \emph{\logic-similar at depth $k$}.

We write $\struct\simeq\structbis$ if \struct and \structbis are isomorphic.

\paragraph*{Atomic types}
An (atomic) \emph{$1$-type} over $\Sigma$ is a maximal satisfiable set of atoms or negated atoms over $\Sigma$ with a free variable~$x$. 
Similarly, an (atomic) \emph{$2$-type} over $\Sigma$ is a maximal satisfiable set of atoms or negated atoms with variables~$x,y$. We stress that it is not necessary that all variables are mentioned in an atom.
Note that the total number of atomic $1$- and $2$-types over $\tau$ is bounded exponentially in~$|\Sigma|$. 
We often identify a type with the conjunction of all its elements. 
The sets of $1$-types and $2$-types over the signature consisting of the symbols appearing in $\varphi$ are denoted by $\AAA_\varphi$ and $\BBB_\varphi$, respectively.
Given a structure $\mathcal{A}$ and an element~$\elem \in A$ we say that $\elem$ \emph{realises} a $1$-type $\alpha$ if $\alpha$ is the unique $1$-type such that $\mathcal{A} \models \alpha[\elem]$ ($\elem$ satisfies $\alpha$ in $\mathcal{A}$).
We then write $\tpgen[\elem]$ to refer to this type.
Similarly, for (non-necessarily distinct) $\elem, \elembis \in A$, we denote by $\tpgen[\elem,\elembis]$ the unique $2$-type \emph{realised} by the pair $(\elem, \elembis)$, \ie the $2$-type $\beta$ such that~$\mathcal{A} \models \beta[\elem, \elembis]$  (the pair $\elem, \elembis$ satisfies $\beta$ in $\mathcal{A}$).
Finally, given a linearly ordered $\vocab$-structure $\orderstructgen$, we split $\tpgen[\elem,\elembis][\orderstructgen]$ into $\tpgen[\elem,\elembis]$ and $\ordtpgen$, where $\ordtpgen$ contains (and, over constant-free vocabularies, is fully determined by) one of the following atoms ``$x<y$'', ``$x>y$''~and~``$x=y$''.

\paragraph{The two-pebble \EF game}
A common way to prove similarity (with respect to some logic) between two structures is to use \EF games. Such a game is played between two players: one tries to highlight the differences between the two structures, while the other attempts to show that they are identical. There exists a whole family of those games, each characterizing a specific logic, in the sense that the second player wins exactly when the two structures cannot be distinguished by the corresponding logic. These games differ by the kind of moves allowed, which reflect the particulars of the logic they characterize.

As we are interested in two-variable first-order logic, let us define the two-pebble \FO game, which captures exactly the expressive power of \FOtwo. This flavor of \EF game has been introduced by Immerman and Kozen~\cite{DBLP:journals/iandc/ImmermanK89}.

Let us consider two structures \struct and \structbis.
The \emph{$k$-round two-pebble \EF game on \struct and \structbis} is played by two players: the spoiler and the duplicator. The spoiler tries to expose differences between the two structures, while the duplicator tries to establish their indistinguishability.

There are two pebbles associated with each structure: $\pebbleone$ and $\pebbletwo$ on \struct, and~$\pebbleonebis$ and~$\pebbletwobis$ on \structbis. Formally, these pebbles can be seen as the interpretations in each structure of two new constant symbols, but it will be convenient to see them as moving~pieces.

At the start of the game, the duplicator places $\pebbleone$ and $\pebbletwo$ on elements of \struct, and~$\pebbleonebis$ and $\pebbletwobis$ on elements of \structbis. The spoiler wins if the duplicator is unable to ensure that $\vocabtp=\vocabtpbis$. Otherwise, the proper game starts. Note that in the usual definition of the starting position, the pebbles are not on the board; however, it will be convenient to have them placed in order to uniformize our invariants in Section~\ref{sec:EFproof}. This change is not profound and does not affect the properties of the game.

For each of the $k$ rounds, the spoiler starts by choosing a structure and a pebble in this structure, and places this pebble on an element of the chosen structure. In turn, the duplicator must place the corresponding pebble in the other structure on an element of that structure. The spoiler wins at once if $\vocabtp\neq\vocabtpbis$. Otherwise, another round is played.
If the spoiler has not won after $k$ rounds, then the duplicator wins.

The main interest of these games is that they capture the expressive power of \FOtwo. We will only need the fact that these games are correct:

\begin{thm}\cite[Lemma 12]{DBLP:journals/iandc/ImmermanK89}
  \label{th:EF}
  If the duplicator has a winning strategy in the $k$-round two-pebble \EF game on \struct and \structbis, then $\struct\fotwoeq\structbis\,.$
\end{thm}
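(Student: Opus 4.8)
The plan is to prove a stronger, parametrised statement by induction on the number of rounds and then specialise it to sentences. Since every $\FOtwo$-formula has its free variables among $\var$ and $\varbis$, I would identify each game position with a pair of variable assignments: a position in which $\pebbleone,\pebbletwo$ sit on $a_x,a_y\in\structdom$ and $\pebbleonebis,\pebbletwobis$ sit on $b_x,b_y\in\structbisdom$ encodes the assignment $\var\mapsto a_x,\varbis\mapsto a_y$ in \struct and $\var\mapsto b_x,\varbis\mapsto b_y$ in \structbis. The claim to establish is: for every $m\le k$, if from such a position the duplicator wins the $m$-round game (so in particular the current $2$-types agree, $\vocabtp=\vocabtpbis$, since otherwise the spoiler would already have won), then for every $\FOtwo$-formula $\formule(\var,\varbis)$ of quantifier rank at most $m$ we have $\struct\models\formule[a_x,a_y]$ if and only if $\structbis\models\formule[b_x,b_y]$.

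The base case $m=0$ is immediate: a quantifier-free formula is a Boolean combination of atoms over $\{\var,\varbis\}$, and its truth value under an assignment is entirely determined by the realised $2$-type; as the invariant guarantees $\vocabtp=\vocabtpbis$, both structures agree.

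For the inductive step from $m$ to $m+1$ I would argue by a sub-induction on the structure of a formula \formule of quantifier rank at most $m+1$. The Boolean cases ($\neg,\wedge,\vee$) follow at once from the sub-induction hypothesis at the \emph{same} position. The crucial case is $\formule=\exists\var\,\formulebis$ (the case $\exists\varbis\,\formulebis$ being symmetric), where \formulebis has quantifier rank at most $m$. The key observation is that the $\FOtwo$-quantifier $\exists\var$ rebinds $\var$ and therefore corresponds exactly to the spoiler moving the pebble $\pebbleone$ (discarding its old value), while the value carried by $\pebbletwo$ is left untouched. Concretely, if $\struct\models\formule[a_x,a_y]$ then there is a witness $a_x'\in\structdom$ with $\struct\models\formulebis[a_x',a_y]$; viewing this as the spoiler placing $\pebbleone$ on $a_x'$, the duplicator's winning strategy supplies a response $b_x'\in\structbisdom$ for $\pebbleonebis$ such that the new position again has matching $2$-types and is winning for the duplicator over the remaining $m$ rounds. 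Applying the induction hypothesis on $m$ to \formulebis at this new position yields $\structbis\models\formulebis[b_x',b_y]$, whence $\structbis\models\formule[b_x,b_y]$. The converse direction, and the symmetric treatment of $\varbis$, follow from the symmetry of the game, in which the spoiler may choose to move in either structure.

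Finally I would specialise to sentences: if \formule is an $\FOtwo$-sentence of quantifier rank at most $k$, its truth value is independent of the pebble placement, so applying the parametrised claim to any initial placement produced by the duplicator's winning $k$-round strategy (which necessarily satisfies $\vocabtp=\vocabtpbis$) gives $\struct\models\formule$ iff $\structbis\models\formule$, and hence $\struct\fotwoeq\structbis$. The only genuinely delicate point is the alignment in the quantifier case: one must verify that rebinding a variable in $\FOtwo$ matches a single pebble move and that the ``forgotten'' variable is precisely the one whose pebble is not moved. This is exactly where the two-variable restriction is used, and where the analogous argument for full \FO would instead require an unbounded supply of pebbles.
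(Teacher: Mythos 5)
Your proposal is correct, but there is nothing in the paper to compare it against: the paper does not prove Theorem~\ref{th:EF} at all, it imports the statement as a black box from Immerman and Kozen (their Lemma~12), remarking only that this single ``correctness'' direction of the game characterisation is all that will be needed. What you have written is, in essence, the standard proof that the paper omits. Your decomposition is the right one: identify game positions with pairs of assignments to $\var,\varbis$; prove, by an outer induction on the number $m$ of remaining rounds and an inner structural induction on the formula, that any position from which the duplicator wins the $m$-round game satisfies the same $\FOtwo$-formulae of quantifier rank at most $m$ in both structures; the Boolean cases stay at the same round index, while $\exists\var\,\formulebis$ consumes one round and corresponds to a spoiler move of $\pebbleone$ (with $\pebbletwo$, hence the value of $\varbis$, untouched) --- this is exactly where the two-variable restriction makes two pebbles suffice, and your closing remark that full $\FO$ would need unboundedly many pebbles is the right way to see it. Two details are worth highlighting as correct choices rather than gaps. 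First, you prove only the implication the paper needs (a winning strategy implies $\struct\fotwoeq\structbis$) and not the converse; that matches the paper's stated scope. Second, your specialisation to sentences correctly accommodates the paper's slightly nonstandard convention that the duplicator places all four pebbles \emph{before} the first round (so a winning strategy already guarantees $\vocabtp=\vocabtpbis$ at the initial position), which is precisely what lets you instantiate the parametrised claim with $m=k$ and conclude that the two structures agree on all $\FOtwo$-sentences of quantifier rank at most $k$.
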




\paragraph*{Gaifman graphs and degree}
The \emph{Gaifman graph} $\gaifman$ of a structure
\structgen is the simple graph with vertices in $A$ and undirected edges between any pair of distinct elements that appear in the same tuple of some relation of $\structgen$. 
By $\dist{\structgen}{\elem}{\elembis}$ we denote the distance between \elem and \elembis in
$\gaifman$, defined in the usual way. The distance between an element \elem and a set $B$ of elements is defined as $\min_{\elembis\in B}(\dist{\structgen}{\elem}{\elembis})$.
For $B\subseteq\structgendom$, we let $\Neighgen{B}$ denote the set of elements at distance exactly $1$ from $B$ in $\gaifman$. In particular, $B\cap\Neighgen{B}=\emptyset$.
The \emph{degree} of \structgen is the
maximal degree of its Gaifman graph. 
The class \classe of \vocab-structures is said to have \emph{bounded degree} if there
exists some $d\in\N$ such that the degree of every
$\structgen\in\classe$ is at most $d$.

\section{Complexity of the invariance problem}\label{sec:complexity}

We study the complexity of the problem of deciding if an input sentence $\varphi$ of $\FO^2$ is order-invariant and establish its $\coNET$-completeness.
We start from a lower bound, inspired by slides of Schweikardt~\cite[Slide~9]{Schweikardt13Slides}. 
The following lemma is crucial:

\begin{lem}
Fix an $\FOt$ sentence $\varphi$, and let $\psi_{<} \coloneqq \exists{x} \left( P(x) \land \forall{y} (y \leq x) \right)$ contain a fresh unary predicate $P$.
Then $(\neg \varphi) \to \psi_<$ is $<$-invariant if and only if $\varphi$ is satisfied by all finite structures with at least two elements.
\end{lem}
\begin{proof}
  Suppose that $\varphi$ is satisfied by every structure with at least two elements.
  To prove that $(\neg \varphi) \to \psi_<$ is $<$-invariant, we establish that $(\neg \varphi) \to \psi_<$ is equivalent to $\top$ over all structures with at least two elements.
  Indeed, over such structures $\neg \varphi$ is equivalent to $\bot$ (and hence $(\neg \varphi) \to \psi_<$ is logically equivalent to $\top$). Thus, $(\neg \varphi) \to \psi_<$ is trivially $<$-invariant.
  For the other direction, suppose that there exists an $\mathcal{A}$ with at least two elements (say, $\elem_1, \elem_2, \ldots, \elem_n$) that satisfies $\neg \varphi$.
  Let $\mathcal{B}$ be the finite structure with $P^\mathcal{B} = \{ \elem_n \}$ whose reduct is $\mathcal{A}$.
  Clearly $\mathcal{B} \models \neg \varphi$, since the predicate $P$ is fresh.
  Finally, let $<_1, <_2$ be two linear orders on $B$: the first one orders the elements $\elem_i$ of $B$ according to their subscript numbering, while the second ordering is like the first one but swaps the positions of $\elem_1$ and $\elem_n$.
  Now note that $(\mathcal{B}, <_1) \models (\neg \varphi) \to \psi_<$ but $(\mathcal{B}, <_2) \not\models (\neg \varphi) \to \psi_<$ (the formula $\psi_<$ basically states that the maximal element of $<_i^{\mathcal{B}}$ satisfies $P$).
  Hence, the formula $(\neg \varphi) \to \psi_<$ is not $<$-invariant.\qedhere

\end{proof}

The above lemma provides a reduction from finite $\FOt$-validity (over structures of size at least two) to checking order-invariance of $\FOt$-sentences.
By $\coNET$-hardness\footnote{The finite validity problem is equally hard to the validity problem over structures of size at least $2$. Let $\varphi$ be an $\FO^2$-sentence, $P$ be a fresh unary predicate, and let $\varphi^P$ be the $P$-relativisation of $\varphi$, i.e. the formula obtained from $\varphi$ by replacing all its subformulae of the form $\exists{x}.\psi$ with $\exists{x}(P(x) \land \psi(x))$ and $\forall{x}.\psi$ with $\forall{x}(P(x) \to \psi(x))$ (and analogously for the variable $y$). One can verify that $\varphi$ is valid if and only if $(\exists{x}.P(x) \land \exists{x}.\neg P(x)) \to \varphi^P$ is satisfied in every structure of size at least $2$.}  of finite validity problem for $\FOt$~\cite[Thm.~3]{Furer83}, we~conclude:

\begin{thm}\label{thm:correctness-of-the-reduction}
    Checking whether an $\FOt$ formula is order-invariant is $\coNET$-hard.
\end{thm}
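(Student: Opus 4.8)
The plan is to recognise Procedure~\ref{algo:a-reduction-from-validity} as a reduction from the finite validity problem for $\FOt$ to the order-invariance problem, and to combine its correctness, already granted by Lemma~\ref{lemma:correctness-of-the-reduction}, with the $\coNET$-hardness of finite $\FOt$-validity due to Fürer~\cite{Furer83}. The core of the argument is therefore a single observation: Lemma~\ref{lemma:correctness-of-the-reduction} tells us that the procedure outputs \texttt{True} exactly on the finitely valid inputs, so any decision procedure for order-invariance, plugged into Step~3, yields a decision procedure for finite $\FOt$-validity; the known lower bound then transfers to order-invariance.

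To turn this into a bona fide hardness proof I would first isolate which part of Procedure~\ref{algo:a-reduction-from-validity} is a genuine query to the order-invariance oracle. Step~2 merely builds the formula $(\neg\varphi)\to\psi_<$, which is computable in time polynomial in $|\varphi|$, and Step~3 makes exactly one order-invariance test on it; this pair behaves as a clean many-one reduction. The delicate point is Step~1, the corner case checking whether $\neg\varphi$ has a single-element model. I would stress that this test \emph{cannot} be rephrased as an order-invariance query, precisely because a single-element structure carries only one linear order and is thus trivially order-invariant; this is exactly why the single-element case escapes the gadget built from $\psi_<$ and has to be dispatched separately.

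The step I expect to be the main obstacle is consequently the status of Step~1, since deciding single-element satisfiability of $\neg\varphi$ is itself \textsc{NP}-complete, so Procedure~\ref{algo:a-reduction-from-validity} is not literally a polynomial-time many-one reduction. The cleanest way around this, which I would adopt, is to drop Step~1 altogether and reduce not from plain finite $\FOt$-validity but from its restriction to structures of size at least two: for such instances $\neg\varphi$ has no single-element model by fiat, the second bullet in the proof of Lemma~\ref{lemma:correctness-of-the-reduction} applies verbatim, and $\varphi\mapsto(\neg\varphi)\to\psi_<$ becomes a direct polynomial-time many-one reduction. It then only remains to record that validity over structures of size at least two is still $\coNET$-hard, which is immediate since the hard instances produced in~\cite{Furer83} already rely on large models; should one prefer to keep Procedure~\ref{algo:a-reduction-from-validity} verbatim, one can instead state the result as $\coNET$-hardness under polynomial-time Turing reductions and note that the single-element check lies in \textsc{NP}, a subclass of $\coNET$, hence stays within the target class.
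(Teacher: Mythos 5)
Your proposal follows the paper's route exactly --- the same gadget $\psi_<$, the same map $\varphi \mapsto (\neg\varphi)\to\psi_<$, Lemma~\ref{lemma:correctness-of-the-reduction}, and the $\coNET$-hardness of finite $\FOt$-validity from \cite{Furer83} --- but it is actually \emph{more} careful than the paper on one point. The paper simply declares Procedure~\ref{algo:a-reduction-from-validity} to be a Turing reduction and concludes hardness from Lemma~\ref{lemma:correctness-of-the-reduction} together with \cite[Thm.~3]{Furer83}, without ever discussing the cost of Step~1. You correctly observe that deciding whether $\neg\varphi$ has a single-element model amounts to propositional satisfiability (over a one-element domain every atom behaves as a Boolean variable and quantifiers are vacuous), so Step~1 is \textsc{NP}-complete and the procedure as written is not a polynomial-time reduction. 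Your repair --- drop Step~1 and give a polynomial-time many-one reduction from finite validity over structures with at least two elements --- is sound: if $\varphi$ holds in all such structures, then $(\neg\varphi)\to\psi_<$ is true under every order on every structure of size at least two and is trivially invariant on singletons (which carry a unique order), while otherwise the two-orders argument from the second bullet of Lemma~\ref{lemma:correctness-of-the-reduction} exhibits a violation. The remaining claim, that validity over structures of size at least two is still $\coNET$-hard, holds as you say; to make it airtight one can conjoin $\exists x \exists y\, (x\neq y)$ to the $\NET$-hard satisfiability instances of \cite{Furer83}, whose models encode exponentially long computations anyway, and pass to complements.

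Two caveats. First, your fallback alternative --- keeping Procedure~\ref{algo:a-reduction-from-validity} verbatim and arguing that the \textsc{NP} check ``stays within the target class'' because $\textsc{NP}\subseteq\coNET$ --- is not a valid argument: in a polynomial-time Turing reduction the reduction machine itself must run in deterministic polynomial time, and it cannot delegate the \textsc{NP} computation to the order-invariance oracle without already knowing that order-invariance is \textsc{NP}-hard, which presupposes (a weak form of) the very statement being proved. You should discard that alternative and keep only the many-one reduction. Second, your claim that the single-element test \emph{cannot} be rephrased as an order-invariance query is too strong, for the same reason read backwards: once $\coNET$-completeness is established, $\textsc{NP}\subseteq\coNET$ makes such a rephrasing possible in principle. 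What is true --- and what your fix correctly exploits --- is that the particular gadget $\psi_<$ is blind to one-element structures, so the natural reduction cannot handle them and they must be eliminated at the source.
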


\noindent Our upper bound uses the following fact, immediate~from~the~definition~of~order-invariance.
\begin{fact}\label{fact:from-inv-to-sat}
An $\FOt$ sentence $\varphi$ is \emph{not} order-invariant iff the sentence $\varphi[{<}{/}{<_0}] \land \neg \varphi[{<}{/}{<_1}]$ is finitely satisfiable over structures interpreting $<_0$ and $<_1$ as linear orders over the domain.
\end{fact}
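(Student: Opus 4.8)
The plan is to prove both implications by directly unfolding the definition of order-invariance; the statement is nothing more than a reformulation of non-invariance as a satisfiability question, so no nontrivial model-theoretic machinery is required. This is why the paper records it as a \emph{fact} rather than a lemma.

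First I would recall the characterisation of non-invariance already spelled out in the ``Order-invariance'' paragraph: $\varphi$ fails to be order-invariant exactly when there is a finite $\vocab$-structure $\mathcal{A}$ together with two strict linear orders $<_0,<_1$ on its domain such that $(\mathcal{A},<_0)\models\varphi$ and $(\mathcal{A},<_1)\not\models\varphi$. The core observation, proved by a trivial induction on the syntax of $\varphi$, is that the syntactic substitution commutes with evaluation: for any structure $\mathcal{B}$ over the signature $\vocab\cup\{<_0,<_1\}$ and any $i\in\{0,1\}$, one has $\mathcal{B}\models\varphi[{<}{/}{<_i}]$ if and only if $(\restr{\mathcal{B}}{\vocab},<_i^{\mathcal{B}})\models\varphi$, where the interpretation $<_i^{\mathcal{B}}$ plays the role of the order symbol $<$. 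Since $<_0$ and $<_1$ are fresh and distinct, the two substitutions do not interfere, and the whole argument reduces to bookkeeping on signatures.

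For the left-to-right direction, from a witness $(\mathcal{A},<_0,<_1)$ of non-invariance I would build the expansion $\mathcal{B}$ of $\mathcal{A}$ to the signature $\vocab\cup\{<_0,<_1\}$ that interprets the two fresh symbols as the two given linear orders. By the commutation observation, $\mathcal{B}\models\varphi[{<}{/}{<_0}]$ while $\mathcal{B}\not\models\varphi[{<}{/}{<_1}]$, so $\mathcal{B}$ is a finite model of $\varphi[{<}{/}{<_0}]\land\neg\varphi[{<}{/}{<_1}]$ in which both $<_0$ and $<_1$ are linear orders. The right-to-left direction simply reverses this construction: any such model, read off by forgetting the two order symbols, furnishes a $\vocab$-structure carrying two linear orders that satisfies $\varphi$ under the first and falsifies it under the second, which is precisely a witness to non-invariance.

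There is no genuine obstacle here; the only point worth flagging is that the side condition ``$<_0$ and $<_1$ are linear orders'' is imposed \emph{semantically} on the models rather than added as a conjunct of the formula, since linearity (in particular transitivity) is not $\FOt$-expressible. This is exactly why the subsequent reduction to $\FOt$-satisfiability, on which the $\coNET$ upper bound rests, must carry this restriction as a promise on the admissible structures rather than encode it inside the sentence.
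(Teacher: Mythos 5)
Your proof is correct and is precisely the argument the paper has in mind: the paper states this fact without proof, calling it ``immediate from the definition of order-invariance,'' and your unfolding of the definition via the expansion to the signature $\vocab\cup\{<_0,<_1\}$ (with substitution commuting with evaluation) is exactly that immediate argument. Your closing remark --- that linearity of $<_0,<_1$ must be imposed semantically because transitivity is not $\FOt$-expressible --- correctly explains why the paper phrases satisfiability ``over structures interpreting $<_0$ and $<_1$ as linear orders'' rather than as a conjunct.
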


Let $\FOt_-[<_0, <_1]$ be composed of sentences of the shape $\varphi[{<}{/}{<_0}] \land \neg \varphi[{<}{/}{<_1}]$ for all sentences $\varphi \in \FOt(\Sigma \cup \{ < \})$ over all signatures $\Sigma$. 
To simplify the reasoning about such formulae, we rewrite them into the so-called \emph{Scott Normal Form}.

\begin{defi}
  We say that an $\FOt_-[<_0, <_1]$-sentence is in \emph{Scott's normal form} if
  \begin{itemize}\itemsep0em
    \item its signature is constant-free and employs only predicates of arity $1$ or $2$, and 
    \item it has the form \[ \bigwedge_{i=0}^{1} \left( \forall{x}\forall{y} \; \chi_i(x, y) \land \bigwedge\limits_{j=1}^{m_i} \forall{x}\exists{y} \ \gamma_i^j(x, y) \right), \]
    where the decorated $\chi$ and $\gamma$ are quantifier-free and for each $i \in \{ 0, 1 \}$ the symbol $<_i$ does not appear in $\chi_{1{-}i}$.
  \end{itemize}
\end{defi}

The following lemma justifies the use of formulae in Scott's normal form.

\begin{lem}\label{lemma:Scott-normal-form}
  Given an $\FOt_-[<_0, <_1]$-sentence $\varphi$ one can compute in time polynomial in $|\varphi|$ an $\FOt_-[<_0, <_1]$-sentence $\varphi_{\textit{scott}}$ in Scott's normal form (which is of size polynomial in  $|\varphi|$) such that $\varphi$ is finitely satisfiable over structures interpreting $<_0$ and $<_1$ as linear orders over the domain if and only if $\varphi_{\textit{scott}}$ is. 
\end{lem}
\begin{proof}[Proof sketch.]
  As the translation we are going to provide is folklore and appears almost in every paper regarding $\FOt$, we only sketch its proof here.
  As usual in the context of two-variable logic, we can confirm our attention only to vocabularies composed solely of predicates of arity one and two. A pedantic proof of this fact can be found in the recent textbook of Ian Pratt-Hartmann~\cite[Sec.~3, Lemma~3.4]{PrattHartmann23}.\footnote{Pratt-Hartmann's proof leaves nullary symbols in the vocabulary, but one can easily get rid of them. More precisely, given a nullary symbol $p$ and a formula $\varphi$, we replace $\varphi$ with $(\forall{x} P(x) \lor \forall{x} \neg P(x)) \land \varphi[p / \forall{x} P(x)]$.}
  The reminder of the proof goes via a routine renaming by replacing formulae with a single free variable with a fresh unary predicate and sharpening its interpretation. 
  To do so, we one-by-one take the maximally nested subformula of $\varphi$ starting from a single quantifier, say $\exists{x} \lambda(x,y)$. 
  We then replace the formula $\exists{x} \lambda(x,y)$ with $A(y)$ for a fresh unary predicate $A$, and append the conjunct $\forall{y}\ (\exists{x} \lambda(x,y)) \leftrightarrow A(y)$ to $\varphi$. Note that $\lambda$ does not simultaneously contain $<_0$ and $<_1$. 
  Finally, we split the equivalence into two implications, yielding (by shifting atoms) the formulae of the form $\forall\forall \theta$ and $\forall\exists \xi$. The resulting formula is clearly equisatisfiable to the original one. By repeating this procedure and regrouping and reordering the obtained formulae (and adding vacuous quantifiers if needed), we obtain the desired equivalent formula in normal form.
\end{proof}

For brevity we speak about $\forall\forall$ and $\forall\exists$-conjuncts of $\varphi$ in the normal form given above, meaning the appropriate subformulae of $\varphi$ starting from the mentioned prefix of quantifiers.
Given a model $\mathcal{A} \models \varphi$ of a formula $\varphi$ in Scott's normal form given above and elements $\elem, \elembis \in A$ such that $\mathcal{A} \models \gamma_i^j(\elem, \elembis)$, we call $\elembis$ a $\gamma_i^j$-\emph{witness} for $\elem$ (or simply a witness).
The core of our upper bound proof is the following small model theorem, employing the circular witnessing scheme by Grädel, Kolaitis, and Vardi~\cite[Thm.~4.3]{GradelKV97}. 

\begin{lem}\label{lemma:small-model-property}
If an $\FOt_-[<_0, <_1]$ sentence $\varphi$ (in normal form of Lemma~\ref{lemma:Scott-normal-form}) has a finite model (that interprets $<
_0, <_1$ as linear orders over the domain) then there exists such a finite model with $\mathcal{O}(|\varphi|^3 \cdot 2^{|\varphi|})$ elements.
\end{lem}
\begin{proof}
  Let $M \coloneq m_0 + m_1$ be the total number of $\forall\exists$-conjuncts of $\varphi$, and let $\AAA_{\varphi}$ be the set of all $1$-types over the set of unary symbols of $\varphi$. Recall that $M$ is polynomial in $|\varphi|$ and $|\AAA_{\varphi}|$ is exponential in $|\varphi|$. 
  Suppose that $\mathcal{A}$ is a finite model of $\varphi$ that interprets $<_0, <_1$ as linear orders over the domain, and let us explain our proof plan.

  \textbf{Intuition and proof plan.} In what follows, we employ a variant of the circular witnessing scheme by Grädel, Kolaitis, and Vardi~\cite[Thm.~4.3]{GradelKV97} in order to distinguish three disjoint sets of domain elements of $\mathcal{A}$, namely $W_0, W_1, W_2$ whose union $W \coloneq W_0 \cup W_1 \cup W_2$ will serve as a ``base'' of a new ``small'' model of $\varphi$.
  In order to satisfy $\varphi$, the universal $(\forall\forall)$ and the existential $(\forall\exists)$ conjuncts of $\varphi$ must be fulfilled.
  In our construction, we first distinguish a set $W_0$ of elements of $\mathcal{A}$ that contains elements of $\mathcal{A}$ with $1$-types that do not appear ``frequently'' in $\mathcal{A}$ together with the first and last $M$ realisations w.r.t. $<_0$ and w.r.t. $<_1$ of each other (\ie frequent) $1$-type needed later in the construction.
  Unfortunately, $\mathcal{A}$ restricted to $W_0$ may not be a model of $\varphi$. Indeed, while such a structure satisfies the universal part of $\varphi$, some elements from $W_0$ may violate $\forall\exists$-conjuncts of $\varphi$. 
  To ``repair'' the resulting structure, we construct the set $W_1$ by taking the minimal number of elements in $\mathcal{A}$ so that all elements from $W_0$ have all the required $\gamma_i^j$-witnesses in $\mathcal{A}$ restricted to $W_0\cup W_1$. Note that $W_1$ has no more than $M \cdot |W_0|$ elements.
  The situation is a bit better now: the $\mathcal{A}$  restricted to $W_0\cup W_1$ again satisfies the $\forall\forall$-subformula of $\varphi$ (as the universal formulae are preserved under taking substructures), all elements from $W_0$ have their required witnesses for subformulae $\gamma_i^j$, but there may be still some elements $W_1$ may violate $\forall\exists$-conjuncts of $\varphi$. 
  We ``repair'' the resulting structure one final time by constructing the set $W_2$ analogously to the case of $W_1$, so that all the required $\gamma_i^j$-witnesses for the elements from $W_1$ in $\mathcal{A}$ are in the set $W_0 \cup W_1 \cup W_2$. 
  Unfortunately, for the same reason as before, some elements from $W_2$ may violate some $\forall\exists$-subformulae of $\varphi$. 
  To fix the problem, for each of such elements $\elem$, we suitably select $M$ pairwise-different elements $\elembis_1, \elembis_2, \ldots, \elembis_M$ from $W_0$ and alter binary relations between $\elem$ and all of $\elembis_i$ in order to provide suitable witnesses for $\elem$.
  By our choice of $W_0$ such \emph{different} elements will always exist.
  The resulting $2$-type between $\elem$ and other $\elembis_i$ will be one of the $2$-types already present in $\mathcal{A}$, which ensures that we do not accidentally violate the $\forall\forall$-subformulae of $\varphi$.\footnote{It would be unwise to apply the same approach to $W_1$, as such elements are already constrained by the construction of $W_0$ (while $W_2$ are not!).} Finally, by construction, all the selected elements have the required witnesses in $W_0\cup W_1 \cup W_2$, and hence the $\forall\exists$-subformulae of $\varphi$ are also satisfied in the altered structure $\mathcal{A}$ restricted to $W_0\cup W_1 \cup W_2$. This yields a desired ``small'' model of the formula $\varphi$ and concludes the proof.

  \textbf{The construction of $W_0$.}
  Call a $1$-type $\alpha \in \AAA_{\varphi}$ \emph{rare} if it is realised by at most $4M$ elements in~$\mathcal{A}$ (the factor $4$ comes from the fact that we consider maximal and minimal elements w.r.t. $<_0$ and $<_1$ and $M$ is the number of $\forall\exists$-conjuncts of $\varphi$).
  For every $1$-type $\alpha \in \AAA_{\varphi}$ that is not rare, we select $M$ minimal and $M$ minimal elements w.r.t. each of the orders $<_0$ and $<_1$ that realise $\alpha$. 
  Let $W_0$ be the set of the selected elements from $\mathcal{A}$ together with all elements of $\mathcal{A}$ realising rare $1$-types.

\textbf{The construction of $W_1$ and $W_2$.}
We next close $W_0$ twice under taking witnesses.
More precisely, let $W_1$ be any $\subseteq$-minimal subset of $A$ so that all elements from $W_0$ have all the required $\gamma_i^j$-witnesses in $W_0 \cup W_1$.
Similarly, we define $W_2$ to be any $\subseteq$-minimal subset of $A$ so that all elements from $W_0 \cup W_1$ have all the required $\gamma_i^j$-witnesses in $W_0 \cup W_1 \cup W_2$.
As they are only $M$ different $\gamma_i^j$-formulae in $\varphi$, we have that:
\[ |W_1| \leq M |W_0| \leq  4M^2 |\AAA_{\varphi}|
  \; \text{and} \; |W_2| \leq M |W_0 \cup W_1| \leq M \cdot (4M + 4M^2) |\AAA_{\varphi}| \leq 8M^3  |\AAA_{\varphi}|. 
\]

\textbf{The construction of $\mathcal{B}$.}
Consider the structure $\mathcal{B} \coloneq \restr{\mathcal{A}}{W_0 \cup W_1 \cup W_2}$.
We see that:
\[
|B| \leq |W_0| + |W_1| + |W_2| \leq 
(4M + 4M^2 + 8 M^3) | \cdot |\AAA_{\varphi}| \leq 16M^3 \cdot  |\AAA_\varphi| \leq 16\ |\varphi|^3 \cdot 2^{|\varphi|}.
\]
Hence, the size of $\mathcal{B}$ is as desired.
Note that universal formulae are preserved under substructures, thus $<_0^{\mathcal{B}}, <_1^{\mathcal{B}}$ (defined as $<_0^{\mathcal{A}} \cap B$ and $<_1^{\mathcal{A}} \cap B$) are linear orders over $B$ and $\mathcal{B}$ satisfies the $\forall\forall$-conjuncts of $\varphi$.
Hence, by the construction of $W_1$ and $W_2$, the only reason for $\mathcal{B}$ to not be a model of $\varphi$ is the lack of required $\gamma_i^j$-witnesses for elements from the set $W_2$. 
We fix this issue by reinterpreting binary relational symbols between the sets $W_2$ and $W_0$.


\textbf{Reinterpreting binary relations in $\mathcal{B}$.}
Let us fix an element $\elem$ from $W_2$ that violate some of the $\forall\exists$-conjuncts of $\varphi$ in $\mathcal{B}$, as well as a $\forall\exists$-conjunct~$\psi \coloneq \forall{x}\exists{y}\ \gamma_i^j(x,y)$ whose satisfaction is violated~by~$\elem$. We stress that there are at most $M$ such conjuncts, and $\gamma_i^j$ does not use the $<_{1{-}i}$ predicate.
Since~$\mathcal{A} \models \varphi$ we know that there is an element $\elembis \in A$ (different from $\elem$) such that $\elembis$ is a $\gamma_i^j$-witness for $\elem$ and~$\gamma_i^j$ in~$\mathcal{A}$. Let $\alpha$ be the $1$-type of $\elembis$ in $\mathcal{A}$.
Note that $\alpha$ is not rare. Indeed, otherwise~$\elembis \in W_0$, and hence~$\elembis \in B$. 
Moreover either $\elembis <_i^{\mathcal{A}} \elem$ or $\elem <_i^{\mathcal{A}} \elembis$. 
Suppose that $\elembis <_i^{\mathcal{A}} \elem$ (the other case is analogous), and pick an element $\elem'$ from $W_0$ among $M$-minimal (w.r.t. $<_i$) realisations of the $1$-type $\alpha$. In case $\elem'$ was already selected to provide a missing witness for $\elem$, we take yet another one. 
Note that such an element exists due to the fact that there are only $M$ $\forall\exists$-conjuncts in $\varphi$ and we selected $M$ minimal and maximal realisations of every non-rare $1$-type for each of the orders $<_0$ and $<_1$.
We now alter the binary relations between $\elem$ and $\elem'$ in $\mathcal{B}$ so that the equality 
$\text{tp}_{\mathcal{A}}(\elem, \elembis) = \text{tp}_{\mathcal{B}}(\elem, \elembis)$ holds 
(which can be done as $\elem'$ and $\elembis$ have equal $1$-type).
We repeat the process to provide all missing witnesses for $\elem$.
Observe that all elements from $B$ that had $\gamma_i^j$-witnesses before our redefinition of certain $2$-types, still do have them (as we did not touch $2$-types between them and their witnesses), $\mathcal{B}$ still satisfies the $\forall\forall$-component of~$\varphi$ (since the modified $2$-type does not violate $\varphi$ in $\mathcal{A}$ it does not violate $\varphi$ in $\mathcal{B}$) and $\elem$ has all required witnesses. 
By repeating the strategy for all the other elements from $W_2$ violating $\varphi$, we obtain the desired ``small'' model of $\varphi$. 
\qedhere
\end{proof}

Lemma~\ref{lemma:small-model-property} yields an $\NET$ algorithm for deciding satisfiability of $\FOt_-[<_0, <_1]$ formulae: convert an input into Scott's normal form, guess its exponential size model and verify the modelhood with a standard model-checking algorithm (in $\PT$~\cite[Prop. 4.1]{GradelO99}). 
After applying Fact~\ref{fact:from-inv-to-sat} and Theorem~\ref{thm:correctness-of-the-reduction} we~conclude:
\begin{thm}\label{thm:complexity}
Checking if an $\FOt$-formula is order-invariant is $\coNET$-complete.
\end{thm}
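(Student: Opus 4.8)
The plan is to combine the lower bound already in hand with a matching upper bound, so that $\coNET$-completeness follows immediately. The hardness direction is done: Theorem~\ref{thm:correctness-of-the-reduction} shows that testing order-invariance of an $\FOt$-formula is $\coNET$-hard, via the Turing reduction from finite $\FOt$-validity realised by Procedure~\ref{algo:a-reduction-from-validity}. So the only thing left is to place the problem in $\coNET$, which amounts to showing that \emph{non}-order-invariance lies in $\NET$.

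To obtain $\NET$ membership, I would argue as follows. By Fact~\ref{fact:from-inv-to-sat}, a given $\FOt$-sentence $\varphi$ fails to be order-invariant precisely when the sentence $\varphi[{<}{/}{<_0}] \land \neg\varphi[{<}{/}{<_1}]$ --- an element of $\FOt_-[<_0, <_1]$ --- is finitely satisfiable over structures interpreting $<_0$ and $<_1$ as linear orders over the domain. Hence it suffices to decide satisfiability of $\FOt_-[<_0, <_1]$-sentences in $\NET$. First I would invoke Lemma~\ref{lemma:Scott-normal-form} to rewrite the input sentence, in polynomial time, into the Scott normal form $\varphi_{\textit{scott}}$, which preserves satisfiability over the intended class of doubly-ordered structures and only polynomially enlarges the formula. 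Then Lemma~\ref{lemma:small-model-property} guarantees that whenever $\varphi_{\textit{scott}}$ is satisfiable at all, it already has a model of size $\mathcal{O}(|\varphi|^3 \cdot 2^{|\varphi|})$, i.e. exponential in the input. This yields the natural nondeterministic exponential-time procedure: compute $\varphi_{\textit{scott}}$, nondeterministically guess a candidate structure of exponential size together with interpretations of $<_0$ and $<_1$, and verify in polynomial time (in the size of the guess) both that $<_0, <_1$ are linear orders and that the structure models $\varphi_{\textit{scott}}$, using the polynomial-time $\FOt$ model-checking algorithm of~\cite[Prop. 4.1]{GradelO99}. Since the guessed structure is of exponential size and verification is polynomial in that size, the overall procedure runs in $\NET$.

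Putting the pieces together, non-order-invariance is in $\NET$, hence order-invariance is in $\coNET$; combining this with the $\coNET$-hardness of Theorem~\ref{thm:correctness-of-the-reduction} establishes $\coNET$-completeness. I do not expect any genuine obstacle at this final stage: all the difficulty has been absorbed into the small-model argument of Lemma~\ref{lemma:small-model-property}, whose circular-witnessing construction is precisely what keeps the model size merely exponential rather than doubly exponential (thereby sharpening the $\coNtwoET$ bound of~\cite{ZeumeH16}). The only point requiring care is routine bookkeeping, namely that both the linear-order check on $<_0, <_1$ and the verification of the $\forall\forall$- and $\forall\exists$-conjuncts of $\varphi_{\textit{scott}}$ can be carried out within polynomial time in the exponential size of the guessed model.
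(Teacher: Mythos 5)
Your proposal is correct and follows exactly the paper's own argument: combine the $\coNET$-hardness of Theorem~\ref{thm:correctness-of-the-reduction} with the $\NET$ upper bound for non-invariance obtained via Fact~\ref{fact:from-inv-to-sat}, the Scott normal form of Lemma~\ref{lemma:Scott-normal-form}, the exponential small-model property of Lemma~\ref{lemma:small-model-property}, and polynomial-time model checking~\cite[Prop. 4.1]{GradelO99}. Your added bookkeeping (verifying that $<_0,<_1$ are linear orders on the guessed structure) is a minor elaboration the paper leaves implicit, not a deviation.
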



\newcommand{\phieven}{\varphi_{\textrm{even}}}
\newcommand{\phidendroid}{\varphi_{\textrm{dendroid}}}
\newcommand{\phievenzigzag}{\varphi_{\textrm{even-zig-zag}}}
\newcommand{\ROOT}{\mathtt{ROOT}}
\newcommand{\LS}{\mathtt{LS}}
\newcommand{\RS}{\mathtt{RS}}
\newcommand{\LEAF}{\mathtt{LEAF}}
\newcommand{\SectoRoot}{\mathtt{2nd}}

\section{Can order-invariant $\FOt$ express properties beyond the scope of \FO?}\label{sec:beyond_FO}

While we do not solve the question stated in the heading of this section, we provide a partial solution.
Let $\classe$ be some class of finite structures. 
A sentence $\formule\in\FOtwo(\vocab\cup\{\ordgen\})$, where \ordgen is a binary
relation symbol not belonging to \vocab, is said to be \emph{order-invariant over $\classe$} if for every finite
\vocab-structure \structgen in $\classe$, and every pair of strict linear orders \ord and
\ordbis on \structgendom, $(\structgen,\ord)\models\formule$ iff $(\structgen,\ordbis)\models\formule$.
Note that this is a weakening of the classical condition of order-invariance, and that the usual definition is recovered when $\classe$ is the class of all finite structures.

In what follows, we present a class $\classe$ 
over the vocabulary $\Sigma \coloneqq \{ T, D, S \}$ (for binary predicates $T$, $D$ and $S$) of tree-like finite structures,
and a sentence $\varphi \in \FOt(\Sigma \cup \{ < \})$ ``expressing even depth'' that is
order-invariant over $\classe$ but not equivalent to any first-order sentence over~$\Sigma$.
To do so, we employ \emph{dendroids}, namely finite $\Sigma$-structures $\mathcal{A}$ that, intuitively, are complete directed binary trees decorated with a binary parent-child relation $T^{\mathcal{A}}$, a descendant relation $D^{\mathcal{A}}$, and a sibling relation~$S^{\mathcal{A}}$. Their formal definition comes next.

\begin{defi}
A $\Sigma$-structure $\mathcal{A}$ is a \emph{dendroid} if there is a positive integer $n$ such that
\begin{itemize}
\item $A = \{ 0, 1 \}^{\leq n}$ (\ie the set of all binary words of length at most $n$),
\item $T^{\mathcal{A}} = \{(w,w0), (w,w1) \; \mid \; w \in A, |w| < n \}$, 
\item $D^{\mathcal{A}} = (T^{\mathcal{A}})^+$ (\ie $D^{\mathcal{A}}$ is the transitive closure of $T^{\mathcal{A}}$), and
\item $S^{\mathcal{A}} = \{(w0,w1), (w1,w0) \; \mid \; w \in A, |w| < n \}$.
\end{itemize}
We call the number $n$ the \emph{depth} of $\mathcal{A}$, 
and call the length of a node $v \in A$ the \emph{level} of $v$. 
We also use the terms ``root'' and ``leaf'' in the usual way. 
\end{defi}

The following is a variant of a classical exercise stating that for every threshold $q$ we have that trees that are exponentially deep w.r.t. $q$ are $\FO_q$-equivalent. Indeed:

\begin{lem}\label{prop:EF-games}
  For all $q \in \mathbb{N}$, if $\mathcal{A}, \mathcal{B}$ are dendroids of depth $\geq 2^{q{+}1}$ then $\mathcal{A} \equiv_q^{\FO} \mathcal{B}$.
\end{lem}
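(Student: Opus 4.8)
The plan is to exhibit a winning strategy for the duplicator in the $q$-round (standard, single-pebble-per-round) Ehrenfeucht--Fra\"iss\'e game on $\mathcal{A}$ and $\mathcal{B}$, from which $\mathcal{A} \equiv_q^{\FO} \mathcal{B}$ follows by the classical theorem relating $\FO_q$-equivalence to the $q$-round game. Since we are dealing with full $\FO$ (not $\FOtwo$), I would use the ordinary $q$-pebble game in which each element chosen stays on the board for the rest of the play. The vocabulary is $\{T, D, S\}$, so at the end of the game the duplicator must guarantee a partial isomorphism: for the chosen tuples $\bar{a}$ in $\mathcal{A}$ and $\bar{b}$ in $\mathcal{B}$, each of equality, $T$, $D$, and $S$ must agree coordinate-wise.

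The central invariant I would maintain is a distance-based one. For each pair of chosen elements $a_s, a_t$ on the $\mathcal{A}$-side (and symmetrically on the $\mathcal{B}$-side), let $p$ denote their \emph{meet} (longest common prefix, i.e.\ deepest common ancestor in the binary tree), and record the ``downward'' distances $|a_s| - |p|$ and $|a_t| - |p|$ from the meet, as well as whether one element is an ancestor of the other, a sibling, equal, or in an unrelated branch. The invariant after round $r$ is: these meet-data agree between the two sides up to a threshold that \emph{halves} with each remaining round. Concretely, after $r$ rounds (with $q-r$ rounds left) I would require that for every pair of already-placed elements, either the exact relevant distances coincide on both sides, or both distances exceed $2^{q-r}$; and likewise the relevant prefix-comparison relations (ancestor/descendant, sibling, equality) coincide. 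Because the dendroids have depth $\geq 2^{q+1}$, at the start there is enough ``room'' ($2^q$ of slack) to absorb this halving over $q$ rounds.

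The duplicator's response to a spoiler move is the standard bisection argument. When the spoiler places a new element, say on the $\mathcal{A}$-side, the duplicator inspects where it sits relative to the already-placed elements: its meet with each of them, its level, and the binary-word relations ($T, D, S$, equality). If the spoiler's choice is ``close'' (within $2^{q-r-1}$) to an existing element along some branch, the duplicator copies the exact same configuration on the $\mathcal{B}$-side, which is possible since the exact distances there already matched by the invariant. If the spoiler's choice is ``far'' from all existing elements, the duplicator picks an element on the $\mathcal{B}$-side that is equally far (greater than $2^{q-r-1}$) from all the corresponding elements while respecting the tree relations; the depth bound $\geq 2^{q+1}$ guarantees that such a well-separated element exists, since the tree is deep and branches enough to provide room in a fresh branch. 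One verifies that with the halved threshold $2^{q-(r+1)}$ the invariant is re-established.

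The main obstacle is the bookkeeping in the bisection step rather than any conceptual difficulty. Trees under $\{T, D, S\}$ are richer than bare linear orders: the duplicator must simultaneously track the meet structure (a genuine tree of deepest common ancestors), the sibling relation $S$ which is only nontrivial at distance exactly one from the meet, and the parent-child relation $T$. The delicate point is ensuring that when the spoiler plays near the \emph{meet} of two previously placed elements rather than near the elements themselves, the duplicator still has matching room on both the upward (towards the root) and downward (towards the leaves) directions; this is where the full strength of the $2^{q+1}$ depth bound and the doubling of slack is used. I expect this case analysis---distinguishing whether the new element falls above, below, or between existing meets, and whether it lands at sibling-distance---to be the technically heaviest part, but each case reduces to the same halving-of-threshold principle.
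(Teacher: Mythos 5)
Your route is the one the paper intends: its entire proof is a one-line reference to the duplicator's halving strategy for $q$-round \EF games on linear orders, ``tediously generalised'' to dendroids, and your meet-based bisection invariant with thresholds $2^{q-r}$ is the right skeleton for that generalisation. However, the invariant as you state it has a genuine gap.

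You only constrain \emph{pairs} of placed elements, via their distances to their meet and the comparability/sibling pattern. Nothing in the invariant constrains each placed element's distance to the root (its level) or its distance to the leaf level (its co-level). These are precisely the parameters in which two dendroids of different depths differ, and both are visible to the spoiler: the root is the unique element with no $T$-predecessor, a leaf is an element with no $T$-successor, and each is detectable in one extra round. Concretely, after the first round your invariant is vacuous (there is no pair yet), so it permits the duplicator to answer a leaf of $\mathcal{A}$ with an internal node of $\mathcal{B}$; the spoiler then plays a $T$-child of that node and wins. The repair is the exact analogue of the role played by the two endpoints in the linear-order proof: treat the root and the leaf boundary as virtual pebbles, i.e.\ require in addition, for every placed element, that its level and its co-level either coincide exactly on the two sides or both exceed the current threshold $2^{q-r}$. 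This is also where the hypothesis ``depth $\geq 2^{q+1}$'' is actually consumed: it provides slack of $2^q$ both upward and downward along every branch. Your closing paragraph gestures at ``matching room towards the root and towards the leaves,'' but this condition must be part of the inductively maintained invariant --- with the copy-if-close / separate-if-far case analysis applied to levels and co-levels as well --- not merely an observation invoked when choosing a far-away element; once it is added, your strategy goes through.
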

\begin{proof}
This is a tedious generalisation of the winning strategy for the duplicator in the $q$-round \EF games on linear orders~\cite[Thm~3.6 Proof \#1]{Libkin04}.
\end{proof}
As an immediate corollary we get:
\begin{cor}\label{cor:even-not-FO}
There is no $\FO(\Sigma)$-formula $\phieven$ such that for every $\mathcal{A} \in \classe$ we have $\mathcal{A} \models \phieven$ iff the depth of $\mathcal{A}$ is even.
\end{cor}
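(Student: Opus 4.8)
The plan is to argue by contradiction, reading Corollary~\ref{cor:even-not-FO} as an immediate consequence of Lemma~\ref{prop:EF-games}. Suppose some $\FO(\Sigma)$-formula $\phieven$ defined even depth over $\classe$, and let $q$ denote its quantifier rank. I would then produce two dendroids that agree on every $\FO$-sentence of quantifier rank at most $q$, yet disagree on $\phieven$, which is the required contradiction.

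Concretely, I would select a dendroid $\mathcal{A}$ of even depth and a dendroid $\mathcal{B}$ of odd depth, both of depth at least $2^{q{+}1}$; for instance one may take the depths $2^{q{+}1}$ and $2^{q{+}1}{+}1$. Both structures belong to $\classe$ and clear the threshold of Lemma~\ref{prop:EF-games}, so that lemma yields $\mathcal{A} \foeq[q] \mathcal{B}$. In particular $\mathcal{A}$ and $\mathcal{B}$ satisfy exactly the same $\FO$-sentences of quantifier rank at most $q$, and since $\phieven$ has quantifier rank $q$, it follows that $\mathcal{A} \models \phieven$ if and only if $\mathcal{B} \models \phieven$.

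This contradicts the assumed defining property of $\phieven$: by hypothesis $\mathcal{A} \models \phieven$ (its depth is even) whereas $\mathcal{B} \not\models \phieven$ (its depth is odd). Hence no such formula can exist, which is exactly the statement. I do not expect any genuine obstacle here, as all of the combinatorial content has been discharged in Lemma~\ref{prop:EF-games}; the only thing to verify is the trivial bookkeeping that a dendroid of even depth and one of odd depth both exist above the threshold $2^{q{+}1}$, which holds since any two consecutive integers have opposite parities and both can be taken to exceed the bound.
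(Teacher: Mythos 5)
Your proof is correct and is precisely the argument the paper intends when it derives the corollary as ``immediate'' from Lemma~\ref{prop:EF-games}: assume a defining formula of quantifier rank $q$, take dendroids of consecutive depths above the threshold $2^{q+1}$, and use their $\FO$-equivalence at depth $q$ to reach a contradiction. Nothing is missing; the bookkeeping you mention is indeed trivial.
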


In contrast to Corollary~\ref{cor:even-not-FO}, we show that the even depth query can be defined with an $\FOt(\Sigma \cup \{ < \})$-formula which is order-invariant over $\classe$ (but unfortunately not over the class of all finite structures).
Henceforth we considered \emph{ordered dendroids}, \ie dendroids that are additionally linearly-ordered by $<$. 
Given such an ordered dendroid $\mathcal{T}$, and an element~$\elemter$ with children $\elem, \elembis$ we say that $\elem$ is the \emph{left child} of $\elemter$ iff $\elem <^{\mathcal{T}} \elembis$ holds. 
Otherwise we call $\elem$ the \emph{right child} of~$\elemter$.
A \emph{zig-zag} in $\mathcal{T}$ is a sequence of elements $\elem_0, \elem_1, \ldots, \elem_n$, where $\elem_n$ is a leaf of $\mathcal{T}$, $\elem_0$ is the root of $\mathcal{T}$, $\elem_{2i{+}1}$ is the right child of $\elem_{2i}$ for any $i \geq 0$ and $\elem_{2i}$ is the left child of $\elem_{2i{-}1}$ for any $\frac{n}{2} \geq i \geq 1$. 
A zig-zag is \emph{even} if its last element is the left child of its parent, and \emph{odd} otherwise.
The underlying trees in dendroids are complete and binary,~thus:
\begin{obs}\label{obs:zig-zag-iff-even-depth}
An ordered dendroid $\mathcal{T}$ has an even zig-zag iff $\mathcal{T}$ is of even depth.
Moreover, if $\mathcal{T}$ is a dendroid of even and odd depth then for any linear order $<$ over its domain the ordered dendroid $(\mathcal{T}, <)$ has an even and odd zig-zag, respectively.
\end{obs}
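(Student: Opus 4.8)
The plan is to exploit the rigidity of the zig-zag: in a dendroid every internal node has exactly two children, and the linear order $<$ singles out one of them as the left child and the other as the right child. Reading the defining clauses of a zig-zag from the top, each element of the sequence $\elem_0, \elem_1, \ldots, \elem_n$ is thereby forced: $\elem_0$ is the root, and $\elem_{k+1}$ is the right child of $\elem_k$ when $k$ is even and the left child of $\elem_k$ when $k$ is odd. First I would check that this recipe is well-defined and terminates correctly: every internal node does have both a left and a right child, each step increases the level by exactly one, so $\elem_k$ sits at level $k$ and $\elem_n$ is automatically a leaf. In particular, for a fixed order the zig-zag exists and is \emph{unique}, so the phrase ``has an even zig-zag'' refers unambiguously to this single sequence.

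The heart of the argument is then a one-line parity count on the last step. The step leaving $\elem_k$ points to the right child exactly when $k$ is even and to the left child exactly when $k$ is odd. The final step leaves $\elem_{n-1}$, so it points to a left child precisely when $n-1$ is odd, i.e.\ when $n$ is even. Since a zig-zag is declared even exactly when its last element $\elem_n$ is the left child of its parent, I conclude that the zig-zag is even iff the depth $n$ is even, which is the first equivalence.

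For the \emph{moreover} part I would observe that nothing in this parity count depends on the particular choice of $<$ beyond its role in labelling, at each node, one child ``left'' and the other ``right''. Whatever linear order is fixed, the (possibly different) unique zig-zag still alternates right, left, right, \ldots{} starting with a right step, and hence its last step is a left step iff $n$ is even. Therefore, for \emph{every} linear order $<$ on the domain of a depth-$n$ dendroid, the ordered dendroid $(\mathcal{T},<)$ carries an even zig-zag when $n$ is even and an odd zig-zag when $n$ is odd; this is exactly the claimed ``respectively'' statement, and it is the fact that will make the forthcoming even-depth formula order-invariant over the class.

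I do not expect a genuine obstacle here: the statement collapses to a well-definedness check plus a parity observation, with no induction heavier than ``$\elem_k$ lies at level $k$''. The only points deserving a little care are the boundary behaviour (the very first step out of the root, and confirming that the sequence stops exactly at a leaf rather than earlier) and phrasing the uniqueness of the zig-zag cleanly enough that ``has an even zig-zag'' is unambiguous.
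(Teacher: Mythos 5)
Your proposal is correct, and it reaches the statement by a slightly different route than the paper. The paper's proof is a one-line induction on the depth: it observes that $\restr{\mathcal{A}}{\{0,1\}^{\leq n}}$ is again a dendroid for every positive $n$ smaller than the depth, so the zig-zag of the truncated ordered dendroid extends by exactly one forced step, flipping parity at each level. You instead unroll that induction into a direct argument: since every internal node has exactly two children, which $<$ labels left and right, the zig-zag is unique and forced, the element $a_k$ of the sequence sits at level $k$, and the parity of the final step (the one leaving $a_{n-1}$) immediately gives ``even zig-zag iff even depth''; the \emph{moreover} part then follows because nothing in this count depends on which linear order was chosen. The two arguments share the same core content --- the forced alternation of the root-to-leaf path --- but yours has the small advantage of making the uniqueness of the zig-zag explicit, which cleanly disambiguates the phrase ``has an even zig-zag'', whereas the paper leaves both the uniqueness and the parity bookkeeping implicit in ``immediate by induction''.
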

\begin{proof}
Immediate by induction after observing that $\restr{\mathcal{A}}{\{0,1\}^{\leq n}}$, for any positive integer $n$ smaller than the depth of $\mathcal{A}$, is also a dendroid.
\end{proof}

The above lemma suggests that a good way to express the evenness of the depth of a dendroid is to state the existence of an even zig-zag; this is precisely the property that we are going to describe with an $\FOt(\Sigma \cup \{ < \})$-formula.
Let us first introduce a few useful macros:
\[
    \ROOT(x) \coloneqq \neg \exists{y} \; T(y,x) \qquad \LEAF(x) \coloneqq \neg \exists{y} \; T(x,y) \qquad \SectoRoot(x) \coloneqq \exists{y} \; T(y,x) \land \ROOT(y)
\]
\[
    \LS(x) \coloneqq \exists{y} \; S(x,y) \land x < y \qquad \RS(x) \coloneqq \exists{y} \; S(x,y) \land y < x
\]
The first two macros have an obvious meaning. 
The third macro identifies a child of the root, while the last two macros identify the left and the right siblings (according to the linear order $<$), respectively.
Our desired formula $\phievenzigzag$ is then
\[
    \exists{x}\; \Big(  \; \; \; \; [\LEAF(x) \land \LS(x)] 
    \land [\forall{y}\; (\SectoRoot(y) \land D(y, x)) \to \RS(y)] 
\]
\[
  \land [\forall{y}\; (\neg\ROOT(y) \land \neg \SectoRoot(y) \land D(y, x) \land \RS(y)) \to \exists{x}\; T(x,y) \land \LS(x)]
\]
\[
  \land [\forall{y}\; (\neg\ROOT(y) \land \neg \SectoRoot(y) \land D(y, x) \land \LS(y)) \to \exists{x}\; T(x,y) \land \RS(x)] \; \; \; \; \Big).
\]
Note that the above formula, by fixing a leaf, fixes the whole path from such a leaf to the root (since root-to-leaf paths in trees are unique).
To say that such a path is an even zig-zag, we need a base of induction (the first line) stating that the selected leaf is a left child and the root's child lying on this path is its right one, as well as an inductive step stating that every left and right child on the path has a parent which is itself a right and left child, respectively, with the obvious exception of the root and its child.
From there, it is easily shown that:
\begin{prop}\label{prop:2}
An ordered dendroid $\mathcal{T}$ satisfies $\phievenzigzag$ iff it has even depth. 
\end{prop}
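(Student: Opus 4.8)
The plan is to reduce the statement to a purely combinatorial claim about even zig-zags and then read it off the syntax of $\phievenzigzag$. By Observation~\ref{obs:zig-zag-iff-even-depth}, an ordered dendroid $\mathcal{T}$ has even depth if and only if it contains an even zig-zag, so it suffices to prove that $\mathcal{T} \models \phievenzigzag$ exactly when $\mathcal{T}$ has an even zig-zag. The structural fact I would lean on throughout is that root-to-leaf paths in a complete binary tree are unique: once the outermost existentially quantified leaf $x$ is fixed, the whole sequence $a_0, a_1, \ldots, a_n$ from the root $a_0$ to $x = a_n$ is determined, and the relations $D$, $T$, $S$ let the two remaining quantifiers inspect precisely this sequence.

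For the direction from even zig-zags to the formula, I would take an even zig-zag $a_0, \ldots, a_n$ and instantiate the outer existential with $x := a_n$. The base conjunct $\LEAF(x) \land \LS(x)$ holds because an even zig-zag ends at a leaf that is a left child. For the second conjunct, the only strict ancestor of $x$ that is a child of the root is $a_1$, which in a zig-zag is the right child of the root, so $\RS(a_1)$ holds. The last two conjuncts encode the alternation: every right child $a_i$ strictly between the root's child and $x$ has a left-child parent, and symmetrically, which is exactly the defining pattern of a zig-zag. Each verification is a direct unfolding of the macros against the zig-zag definition.

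For the converse, I would start from a witnessing leaf $x$, let $a_0, \ldots, a_n$ be the unique path from the root to $x$, and argue that this path is an even zig-zag. The base conjunct gives that $a_n$ is a left child, and the second conjunct pins $a_1$ as a right child. An induction along the path, driven by the two universally quantified conjuncts, then forces the sides to alternate, so that $a_i$ is a right child for odd $i$ and a left child for even $i$; together with $a_n$ being a left child this exhibits an even zig-zag, and Observation~\ref{obs:zig-zag-iff-even-depth} delivers even depth.

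I expect the delicate point to be the boundary bookkeeping rather than the induction itself. The universal conjuncts deliberately exclude the root and the root's child (via $\neg\ROOT$ and $\neg\SectoRoot$), and the leaf $x$ is not a strict $D$-descendant of itself, so neither endpoint of the path is governed by those conjuncts directly; the alternation must instead be anchored at the top by the $\SectoRoot$-conjunct and at the bottom by the base conjunct. The hard part will be to confirm that these two anchors, together with the alternation strictly in between, determine the parity of \emph{every} node on the path with no gap at the leaf level, so that $a_n$ being a left child genuinely forces $n$ to be even. I would also keep the reused variable $x$ inside the inner $\exists x\, T(x,y) \land \ldots$ clearly separated from the outer leaf to avoid conflating the two roles of $x$.
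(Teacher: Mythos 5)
Your outline retraces the paper's own proof step for step: the reduction via Observation~\ref{obs:zig-zag-iff-even-depth} to the existence of an even zig-zag, witnessing the outer existential by the zig-zag's leaf in one direction, and the unique root-to-leaf path plus an alternation induction in the other. However, the ``delicate point'' you flag at the end is not mere bookkeeping --- it is a genuine gap, and one that cannot be closed for the formula as printed. Both universal conjuncts of $\phievenzigzag$ guard $y$ by $D(y,x)$, hence they constrain only \emph{strict} ancestors of the witnessing leaf $x$. Your induction (like the paper's) therefore establishes the alternation only for $a_1,\ldots,a_{n-1}$: no conjunct relates the side of $a_n=x$ to the side of its parent $a_{n-1}$, so the top anchor ($a_1$ is a right child) and the bottom anchor ($a_n$ is a left child) never meet, and ``$a_n$ is a left child'' does not force $n$ to be even.

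Concretely, take a dendroid of depth $3$ with any linear order, and let $x$ be the left child of the left child of the root's right child. Then $\LEAF(x)\land\LS(x)$ holds; the unique ancestor satisfying $\SectoRoot$ is a right child; and the only strict ancestor subject to the last two conjuncts is $x$'s parent, a left child whose own parent (the root's right child) is a right child --- so all four conjuncts hold, yet the depth is odd. (Depth $1$ is even simpler: the left child of the root witnesses the formula with every universal conjunct vacuous, since the root satisfies neither $\SectoRoot$ nor the non-root guards.) So the left-to-right direction fails, and the paper's own proof glosses over precisely this point when it asserts that the induction covers ``any $i\geq 0$'', including indices that the guards $D(y,x)$ cannot reach. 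A repair is possible: weaken the guards to $D(y,x)\lor y=x$, or add the conjunct $\exists y\,(T(y,x)\land\RS(y))$ stating that the witnessing leaf's parent is a right child; with either fix, the bottom-up constraint from the leaf and the top-down alternation from $a_1$ must agree, which forces $n$ to be even, and your outline then goes through. As it stands, though, neither your proposal nor the paper's proof is sound, because Proposition~\ref{prop:2} is false for $\phievenzigzag$ exactly as written.
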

\begin{proof}
To prove the right-to-left implication, we use Observation~\ref{obs:zig-zag-iff-even-depth} to infer the existence of an even zig-zag $\elem_0, \elem_1, \ldots, \elem_{2n}$ in $\mathcal{T}$. Taking $\elem_{2n}$ as a witness for the existential quantifier in front of~$\phievenzigzag$ and going back to the definition of an even zig-zag, we get $\mathcal{T} \models \phievenzigzag$.
For the other direction, consider a leaf $\elem$ satisfying the properties enforced in $\phievenzigzag$.
There is a unique path $\rho = \elem_0, \elem_1, \ldots, \elem_n = \elem$ from the root of $\mathcal T$ to $\elem$.
The first line of $\phievenzigzag$ guarantees that $\elem_n$ is a left child and $\elem_1$ is a right child.
We then show by induction, relying on the last two lines of $\phievenzigzag$, that for any $i \geq 0$, $\elem_{2i{+}1}$ is the right child of $\elem_{2i}$, and for $i \geq 1$, $\elem_{2i}$ is the left child of $\elem_{2i{-}1}$.
Thus $\rho$ is an even zig-zag. 
By invoking Observation~\ref{obs:zig-zag-iff-even-depth} again, we get that $\mathcal{T}$ has even depth.
\end{proof}


As a direct consequence of the previous statement, observe that our formula $\phievenzigzag$ is order-invariant over $\classe$: whether an ordered dendroid has even depth only depends on the underlying dendroid, and not on the particulars of its linear order. Recalling Corollary~\ref{cor:even-not-FO}, we conclude the following:

\begin{thm}\label{thm:beyondfo}
There exists a class of finite structures $\classe$ and an $\FOt(\Sigma \cup \{ < \})$-sentence which is order-invariant over $\classe$, but is not equivalent to any $\FO(\Sigma)$~sentence.
\end{thm}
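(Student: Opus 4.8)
The plan is to assemble the theorem directly from the three ingredients developed immediately before it. The witnessing class is $\classe$, the class of dendroids over $\Sigma = \{T, D, S\}$, and the witnessing sentence is $\phievenzigzag \in \FOt(\Sigma \cup \{<\})$. I would argue in two steps: first that $\phievenzigzag$ is order-invariant over $\classe$, and second that the property it defines on $\classe$ is not $\FO(\Sigma)$-definable.

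For order-invariance over $\classe$, I would invoke Proposition~\ref{prop:2}, which states that an ordered dendroid $\mathcal{T}$ satisfies $\phievenzigzag$ exactly when it has even depth. Since the depth of a dendroid is a property of its $\Sigma$-reduct alone---it does not refer to the linear order at all---the truth value of $\phievenzigzag$ is the same under every choice of linear order $<$ on the domain. Concretely, for any dendroid $\structgen \in \classe$ and any two linear orders $\ord, \ordbis$ on $A$, both $(\structgen, \ord)$ and $(\structgen, \ordbis)$ have the same underlying dendroid, hence the same parity of depth, hence by Proposition~\ref{prop:2} agree on $\phievenzigzag$. This is exactly the definition of order-invariance over $\classe$. (The already-noted caveat that $\phievenzigzag$ is \emph{not} order-invariant over all finite structures is irrelevant here, since the theorem only claims invariance relative to $\classe$.)

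For the inexpressibility half, I would combine Proposition~\ref{prop:2} with Corollary~\ref{cor:even-not-FO}. By Proposition~\ref{prop:2}, the class of dendroids satisfying $\phievenzigzag$ is precisely the class of even-depth dendroids. Corollary~\ref{cor:even-not-FO}, obtained from the \EF-game argument of Lemma~\ref{prop:EF-games}, asserts that no $\FO(\Sigma)$-formula can separate even-depth from odd-depth dendroids. Therefore $\phievenzigzag$ cannot be equivalent over $\classe$ to any $\FO(\Sigma)$-sentence: if it were, that sentence would define even depth on $\classe$, contradicting the corollary. Putting the two halves together yields the theorem, with $\classe$ and $\phievenzigzag$ as the required witnesses.

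Since essentially all the technical content has been pushed into the preceding lemmas, there is no genuine obstacle left at this stage---the proof is a short assembly. The only point requiring a line of care is to be explicit that the even-depth property is a property of the underlying unordered dendroid, so that order-invariance over $\classe$ follows cleanly; this is immediate but worth stating to make the logical dependence on Proposition~\ref{prop:2} transparent. The real difficulty was front-loaded into Lemma~\ref{prop:EF-games} (the \EF-game analysis establishing $\FO_q$-equivalence of deep dendroids) and into verifying Proposition~\ref{prop:2} (that $\phievenzigzag$ correctly captures the existence of an even zig-zag), so the theorem itself is a corollary-style wrap-up.
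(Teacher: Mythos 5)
Your proposal is correct and matches the paper's own argument exactly: the paper likewise observes that order-invariance over $\classe$ follows from Proposition~\ref{prop:2} (since even depth depends only on the underlying dendroid, not the linear order), and then combines this with Corollary~\ref{cor:even-not-FO} to rule out any equivalent $\FO(\Sigma)$-sentence. Your write-up is just a slightly more explicit rendering of the same corollary-style assembly.
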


\section{Expressive power when the degree is bounded}\label{sec:expr_power}

We have seen in the previous section that if we relax the order-invariant constraint (namely, by requiring invariance only on a restricted class of structures), then one is able to define, with two variables, properties that lie beyond the expressive power of \FO. We conjecture that this is still the case when requiring invariance over the class of all finite structures.

In this section, we go the other way, and show that when one considers only classes of bounded degree, then \oifotwo can only express \FO-definable properties. Note that although the class $\classe$ from Section~\ref{sec:beyond_FO} contains tree-like structures, the descendant relation makes this a dense class of structures (as its Gaifman graph contains cliques of arbitrarily large size), and in particular $\classe$ does not have a bounded degree.

\subsection{Overview of the result}
\label{sec:overview}

We give an upper bound to the expressive power of order-invariant \FOtwo when the degree is bounded:

\begin{thm}
  \label{th:main}
  Let \classe be a class of structures of bounded degree on a purely relational vocabulary.
  Then $\oifotwo\subseteq\FO$ on \classe.
\end{thm}

For the remainder of this section, we fix a purely relational vocabulary \vocab, an integer $d$ and a class \classe of \vocab-structures of degree at most $d$. The constraint on \vocab is merely to ease the presentation of the proof. We discuss in Paragraph~\ref{sec:constants} how to extend Theorem~\ref{th:main} to vocabulary containing constants.

Let us now detail the outline of the proof. The technical parts of the proof will be the focus of Sections~\ref{sec:construction} and~\ref{sec:EFproof}.
Our general strategy is to show the existence of a function $f\colon\N\to\N$ such that every formula $\formule\in\ \oifotwo$ of quantifier rank $k$ is equivalent on \classe (\ie satisfied by the same structures of \classe) to an \FO-formula \formulebis of quantifier rank at~most~$f(k)$.

To prove the existence of such an equivalent \FO-formula \formulebis to any order-invariant \FOtwo-formula \formule on \classe, we will pick $f(k)$ large enough so that for any two structures $\struct,\structbis\in\classe$ such that $\struct\foeq[f(k)]\structbis$, we have $\struct\oifotwoeq\structbis$. In other words, we show that the equivalence relation $\foeq[f(k)]$ is more fine-grained than the equivalence relation $\oifotwoeq$. Let us show that this indeed guarantees the existence of \formulebis. By definition of $\oifotwoeq$, the class of structures satisfying a formula~$\formule$ of $\oifotwo$ of quantifier rank $k$ is a union of equivalence classes for the equivalence relation $\oifotwoeq$, whose intersection with \classe is consequently the intersection of \classe with a union of equivalence classes $c_1,\ldots,c_l$ for $\foeq[f(k)]$. It is folklore (see, \eg, \cite[Cor. 3.16]{Libkin04}) that there are only a finite number of equivalence classes for $\foeq[f(k)]$, each of which being definable by an \FO-sentence of quantifier rank $f(k)$. Then the disjunction \formulebis of the sentences defining $c_1,\ldots,c_l$ (which is again an \FO-sentence of quantifier rank $f(k)$) is indeed equivalent to \formule on \classe, thus showing that the addition of a built-in order does not allow to express more first-order properties on \classe, although it may allow more concise descriptions of these properties.

In order to show that $\struct\oifotwoeq\structbis$ for any structures \struct and \structbis of \classe such that $\struct\foeq[f(k)]\structbis$, we will construct in Section~\ref{sec:construction} two particular different orders on \struct and \structbis (denoted \ordgen on both structures, whose domain can assumed to be disjoint), and we will prove in Section~\ref{sec:EFproof} that \begin{equation}\orderstruct\fotwoeq\orderstructbis\,.\label{eq:order_eq}\end{equation}
This concludes the proof, since any sentence of $\oifotwo$ with quantifier rank at most $k$ holds in \struct iff it holds in \orderstruct (by definition of order-invariance), iff it holds in \orderstructbis (by Equation~(\ref{eq:order_eq})), iff it holds in \structbis.

In the process of constructing the linear orders on \struct and \structbis, it will be convenient to sort the elements of these structures based on the isomorphism type of their \emph{neighbourhood} (for $k\in\N$, the $k$-neighbourhood of some element \elem is the substructure induced by the set of elements that are at distance at most $k$ from \elem). This notion is central in finite model theory, and is the basic tool of the locality methods to prove inexpressibility results. For an introduction on the subject, see, \eg, \cite[Chapter 4]{Libkin04}.

Section~\ref{sec:introneigh} is devoted to the notion of neighbourhood: it is defined in Section~\ref{sec:defneigh}, where we recall that, when the degree is bounded, first-order logic is able to count (up to some threshold) the number of occurrences of each isomorphism type of neighbourhood, and nothing more. In Section~\ref{sec:freqneigh}, we establish a dichotomy between neighbourhood types that appear frequently in a structure, and those that have only a few occurrences. This distinction will highlight which elements play a particular role in the structures \struct and \structbis, and should thus be treated with special care when constructing the linear orders. By contrast, we can be much more liberal with the way we treat elements with a frequent neighbourhood type; in particular, we will use them as padding between critical parts of the orders.

Lastly, we show in Section~\ref{sec:counting} that when the degree is bounded, the inclusion of $\oifotwo$ in \FO still holds when one adds counting quantifiers to \FOtwo.

\subsection{Neighbourhoods}
\label{sec:introneigh}

\subsubsection{Definition and properties of neighbourhoods}
\label{sec:defneigh}

Let us start by defining the notion of neighbourhood of an element in a structure.

Let $c$ be a constant symbol, and let $\structgen\in\classe$.
For $k\in\N$ and $\elem\in\structgendom$, the (pointed) \emph{$k$-neighbourhood}
$\neighgen$ of $\elem$ in $\structgen$ is the $(\vocab\cup\{c\})$-structure
whose restriction to the vocabulary~$\vocab$ is the substructure of $\structgen$ induced by the set
$\boulegen \coloneq \{\elembis\in\structgendom:\dist{\structgen}{\elem}{\elembis}\leq k\}\,,$ and where~$c$ is interpreted
as $\elem$. In other words, it consists of all the elements at distance at most $k$ from $\elem$ in $\structgen$, together with the relations they share in $\structgen$; the center $\elem$ being marked by the constant $c$. We sometimes refer to $\boulegen$ as the $k$-neighbourhood of $a$ in $\structgen$ as well, but the context will always make it clear whether we refer to the whole substructure or only its domain.
The \emph{$k$-neighbourhood type} $\type=\ntpgen$ of $\elem$ in $\structgen$ is the
isomorphism class of its $k$-neighbourhood. We say that $\type$ is a $k$-neighbourhood type
over $\vocab$, and that $\elem$ is an \emph{occurrence} of
$\type$. We denote by $\nocc{\structgen}{\type}$ the number of occurrences of
$\type$ in $\structgen$ (\ie the number of elements with $k$-neighbourhood $\type$ in $\structgen$), and we write
$\threq{\struct}{\structbis}{k}{t}$ to mean that for every $k$-neighbourhood type
$\type$, $\nocc{\struct}{\type}$ and $\nocc{\structbis}{\type}$ are
either equal, or both larger than $t$.

Let $\Ntp$ denote the set of $k$-neighbourhood types over $\vocab$ occurring in structures of degree at most $d$. Note that $\Ntp$ is a finite set.

The interest of this notion resides in the fact that when the degree is bounded, \FO is exactly able to count the number of occurrences of neighbourhood types up to some threshold~\cite{DBLP:journals/iandc/FaginSV95}. We will only use one direction of this characterization, namely:

\begin{prop}
  \label{prop:FO_neigh}
  For all integers $k$ and $t$, there exists some $\hat f(k,t)\in\N$ (which also depends on the bound $d$ on the degree of structures in \classe) such that for all structures $\struct,\structbis\in\classe$,
  $\struct\foeq[\hat f(k,t)]\structbis$ implies $\threq{\struct}{\structbis}{k}{t}$.
\end{prop}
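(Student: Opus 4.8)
The plan is to reduce the statement to the fact that, in bounded degree, each local neighbourhood type is definable by a first-order formula of controlled quantifier rank, and that the number of its occurrences can be counted up to any fixed threshold. The crucial use of the degree bound is that every $k$-neighbourhood of a structure in \classe has at most $N_0:=1+d+\dots+d^k$ elements; consequently $\Ntp$ is a finite set, and each $\type\in\Ntp$ is the isomorphism class of a \emph{pointed} structure whose domain has bounded size. I will produce, for each type and each count up to $t$, an \FO-sentence, and then choose $\hat f(k,t)$ to exceed the quantifier rank of all of them.

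First I would establish that each type is \emph{locally definable}: for every $\type\in\Ntp$ there is an \FO-formula $\phi_\type(x)$, of quantifier rank bounded by some $r$ depending only on $k$, $d$ and the fixed signature \vocab, such that for all $\structgen\in\classe$ and all $\elem\in\structgendom$ we have $\structgen\models\phi_\type(\elem)$ iff $\ntpgen[\elem]=\type$. Writing $s\le N_0$ for the size of $\type$, the formula has the shape
\[
  \phi_\type(x):=\exists x_2\cdots\exists x_s\,\big(\delta_\type(x,x_2,\dots,x_s)\land\mu_\type(x,x_2,\dots,x_s)\big),
\]
where $\delta_\type$ is quantifier-free and pins down the complete atomic diagram of $\type$ (so that $x,x_2,\dots,x_s$ induce a copy of $\type$ with $x$ as centre, each $x_i$ lying in $\boulegen$), and $\mu_\type:=\forall y\,\big(\theta_{\le k}(x,y)\to\bigvee_i y=x_i\big)$ is a maximality clause ensuring no further element lies within distance $k$ of $x$. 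Here $\theta_{\le k}(x,y)$ expresses $\dist{\structgen}{x}{y}\le k$ and can be built by repeated midpoint doubling from the (quantifier-rank-bounded) adjacency formula, giving quantifier rank $O(\log k)$; hence $\mathrm{qr}(\phi_\type)\le N_0+O(\log k)=:r$.

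Next, for every $\type\in\Ntp$ and every $j\in\{1,\dots,t+1\}$ I would form the counting sentence $\psi_{\type,j}:=\existsm{j}x\,\phi_\type(x)$, realised as $\exists x_1\cdots\exists x_j\big(\bigwedge_{p<q}x_p\ne x_q\land\bigwedge_p\phi_\type(x_p)\big)$, whose quantifier rank is at most $j+r\le(t+1)+r$. By construction $\structgen\models\psi_{\type,j}$ iff $\nocc{\structgen}{\type}\ge j$. I then set $\hat f(k,t):=t+1+r$. For correctness, suppose $\struct\foeq[\hat f(k,t)]\structbis$. Since each $\psi_{\type,j}$ with $j\le t+1$ has quantifier rank at most $\hat f(k,t)$, the two structures agree on all of them, so $\min(\nocc{\struct}{\type},t+1)=\min(\nocc{\structbis}{\type},t+1)$ for every $\type\in\Ntp$; if this common value is at most $t$ the counts coincide, and if it equals $t+1$ both counts exceed $t$, which is exactly $\threq{\struct}{\structbis}{k}{t}$.

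The one genuinely delicate point is the construction of $\phi_\type$: one must check that $\delta_\type$ together with the maximality clause $\mu_\type$ captures the isomorphism type of the entire ball $\boulegen$, rather than merely of some induced substructure, and that both the distance formula $\theta_{\le k}$ and the existential listing have quantifier rank bounded uniformly in $k$, $d$ and \vocab alone. This is precisely where the degree bound is indispensable (it caps $s$ by $N_0$), and it is the content of the locality characterisation of~\cite{DBLP:journals/iandc/FaginSV95}; everything else is routine bookkeeping on quantifier ranks.
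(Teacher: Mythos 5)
Your proposal is correct, but note that the paper gives no proof of this proposition at all: it imports it as ``one direction'' of the characterisation of \FO on bounded-degree classes due to Fagin, Stockmeyer and Vardi~\cite{DBLP:journals/iandc/FaginSV95}. What you wrote is a self-contained reconstruction of the standard argument behind that citation, and the key steps are sound: in a structure of degree at most $d$, every $k$-ball has at most $1+d+\dots+d^k$ elements, so each $k$-neighbourhood type $\type$ is defined by an existential enumeration of the ball (its atomic diagram) conjoined with a maximality clause, with quantifier rank bounded in terms of $k$, $d$ and $\vocab$ alone; counting occurrences up to threshold $t+1$ then costs $t+1$ further existentials, and agreement on all these sentences yields exactly $\threq{\struct}{\structbis}{k}{t}$. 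The delicate point you flag does go through without needing to defer to the cited locality result: the atomic diagram forces each listed witness to lie within distance $k$ of the centre \emph{in the induced substructure}, hence also in the full structure (adding elements can only shrink distances), while the maximality clause gives the reverse inclusion, so the witnesses are exactly the ball and its pointed isomorphism type is pinned down; the only other care needed is that Gaifman adjacency requires a constant number of quantifiers when $\vocab$ has relations of arity greater than two, which you implicitly absorb into the constant. Compared with the paper, your route buys self-containedness and an explicit bound on $\hat f(k,t)$ (roughly $t$ plus the ball size plus $\log k$), at the cost of rehearsing a folklore construction; the paper's choice to cite instead is purely a matter of economy.
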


\begin{rem}
  \label{rem:end_proof}
  Assume the existence of a function $\thr\colon\N\to\N$ such that whenever $\struct,\structbis\in\classe$ are such that $\threq{\struct}{\structbis}{k}{\thr(k)}$, one can construct \ordgen satisfying Equation~(\ref{eq:order_eq}).
  
  Summing up the above discussion, we show that the existence of such a $\thr$ is enough to end the proof of Theorem~\ref{th:main}. Indeed, Proposition~\ref{prop:FO_neigh} ensures that for $f \colon k \mapsto \hat{f}(k,\thr(k))$, if $\struct\foeq[f(k)]\structbis$ then there exists \ordgen such that $\orderstruct\fotwoeq[k]\orderstructbis$, entailing~$\struct\oifotwoeq[k]\structbis$.

  In other words, this means the equivalence relation $\foeq[f(k)]$ is more fine-grained than $\oifotwoeq[k]$ on \classe. Given an \oifotwo-sentence of quantifier rank $k$, one can decompose it along the equivalence classes of $\oifotwoeq[k]$, and thus, in turn, decompose it according to $\foeq[f(k)]$ (i.e. as a disjunction of \FO-sentences of rank $f(k)$ defining the relevant $\foeq[f(k)]$ classes). This gives a translation on \classe from \oifotwo-sentences of quantifier rank $k$ to \FO-sentences of rank $f(k)$.
\end{rem}

Let us now explain how the function $\thr$ is chosen.

\subsubsection{Frequency of a neighbourhood type}
\label{sec:freqneigh}

Let us write $\nbrtp \coloneqq |\Ntp|$, the number of types of $k$-neighbourhoods possibly appearing in structures of $\classe$ (which have degree at most $d$).

In this section, we show that if we consider the set $\Freq[\structgen]$ of $k$-neighbourhood types that have enough occurrences in some $\structgen\in\classe$ (where ``enough'' will be given a precise meaning later on), each type in $\Freq[\structgen]$ must have many occurrences that are scattered across~$\structgen$. Not only that, but we can also make sure that such occurrences are far from all the occurrences of every $k$-neighbourhood type not in $\Freq[\structgen]$, which by definition have few occurrences in $\structgen$. 

Such a dichotomy was introduced in~\cite{DBLP:journals/lmcs/Grange21}; we simply accommodate this construction to our needs. In the remainder of this section, we recall and adapt this construction.

The following lemma, which heavily relies on the degree boundedness assumption, states that, when given a small set $B$ and a bounded number of large enough sets $C_1,\ldots,C_n$, it is always possible to find many elements in each $C_j$ which are far apart from one another, as well as far from $B$. It is a strengthening of \cite[Lemma 3.2]{nevsetvril2010first}).

\begin{lem}
  \label{lem:scatter}
  Given three positive integers $m$, $\delta$, $s$, there exists a threshold $g(m,\delta,s)\in\N$ such that for all $\structgen\in\classe$, all $B\subseteq\structgendom$ of size at most $s$, and all subsets $C_1,\ldots,C_n\subseteq\structgendom$ (with~$n\leq\nbrtp$) of size at least $g(m,\delta,s)$, it is possible to find elements $c_j^1,\ldots,c_j^m\in C_j$ for all $j\in\{1,\ldots,n\}$, such that for all $j,j'\in\{1,\ldots,n\}$ and $i,i'\in\{1,\ldots,m\}$, $\dist{\structgen}{c_j^i}{B}>\delta$ and $\dist{\structgen}{c_j^i}{c_{j'}^{i'}}>\delta$ if $(j,i)\neq(j',i')$.

  Furthermore, such a function $g$ can be chosen such that:
  \begin{itemize}
  \item for all integers $m,\delta,s$, $g(m,\delta,s)\geq s$, and
  \item $g$ is non-decreasing in its third argument, i.e. for every integers $m,\delta,s,s'$ such that $s\leq s'$, $g(m,\delta,s)\leq g(m,\delta,s')$.
  \end{itemize}
\end{lem}

\begin{proof}
  Let us fix $m,\delta,s\in\N$ and $\structgen\in\classe$, which has degree at most $d$ by assumption.

  Let us start by proving that for any $p\in\N$ and sets of elements $X, Y\subseteq\structgendom$ such that $|X|\leq p$ and $|Y|\geq(p+m)(d^\delta+1)$, there exist elements $y^1,\ldots,y^m\in Y$ such that for all $i,i'\in\{1,\ldots,m\}$, $\dist{\structgen}{y^i}{Y}>\delta$ and $\dist{\structgen}{y^i}{y^{i'}}>\delta$ if $i\neq i'$.

  Let us consider the $\delta$-th power $\powgaifman$ of the Gaifman graph of $\structgen$, whose vertex set is the same as $\gaifman$ and where there is an edge between two vertices iff their distance in $\gaifman$ is at most $\delta$. The degree of $\powgaifman$ can be coarsely bounded by $d^\delta$. Thus, there are at most $p (d^\delta+1)$ elements in or adjacent to $X$ in $\powgaifman$, and at least $m(d^\delta+1)$ elements in $Y\setminus(X\cup\Neighgen{X}[\powgaifman])$.
  Consider an arbitrary $(d^\delta+1)$-coloring of $\powgaifman$ (where two adjacent vertices have different colors), which is guaranteed to exist since the degree of $\powgaifman$ is at most $d^\delta$. By the pigeonhole principle, one can find $m$ elements $y^1,\ldots,y^m$ in $Y\setminus(X\cup\Neighgen{X}[\powgaifman])$ with the same color. Since these elements do not belong to $X\cup\Neighgen{X}[\powgaifman]$, they are at distance at least $\delta+1$ from $X$ in $\gaifman$. Furthermore, they form an independent set in $\powgaifman$ (by definition of a coloring), and are thus at distance at least $(\delta+1)$ from one another in $\gaifman$. This proves the claim.
  
  We are now ready to prove the lemma: let $B$ be a subset of $\structgendom$ of size at most $s$. Define $g_n(m,\delta,s) \coloneqq (s+nm)(d^\delta+1)$. We show by induction on $n$ that for every $C_1,\ldots,C_n\subseteq\structgendom$ of size at least $g_n(m,\delta,s)$, one can find elements $c_j^i\in C_j$, for $j\in\{1,\ldots,n\}$ and $i\in\{1,\ldots,m\}$, which are at distance at least $\delta+1$ from each other and from $B$ in $\gaifman$. For $n=0$, there is nothing to show. For the inductive step from $n$ to $n+1$: since $g_{n+1}(m,\delta,s)\geq g_n(m,\delta,s)$, we start by using the induction hypothesis to get suitable elements $c_j^i\in C_j$ for $j\in\{1,\ldots,n\}$ and $i\in\{1,\ldots,m\}$. We can then rely on the previous claim with $X \coloneqq B\cup\bigcup_{j=1}^n\{c_j^1,\ldots,c_j^m\}$ (\ie $p=s+nm$) and $Y \coloneqq C_{n+1}$ (which, by choice of $g_{n+1}(m,\delta,s)$, satisfy the cardinality condition of the claim), and pick the $y^1,\ldots,y^m$ as our $c^1_{n+1},\ldots,c^m_{n+1}$.
  
  Setting $g(m,\delta,s) \coloneqq g_{\nbrtp}(m,\delta,s)$ concludes the proof: it is routine to check that such a function $g$ satisfies the last two properties of the lemma.
\end{proof}

\begin{rem}
  \label{rem:M}
  Note that the size of (an instance of) a neighbourhood type $\type\in\Ntp$ is at most
  \begin{equation*}
    M_k^d \coloneqq
    \begin{cases}
      2k+1&\text{if }d=2\\
      1+d\frac{(d-1)^k-1}{d-2}&\text{otherwise}      
    \end{cases}
  \end{equation*}
  and thus can be bounded by a function of $k$ (recall that the signature $\vocab$ and the degree $d$ are fixed).
\end{rem}

We now explain how to use Lemma~\ref{lem:scatter} in order to define the notion of $k$-neighbourhood frequency in a structure $\structgen$.
Let us consider
\begin{equation}
  \label{eq:m_and_k}
  m \coloneqq 2(k+1)\cdot M_k^d!\qquad\text{and}\qquad\delta \coloneqq 4k\,.
\end{equation}

Our goal is, given a structure $\structgen\in\classe$, to partition the $k$-neighbourhood types into two categories: the frequent types, and the rare types. The property we wish to enforce is that there exist in $\structgen$ at least $m$ occurrences of each one of the frequent $k$-neighbourhood types which are both
\begin{itemize}
\item at distance greater than $\delta$ from one another, and
\item at distance greater than $\delta$ from every occurrence of a rare $k$-neighbourhood type.
\end{itemize}

To that end, for $t\in\N$ and $\structgen\in\classe$ let $\Freq[\structgen][k][\geq t]\subseteq\Ntp$ denote the set of $k$-neighbourhood types which have at least $t$ occurrences in~$\structgen$.

\begin{lem}
  \label{lem:threshold}
  Let $k\in\N$ and define $m$ and $\delta$ as in Equation~(\ref{eq:m_and_k}). There exists $\thr(k)\in\N$ such that for every $\structgen\in\classe$, there exists some $t\leq\thr(k)$, for which \[t\geq g\Bigg(m,\delta,\sum_{\type\in\Ntp\setminus\Freq[\structgen][k][\geq t]}\nocc{\structgen}{\type}\Bigg)\,.\]
\end{lem}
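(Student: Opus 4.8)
The plan is to turn the informal marking procedure stated just above into a rigorous choice of $t$, and to read the threshold off the step where the procedure halts. Fix $\structgen\in\classe$. Since $\Ntp$ is finite, I would enumerate its $k$-neighbourhood types as $\tau_1,\ldots,\tau_{\nbrtp}$ so that their occurrence counts are non-decreasing; writing $n_l:=\nocc{\structgen}{\tau_l}$, this means $n_1\leq\ldots\leq n_{\nbrtp}$, with ties broken arbitrarily. I also set the partial sums $S_l:=\sum_{j=1}^{l}n_j$ and $S_0:=0$. Throughout, $g$, $m$ and $\delta$ are as provided by Lemma~\ref{lem:scatter} and Equation~(\ref{eq:m_and_k}); the only properties of $g$ I would use are that it is non-decreasing in its last argument and that $g(m,\delta,s)\geq s$.

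Next I would pinpoint the halting index. Let $i$ be the least element of $\{1,\ldots,\nbrtp+1\}$ such that either $i=\nbrtp+1$, or $n_i\geq g(m,\delta,S_{i-1})$. Informally, $\tau_1,\ldots,\tau_{i-1}$ are the types the procedure marks as rare before stopping, and $S_{i-1}$ is the number of occurrences accumulated in the rare set at that point. I then define the threshold to be
\[
  t:=g(m,\delta,S_{i-1}).
\]

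The heart of the argument is to show that, for this $t$, the types outside $\Freq[\structgen][k][\geq t]$ — that is, those with fewer than $t$ occurrences — are exactly $\tau_1,\ldots,\tau_{i-1}$. For the indices below $i$, minimality of $i$ means the test failed at $i-1$, so $n_{i-1}<g(m,\delta,S_{i-2})\leq g(m,\delta,S_{i-1})=t$ by monotonicity of $g$; hence $n_l\leq n_{i-1}<t$ for every $l<i$ (this is vacuous when $i=1$). For the indices at least $i$ (relevant only when $i\leq\nbrtp$), the test succeeded at $i$, so $n_l\geq n_i\geq g(m,\delta,S_{i-1})=t$ for every $l\geq i$. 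Consequently the rare occurrences sum to exactly $S_{i-1}$, and therefore $t=g(m,\delta,S_{i-1})=g\big(m,\delta,\sum_{\type\notin\Freq[\structgen][k][\geq t]}\nocc{\structgen}{\type}\big)$, which is even stronger than the required inequality.

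It remains to produce a single bound $\thr(k)$ valid for all $\structgen\in\classe$, and this is the step I expect to be the crux. The naive choice $t:=n_i$ fails, since a frequent type may have arbitrarily many occurrences in a large structure; the point of taking $t=g(m,\delta,S_{i-1})$ is that $t$ is governed by the \emph{rare} part $S_{i-1}$ rather than by the unbounded frequent counts. To bound $S_{i-1}$, note that for each $l<i$ the test failed, giving $n_l<g(m,\delta,S_{l-1})$ and hence $S_l<S_{l-1}+g(m,\delta,S_{l-1})$. Iterating this recurrence from $S_0=0$ at most $\nbrtp$ times bounds $S_{i-1}$ by a quantity depending only on $\nbrtp$, $m$ and $\delta$, all of which are functions of $k$ alone once $\vocab$ and $d$ are fixed. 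Setting $\thr(k)$ to be $g(m,\delta,\cdot)$ evaluated at this bound then yields $t\leq\thr(k)$ uniformly, and here the finiteness of $\Ntp$ (a consequence of the degree bound) is exactly what keeps the number of iterations — and thus $\thr(k)$ — under control.
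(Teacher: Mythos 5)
Your proof is correct and follows essentially the same route as the paper's: both formalize the greedy marking procedure (yours via a minimal halting index and partial sums $S_l$, the paper's via a while loop with the invariant $\sum_{\type\in X}\nocc{\structgen}{\type}\leq h_i(k)$ where $h_{i+1}(k)=h_i(k)+g(m,\delta,h_i(k))$), and both read off $t$ from the halting point and bound it uniformly by iterating that recurrence $\nbrtp$ times. The only differences are cosmetic, e.g.\ the paper takes $t:=g(m,\delta,\cdot)+1$ because its loop condition uses $\leq$ rather than your strict-failure convention.
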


\begin{proof}

  Define by induction $h_1(k) \coloneqq 0$ and $h_{i+1}(k) \coloneqq h_i(k)+g(m,\delta,h_i(k))$. This is well defined, as $m$ and $\delta$ depend only on $k$. We now show that the lemma holds for \[\thr(k) \coloneqq h_{\nbrtp+2}(k)\,.\]

  Fix some structure $\structgen\in\classe$ and let $\type_1,\ldots,\type_{\nbrtp}$ be an enumeration of $\Ntp$ such that for all $i<j$ we have $\nocc{\structgen}{\type_i}\leq\nocc{\structgen}{\type_j}$.

  Consider the algorithm described in Procedure~\ref{alg:types}. The set $X$, which is empty at first, will eventually contain all types that will be declared rare. In words, this algorithm proceeds as follows:
  \begin{enumerate}
  \item First, we mark every $k$-neighbourhood type as frequent.
  \item\label{enu:while} Among the types which are currently marked as frequent, let $\tau$ be one with the smallest number of occurrences in $\structgen$.
  \item If $\nocc{\structgen}{\tau}$ is at least $g(m,\delta,s)$ ($g$ being the function from Lemma~\ref{lem:scatter}) where $s$ is the total number of occurrences of all the $k$-neighbourhood types which are currently marked as rare, then we are done and the marking frequent/rare is final. Otherwise, mark $\tau$ as rare, and go back to step~\ref{enu:while} if there remains at least one frequent $k$-neighbourhood type
  \end{enumerate}
  
  \begin{algorithm}[!ht]
    \caption{Determining the rare and frequent neighbourhood types}
    \label{alg:types}
    
    $X\gets\emptyset$
    
    $i\gets 1$

    \textbf{while} $i\leq\nbrtp$ and $\nocc{\structgen}{\type_i}\leq g(m,\delta,\sum_{\type\in X}\nocc{\structgen}{\type})$

    \qquad $X\gets X\cup\{\type_i\}$

    \qquad $i$++

    \textbf{end} \tcp*{the rare types are those in $X$; types outside $X$ are frequent types}
  \end{algorithm}


  Let us establish the following invariant for the ``while'' loop:

  \begin{equation}
    \label{eq:bound_nocc} \sum_{\type\in X}\nocc{\structgen}{\type}\leq h_i(k).
  \end{equation}

  Before the first execution of the body of the loop, both sides of Equation~(\ref{eq:bound_nocc}) are zero. Suppose now that Equation~(\ref{eq:bound_nocc}) is true for some $i\leq\nbrtp$ at the start of the loop, and let us prove that Equation~(\ref{eq:bound_nocc}) still holds after executing the body of the loop. We went from $i$ to $i+1$. Let $X_i$ denote the old value of $X$, and $X_{i+1}$ its new value; in other words, $X_{i+1}=X_i\cup\{\type_i\}$. The left part of Equation~(\ref{eq:bound_nocc}) has been augmented by $\nocc{\structgen}{\type_i}$, which, because the ``while'' condition was satisfied, is less than $g(m,\delta,\sum_{\type\in X_i}\nocc{\structgen}{\type})$. Thus, \[\sum_{\type\in X_{i+1}}\nocc{\structgen}{\type}\leq\sum_{\type\in X_i}\nocc{\structgen}{\type} + g\Bigg(m,\delta,\sum_{\type\in X_i}\nocc{\structgen}{\type}\Bigg)\,.\] By assumption, $\sum_{\type\in X_i}\nocc{\structgen}{\type}\leq h_i(k)$, and since $g$ is non-decreasing in its third variable, we get $\sum_{\type\in X_{i+1}}\nocc{\structgen}{\type}\leq h_{i+1}(k)$, which means that Equation~(\ref{eq:bound_nocc}) is preserved by the body of the loop.

  Let us now look at the two ways we can exit the ``while'' loop, and prove that the statement of the lemma holds whichever the case:

  \begin{itemize}
  \item If we exit the loop because the condition ``$i\leq\nbrtp$'' is false, then we have $i=\nbrtp+1$ and $X=\Ntp$. By Equation~(\ref{eq:bound_nocc}), $\sum_{\type\in \Ntp}\nocc{\structgen}{\type}\leq h_{\nbrtp+1}(k)$, and we can choose $t \coloneqq g(m,\delta,h_{\nbrtp+1}(k))$. We then indeed get $t\leq\thr(k)$ and, because $g$ is non-decreasing in its third argument, \[t\geq g\Bigg(m,\delta,\sum_{\type\in\Ntp}\nocc{\structgen}{\type}\Bigg)=g\Bigg(m,\delta,\sum_{\type\in\Ntp\setminus\Freq[\structgen][k][\geq t]}\nocc{\structgen}{\type}\Bigg)\,.\]

  \item Otherwise, $i\leq\nbrtp$ and it is ``$\nocc{\structgen}{\type_i}\leq g(m,\delta,\sum_{\type\in X}\nocc{\structgen}{\type})$'' which fails. We can then choose $t \coloneqq g(m,\delta,\sum_{\type\in X}\nocc{\structgen}{\type})+1$ (which is, again, smaller than $\thr(k)$). Let us prove that $X$ is exactly the set of $k$-neighbourhood types that have less than $t$ occurrences in $\structgen$. For $\type\in X$, we have $\nocc{\structgen}{\type}\leq \sum_{\type'\in X}\nocc{\structgen}{\type'}\leq g(m,\delta,\sum_{\type'\in X}\nocc{\structgen}{\type'})< t$ (the second inequality follows from the good properties of $g$, cf. Lemma~\ref{lem:scatter}). Conversely, if $\type\notin X$, it means that $\nocc{\structgen}{\type}\geq\nocc{\structgen}{\type_i}$ (recall that the $\type_i$ are examined in number of occurrences in $\structgen$), which in turn is at least $t$.

    Then we have \[t\geq g\Bigg(m,\delta,\sum_{\type\in X}\nocc{\structgen}{\type}\Bigg)= g\Bigg(m,\delta,\sum_{\type\in\Ntp\setminus\Freq[\structgen][k][\geq t]}\nocc{\structgen}{\type}\Bigg)\,. \qedhere\]
  \end{itemize}
\end{proof}

We are now ready to formally establish the notion of frequent type:

\begin{defi}
  \label{def:frequent}
  Fix an integer $k$ and a structure $\structgen\in\classe$, and let $\Freq[\structgen] \coloneqq \Freq[\structgen][k][\geq t]$ for the smallest threshold $t\leq\thr(k)$ such that \[t\geq g\Bigg(m,\delta,\sum_{\type\in\Ntp\setminus\Freq[\structgen][k][\geq t]}\nocc{\structgen}{\type}\Bigg)\,,\] whose existence is guaranteed by Lemma~\ref{lem:threshold}.
  Some $k$-neighbourhood type $\type\in\Ntp$ is said to be \emph{frequent in $\structgen$} if it belongs to $\Freq[\structgen]$; that is, if $\nocc{\structgen}{\type}\geq t$. Otherwise, $\type$ is said to be \emph{rare in $\structgen$}.
\end{defi}

\begin{rem}
  \label{rem:no_freq}
  A structure \structgen without frequent type has size bounded by $\thr(k)\cdot\nbrtp$ since, by Definition~\ref{def:frequent}, each $\type\in\Ntp$ has at most $\thr(k)$ occurrences in \structgen.
\end{rem}

Recall how $m$ and $\delta$ were defined in Equation~(\ref{eq:m_and_k}). Combining Lemmas~\ref{lem:scatter}--\ref{lem:threshold}, we~get:

\begin{prop}
  \label{prop:scatter}
  Let $k\in\N$. In every structure $\structgen\in\classe$, one can find~$2(k+1)\cdot M_k^d!$ occurrences of each frequent $k$-neighbourhood type in $\structgen$ which are at distance at least~$4k+1$ from one another and from the set of occurrences of every rare $k$-neighbourhood type in $\structgen$.
\end{prop}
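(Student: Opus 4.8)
The plan is to obtain Proposition~\ref{prop:scatter} as a direct consequence of Lemma~\ref{lem:scatter}, fed with the threshold $t$ singled out by Definition~\ref{def:frequent} (which itself rests on Lemma~\ref{lem:threshold}). The point is that Definition~\ref{def:frequent} was tailored precisely so that the frequent types have enough occurrences to meet the cardinality hypothesis of the scattering lemma, with the rare occurrences playing the role of the ``forbidden'' set $B$. So essentially no new combinatorics is required; the work lies in wiring together the already-established statements with matching parameters.

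Concretely, I would fix $k\in\N$ and $\structgen\in\classe$, and recall that $m=2(k+1)\cdot M_k^d!$ and $\delta=4k$ as in Equation~(\ref{eq:m_and_k}). Let $t\leq\thr(k)$ be the threshold provided by Definition~\ref{def:frequent}, so that $\Freq[\structgen]=\Freq[\structgen][k][\geq t]$ and
\[
t\geq g\Bigg(m,\delta,\sum_{\type\notin\Freq[\structgen]}\nocc{\structgen}{\type}\Bigg).
\]
Set $s:=\sum_{\type\notin\Freq[\structgen]}\nocc{\structgen}{\type}$, the total number of occurrences of rare types, and take $B\subseteq\structgendom$ to be the set of all such occurrences, so that $|B|=s$. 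Enumerate the frequent types as $\type_1,\ldots,\type_n$ (with $n\leq\nbrtp$, since $\Freq[\structgen]\subseteq\Ntp$) and let $C_j$ be the set of occurrences of $\type_j$, so that $|C_j|=\nocc{\structgen}{\type_j}\geq t$ by frequency of $\type_j$.

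It then remains to feed these data to Lemma~\ref{lem:scatter}. The size hypotheses are met: $|B|=s$, and $|C_j|\geq t\geq g(m,\delta,s)$ by the displayed inequality, while $n\leq\nbrtp$. The lemma therefore yields, in each $C_j$, elements $c_j^1,\ldots,c_j^m$ that are pairwise at distance $>\delta=4k$ and at distance $>\delta$ from $B$. Since distances are integers, ``$>4k$'' is exactly ``$\geq 4k+1$'', and $B$ is precisely the set of occurrences of rare types, which is the conclusion of Proposition~\ref{prop:scatter}.

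I do not expect a genuine obstacle here, as the heavy lifting has been done in Lemmas~\ref{lem:scatter} and~\ref{lem:threshold}. The only things to watch are bookkeeping matters: that the third argument of $g$ in Definition~\ref{def:frequent} really equals $|B|$, so that the single value $g(m,\delta,s)$ lower-bounds every $|C_j|$, and that the number of frequent types does not exceed $\nbrtp$ (needed for the hypothesis $n\leq\nbrtp$ of Lemma~\ref{lem:scatter}). Both hold by construction.
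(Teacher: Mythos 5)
Your proposal is correct and matches the paper's own justification, which derives Proposition~\ref{prop:scatter} exactly as you do: by combining Lemma~\ref{lem:scatter} with the threshold from Lemma~\ref{lem:threshold} via Definition~\ref{def:frequent}, taking $B$ to be the rare occurrences and the $C_j$ to be the occurrence sets of the frequent types. The bookkeeping you flag (the third argument of $g$ equalling $|B|$, the bound $n\leq\nbrtp$, and converting ``$>4k$'' to ``$\geq 4k+1$'') is precisely what the paper leaves implicit, so there is nothing to add.
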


With these notions of frequent and rare neighbourhood types established, we now turn to the construction of the two linear orders.

\subsection{\texorpdfstring{Constructing linear orders on \struct and \structbis}{Constructing linear orders on A0 and A1}}
\label{sec:construction}

Recall that our goal is to find a function~$f$ such that, given two structures \struct, \structbis in \classe (assumed without loss of generality to have disjoint domains) such that  $\struct\foeq[f(k)]\structbis$, we are able to construct linear orders on \structdom and \structbisdom (both denoted \ordgen) such that $\orderstruct\fotwoeq\orderstructbis$.
In this section, we define $f$ and we detail the construction of such orders. The proof of \FOt-similarity between \orderstruct and \orderstructbis will be the focus of Section~\ref{sec:EFproof}.

For every integer $k$, we choose an $f(k)$ large enough so that
\begin{itemize}
\item for every structures $\struct,\structbis\in\classe$, $\struct\foeq[f(k)]\structbis$ entails $\threq{\struct}{\structbis}{k}{\thr(k)}$, where $\thr(k)$ is defined in Lemma~\ref{lem:threshold},
\item $f(k)\geq \thr(k)\cdot\nbrtp +1$,
\item $f$ satisfies an additional lower bound made explicit later in this section, in the paragraph \emph{Transfer on $\structbis$} below.
\end{itemize}

In what follows, we fix an integer $k\in\N$, as well as two structures $\struct$ and $\structbis$ of $\classe$ such~that
\begin{equation}
  \label{eq:foeq}
  \struct\foeq[f(k)]\structbis\,
\end{equation}
for such a choice of $f$.

By choice of $f$, we get $\threq{\struct}{\structbis}{k}{\thr(k)}$. This ensures, for all $t\leq\thr(k)$, that $\Freq[\struct][k][\geq t]=\Freq[\structbis][k][\geq t]$, and that
 \[\sum_{\type\in\Ntp\setminus\Freq[\struct][k][\geq t]}\nocc{\struct}{\type}=\sum_{\type\in\Ntp\setminus\Freq[\structbis][k][\geq t]}\nocc{\structbis}{\type}\,.\] 
The smallest $t$ such that \[t\geq g\Bigg(m,\delta,\sum_{\type\in\Ntp\setminus\Freq[\struct][k][\geq t]}\nocc{\struct}{\type}\Bigg)\] is therefore also the smallest $t$ such that \[t\geq g\Bigg(m,\delta,\sum_{\type\in\Ntp\setminus\Freq[\structbis][k][\geq t]}\nocc{\structbis}{\type}\Bigg)\,,\] meaning exactly (according to Definition~\ref{def:frequent}) that $\Freq=\Freq[\structbis]$. In other words, the notion of frequent and rare $k$-neighbourhood types coincide on $\struct$ and $\structbis$. From this point on, we thus refer to \emph{frequent} and \emph{rare} $k$-neighbourhood types without mentioning the structure involved.

Furthermore, if $\Freq$ (and thus $\Freq[\structbis]$) is empty, then according to Remark~\ref{rem:no_freq} we have $|\struct|\leq \thr(k)\cdot\nbrtp$. As we chose $f(k)\geq\thr(k)\cdot\nbrtp +1$, Equation~(\ref{eq:foeq}) entails that $\struct\simeq\structbis$, and in particular that $\struct\oifotwoeq\structbis$. From now on, we thus suppose that there is at least one frequent $k$-neighbourhood type.

To construct the two linear orders, we need to define the notion of $k$-environment: given~$\structgen\in\classe$, a linear order $\ordgen$ on $\structgendom$, $k\in\N$ and an element $\elem\in\structgendom$, we define the \emph{$k$-environment} $\Envgen$ \emph{of $\elem$ in $\orderstructgen$} as the restriction of $\orderstructgen$ to the $k$-neighbourhood of $\elem$ in $\structgen$, where $\elem$ is the interpretation of the constant symbol $c$. Note that the order is not taken into account when determining the domain of the substructure (it would otherwise be $\structgendom$, given that any two distinct elements are adjacent for $\ordgen$).
The \emph{$k$-environment type} $\envtpgen$ is the isomorphism class of $\Envgen$.
In other words, $\envtpgen$ contains the information of $\neighgen$ together with the order of its elements in $\orderstructgen$.

Given $\type\in\Ntp$, we define \Ord as the set of $k$-environment types whose underlying $k$-neighbourhood type is $\type$.

For each $\indin$, we aim to partition $\structinddom$ into $2(2k+1)+2$ segments:
\[\structinddom=\Rareind\uplus\biguplus_{j=0}^{2k}(\Leftind\uplus\Rightind)\uplus\Middleind\,.\]
Once we have set a linear order on each segment, the linear order \ordgen on $\structinddom$ will result from the concatenation of the orders on the segments as follows: \[(\structinddom,\ordgen) \coloneqq \Rareind\cdot\Leftind[0]\cdot\Leftind[1]\cdots\Leftind[2k]\cdot\Middleind\cdot\Rightind[2k]\cdots\Rightind[1]\cdot\Rightind[0]\,.\]
Formally, we order the segments as \[\Rareind\ordgen\Leftind[0]\ordgen\Leftind[1]\ordgen\cdots\ordgen\Leftind[2k]\ordgen\Middleind\ordgen\Rightind[2k]\ordgen\cdots\ordgen\Rightind[1]\ordgen\Rightind[0]\,,\] and for $x,y\in\structinddom$, we let $x\ordgen y$ either if $x$'s segment is smaller than $y$'s segment, or if both belong to the same segment and $x$ is less than $y$ according to the order for this segment.

Each segment $\Leftind$, for $j\in\{0,\ldots,2k\}$, is itself decomposed into two segments $\NLeftind$ and $\ULeftind$, and the order on $\Leftind$ will be defined as the concatenation of the order on $\NLeftind$ with the order on $\ULeftind$ (meaning that all elements from $\NLeftind$ will be smaller than all elements from $\ULeftind$). The $\ULeftind$ for $j\in\{k+1,\ldots,2k\}$ will be empty; they are defined solely in order to keep the notations uniform. The 'N' stands for ``neighbour'' and the 'U' for ``universal'', for reasons that will soon become apparent.
Symmetrically, each $\Rightind$ is decomposed into $\URightind\cdot\NRightind$ (and thus all elements of $\URightind$ will be smaller than all elements of $\NRightind$), with $\URightind$ being empty as soon as $j\geq k+1$.

For $\indin$ and $r\in\{0,\ldots,2k\}$, we define $\Segmentind$ as \[\Segmentind \coloneqq \Rareind\cup\bigcup_{j=0}^r(\Leftind\cup\Rightind)\,.\]

Let us now explain how the segments are constructed in \struct; see Figure~\ref{fig:segments} for an illustration.

\begin{figure*}[!ht]
  \centering
  \begin{tikzpicture}[scale=.98]

    \newcommand{\topy}{.5}
    \newcommand{\boty}{.5}

    \foreach \b/\e in {1/2,3/4,5.5/6.5,7.5/8.5,9/10,11/12,13.5/14.5}
    \draw[pattern=north west lines, pattern color=green!50] (\b,-\boty) rectangle (\e,\topy);

    \foreach \b/\e in {2/3,4/5,6.5/7.5,12.5/13.5,14.5/15.5}
    \draw[pattern=dots, pattern color=black!20] (\b,-\boty) rectangle (\e,\topy);

    \foreach \b/\e in {0/1,10/11}
    \draw[pattern=crosshatch dots, pattern color=red!50] (\b,-\boty) rectangle (\e,\topy);

    \foreach \a in {5.3,8.8,12.3}
    \node at (\a,0) {\small $\cdots$};

    \node at (.5,0) {\small $\Rare$};
    \node at (1.5,0) {\small $\NLeft[0]$};
    \node at (2.5,0) {\small $\ULeft[0]$};
    \node at (3.5,0) {\small $\NLeft[1]$};
    \node at (4.5,0) {\small $\ULeft[1]$};
    \node at (6,0) {\small $\NLeft[k]$};
    \node at (7,0) {\small $\ULeft[k]$};
    \node at (8,0) {\small $\NLeft[k\raisebox{.25\height}{\scalebox{.5}{+}}1]$};
    \node at (9.5,0) {\small $\NLeft[2k]$};
    \node at (10.5,0) {\small $\Middle$};
    \node at (11.5,0) {\small $\NRight[2k]$};
    \node at (13,0) {\small $\URight[1]$};
    \node at (14,0) {\small $\NRight[1]$};
    \node at (15,0) {\small $\URight[0]$};

    \foreach \b/\e in {
      .66/1.33,1.66/3.33,3.66/5.16,2.66/3.33,4.66/5.16,5.33/5.83,6.16/7.83,7.16/7.83,8.16/8.66,8.83/9.33,9.66/10.33,10.66/11.33,11.66/12.16,12.33/12.83,12.33/13.83,14.16/14.83
    }
    \draw[-] (\b,\boty) edge[out=70,in=110,-] (\e,\topy);

    \draw [decorate,decoration={brace,amplitude=3pt},xshift=0pt,yshift=0pt] (2.9,-0.6) -- (1.1,-0.6) node [black,midway,yshift=-0.4cm] {$\Left[0]$};
    \draw [decorate,decoration={brace,amplitude=3pt},xshift=0pt,yshift=0pt] (4.9,-0.6) -- (3.1,-0.6) node [black,midway,yshift=-0.4cm] {$\Left[1]$};
    \draw [decorate,decoration={brace,amplitude=3pt},xshift=0pt,yshift=0pt] (7.4,-0.6) -- (5.6,-0.6) node [black,midway,yshift=-0.4cm] {$\Left[k]$};
    \draw [decorate,decoration={brace,amplitude=3pt},xshift=0pt,yshift=0pt] (8.4,-0.6) -- (7.6,-0.6) node [black,midway,yshift=-0.4cm] {$\Left[k+1]$};
    \draw [decorate,decoration={brace,amplitude=3pt},xshift=0pt,yshift=0pt] (9.9,-0.6) -- (9.1,-0.6) node [black,midway,yshift=-0.4cm] {$\Left[2k]$};
    \draw [decorate,decoration={brace,amplitude=3pt},xshift=0pt,yshift=0pt] (11.9,-0.6) -- (11.1,-0.6) node [black,midway,yshift=-0.4cm] {$\Right[2k]$};
    \draw [decorate,decoration={brace,amplitude=3pt},xshift=0pt,yshift=0pt] (14.4,-0.6) -- (12.6,-0.6) node [black,midway,yshift=-0.4cm] {$\Right[1]$};
    \draw [decorate,decoration={brace,amplitude=3pt},xshift=0pt,yshift=0pt] (15.4,-0.6) -- (14.6,-0.6) node [black,midway,yshift=-0.4cm] {$\Right[0]$};

  \end{tikzpicture}
  \caption{Elements from two distinct segments can only be adjacent if there is a black curvy edge between those segments. The order \ordgen grows from the left to the right. Note that $\NRight[0]$ is empty and thus left out of the picture.}
  \label{fig:segments}
\end{figure*}
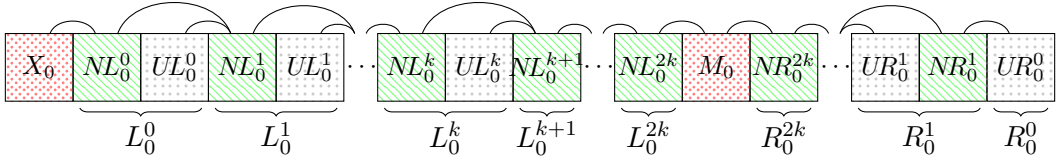

For every $\type\in\Freq$, let $\otypefirst,\ldots,\otypelast$ be an enumeration of \Ord.
Recall the definition of $M_k^d$ in Remark~\ref{rem:M}: we have $|\Ord|\leq M_k^d!$ for every $\type\in\Ntp$.

In particular, Proposition~\ref{prop:scatter} ensures that we are able to pick, for every $\type\in\Freq$, every $l\in\{1,\ldots,|\Ord|\}$ and every $j\in\{0,\ldots,k\}$, two elements $\pinLeft$ and $\pinRight$ which have $\type$ as $k$-neighbourhood type in \struct, such that all the $\pinLeft$ and the $\pinRight$, for $\type\in\Freq$, $l\in\{1,\ldots,|\Ord|\}$ and $j\in\{0,\ldots,k\}$, are at distance at least $4k+1$ from each other and from any occurrence of a rare $k$-neighbourhood type in \struct.

As their names suggest, $\pinLeft$ will appear in $\ULeft$ and $\pinRight$ in $\URight$, in order to make sure that each environment (i.e., ordering) of each frequent type appears in each universal segment $\ULeft, \URight$ -- which are called ``universal'' for that very reason.

\paragraph*{Construction of $\Rare$ and $\NLeft[0]$}
We start with the set $\Rare$, which contains all the occurrences of rare $k$-neighbourhood types, together with their $k$-neighbourhoods.
Formally, $\Rare$ is $\bigcup_{\elem\in\structdom :\ \ntp\in\Ntp\setminus\Freq}\boule\,.$
We set $\NLeft[0] \coloneqq \Neigh{\Rare}$ (the set of neighbours of elements of $\Rare$), and define the order \ordgen on $\Rare$ and on $\NLeft[0]$ in an arbitrary way.

\paragraph*{Construction of $\ULeft$}

If $k<j\leq 2k$, then we set $\ULeft \coloneqq \emptyset$.
Otherwise, for $j\in\{0,\ldots,k\}$, once we have constructed $\Left[0],\ldots,\Left[j-1]$ and $\NLeft$, we construct $\ULeft$ as follows.

The elements of $\ULeft$ are $\bigcup_{\tau\in\Freq}\bigcup_{l=1}^{|\Ord|}\boule[\pinLeft]\,.$
Note that $\ULeft$ does not intersect the previously constructed segments, by choice of the $\pinLeft$. Furthermore, the $\boule[\pinLeft]$ are pairwise disjoint, hence we can fix~\ordgen freely and independently on each of them.
Unsurprisingly, we order each $\boule[\pinLeft]$ so that $\envtp[\pinLeft]=\otype$. This is possible because for every $\type\in\Freq$ and each $l$, $\ntp[\pinLeft]=\type$ by choice of $\pinLeft$.

Once each $\boule[\pinLeft]$ is ordered according to $\otype$, the linear order \ordgen on $\ULeft$ can be completed in an arbitrary way.
Note that every possible $k$-environment type extending a frequent $k$-neighbourhood type in \struct occurs in each $\ULeft$. The $\ULeft$ are \emph{universal} in that sense.

\paragraph*{Construction of $\NLeft$}
Now, let us see how the $\NLeft$ are constructed. For $j\in\{1,\ldots,2k\}$, suppose that we have constructed $\Left[0],\ldots,\Left[j-1]$.
The domain of $\NLeft$ consists of all the neighbours (in \struct) of the elements of $\Left[j-1]$ not already belonging to the construction so far. Formally,
$\Neigh{\Left[j-1]}\setminus\big(\Rare\cup\bigcup_{m=0}^{j-2}\Left[m]\big).$

The order \ordgen on $\NLeft$ is chosen arbitrarily.

\paragraph*{Construction of $\Right$}

We construct similarly the $\Right$, for $j\in\{0,\ldots,2k\}$, starting with $\NRight[0] \coloneqq \emptyset$, then $\URight[0]$ which contains each $\pinRight[\otype][0]$ together with its $k$-neighbourhood in \struct ordered according to $\otype$, then $\NRight[1] \coloneqq \Neigh{\Right[0]}$, then $\URight[1]$, etc.

Note that the $\pinRight$ have been chosen so that they are far enough in \struct from all the segments that have been constructed so far, allowing us once more to order their $k$-neighbourhood in~\struct as we see fit.
  
\paragraph*{Construction of $\Middle$}

Let $\Middle$ contain all the elements of $\structdom$ besides those already belonging to $\Segment[2k]$.
The order \ordgen chosen on $\Middle$ is arbitrary.

\paragraph*{Transfer on \structbis}

Suppose that we have constructed $\Segment[2k]$.
We can make sure, retrospectively, that the index $f(k)$ in Equation (\ref{eq:foeq}) is large enough so that there exists a set $S\subseteq\structbisdom$ so that $\struct\mathord{\upharpoonright}_{\Segment[2k]\cup\Neigh{\Segment[2k]}}\simeq\structbis\mathord{\upharpoonright}_{S}$. This is ensured as long as $f(k)\geq|\Segment[2k]\cup\Neigh{\Segment[2k]}|+1$, whose right side can be bounded by a function of $k$, independent of \struct and \structbis. Indeed, all elements of $\Segment[2k]\cup\Neigh{\Segment[2k]}$ are at distance at most $3k+2$ from $\Rare$ or one of the $\pin$. Since the number of such elements is bounded by a function of $k$ (remember that, by Definition~\ref{def:frequent}, the number of occurrences of each rare neighbourhood types is at most $\thr(k)$) and since the degree is at most $d$ (which is fixed), $|\Segment[2k]\cup\Neigh{\Segment[2k]}|$ can be bounded by a function of $k$.

Let $\bijsymb:\struct\mathord{\upharpoonright}_{\Segment[2k]}\to\structbis\mathord{\upharpoonright}_{S'}$ be the restriction to $\Segment[2k]$ of said isomorphism, and let $\bijsymb[1]$ be its converse. By construction, the $k$-neighbourhood of every $\elem\in\Segment[k]$ is included in $\Segment[2k]$; hence every such $\elem$ has the same $k$-neighbourhood type in \struct as has $\bij[\elem]$ in \structbis.

We transfer alongside $\bijsymb$ all the segments besides $\Middle$, with their order, from \orderstruct to \structbis, thus defining $\Rarebis, \NLeftbis, \ULeftbis,\ldots$ as the respective images by $\bijsymb$ of $\Rare,\NLeft,\ULeft,\ldots$, and we define $\Middlebis$ as the set of remaining elements of $\structbisdom$. It remains to define the restriction of $\ordgen$ on $\Middlebis$: this is done in an arbitrary way. Note that the properties concerning neighbourhood are transferred; \eg all the neighbours of an element in $\Leftbis$, $1\leq j<2k$, belong to $\Leftbis[j-1]\cup\Leftbis\cup\Leftbis[j+1]\,.$

By construction, we get the following lemma:

\begin{lem}
  \label{lem:envpres}
  For each $\elem\in\Segment[k],$ we have $\envtp=\envtpbis[\bij[\elem]]\,.$
\end{lem}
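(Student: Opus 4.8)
The plan is to show that $\bijsymb$ itself realises the equality: its restriction to the $k$-neighbourhood of $\elem$ should be an isomorphism of \emph{ordered} pointed neighbourhoods sending $\Env$ onto $\Envbis[\bij[\elem]]$. Since a $k$-environment type is exactly the isomorphism type of the $k$-neighbourhood enriched with the induced linear order and with the centre marked, it suffices to verify that $\bijsymb$ restricts to a well-defined map on the ball $\boule[\elem][\struct][k]$, that this restriction is an isomorphism of the underlying $k$-neighbourhoods, and that it preserves the induced order; the centre causes no trouble, as $\bijsymb$ sends $\elem$ to $\bij[\elem]$. For the neighbourhood, I would first note that for $\elem\in\Segment[k]$ the ball $\boule[\elem][\struct][k]$ lies inside $\Segment[2k]$, hence inside the domain of $\bijsymb$: by the segment adjacency structure an element of a segment of index $j$ has all its neighbours in the segments of indices $j{-}1,j,j{+}1$, so a path of length at most $k$ issued from $\Segment[k]$ cannot reach beyond $\Segment[2k]$. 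For the same reason the boundary $\Neigh{\boule[\elem][\struct][k]}$ lies in $\Segment[2k]\cup\Neigh{\Segment[2k]}$, the domain of the full isomorphism of which $\bijsymb$ is the restriction. As $\bijsymb$ is an isomorphism and shortest paths of length at most $k$ from $\elem$ stay in the ball, it maps them to paths of equal length in $\structbis$, yielding $\bijsymb(\boule[\elem][\struct][k])\subseteq\boule[\bij[\elem]][\structbis][k]$.

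The reverse inclusion is the step I expect to be the main obstacle, since distances from the centre could a priori shrink in $\structbis$ through elements lying outside the image of $\bijsymb$. To rule this out I would arrange that the copy $S$ produced in the \emph{Transfer on $\structbis$} paragraph is \emph{neighbour-closed around} $S'=\bijsymb(\Segment[2k])$, that is, every $\structbis$-neighbour of an element of $S'$ lies in $S$; this mirrors the fact that in $\struct$ every element of $\Segment[2k]$ has all its neighbours inside $\Segment[2k]\cup\Neigh{\Segment[2k]}$. Because the degree is at most $d$, this is a first-order property of the copy expressible at quantifier rank bounded by a function of $k$, so it can be secured by adding it to the lower bound imposed on $f(k)$. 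Granting it, an induction on the radius $i\le k$ shows that $\bijsymb$ maps $\boule[\elem][\struct][i]$ onto $\boule[\bij[\elem]][\structbis][i]$: a $\structbis$-neighbour of some $\bijsymb(u')$ with $u'\in\boule[\elem][\struct][i]$ (so $u'\in\Segment[2k]$ and $\bijsymb(u')\in S'$) is forced by neighbour-closure into $S$, hence is the image of a $\struct$-neighbour of $u'$, which lies in $\boule[\elem][\struct][i+1]$. Combined with the inclusion of the previous paragraph, the two balls of each radius $i\le k$ correspond under $\bijsymb$, so $\bijsymb(\boule[\elem][\struct][k])=\boule[\bij[\elem]][\structbis][k]$ and its restriction is an isomorphism of the pointed $k$-neighbourhoods.

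It then remains to check that $\bijsymb$ preserves the induced order, \ie that $u\ord u'$ iff $\bij[u]\ordbis\bij[u']$ for all $u,u'\in\boule[\elem][\struct][k]\subseteq\Segment[2k]$. This is immediate from the construction: the elements of $\Segment[2k]$ sit in segments other than $\Middle$, each such segment together with its internal order was transferred verbatim along $\bijsymb$, and the auxiliary between-segment orders $\prec_0$ and $\prec_1$ list the segments in the same order of labels. Hence, if $u$ and $u'$ lie in a common segment the equivalence follows because $\ordbis$ restricted to its image is the $\bijsymb$-image of $\ord$, and if they lie in distinct segments it follows from $\prec_0$ and $\prec_1$ agreeing on labels. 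Combining the two parts, $\bijsymb$ sends $\Env$ isomorphically onto $\Envbis[\bij[\elem]]$, so $\envtp=\envtpbis[\bij[\elem]]$, as required.
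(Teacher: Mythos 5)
Your proof is correct and follows the same route the paper intends: the paper gives no explicit argument (it asserts the lemma follows ``by construction'', having noted that the $k$-neighbourhood of any $\elem\in\Segment[k]$ lies in $\Segment[2k]$ and that the order was transferred segment by segment along $\bijsymb$). Your one explicit addition --- requiring the copy $S$ to be neighbour-closed around $\bijsymb(\Segment[2k])$, secured by one extra universal quantifier in the formula asserting the existence of the copy --- is exactly what the paper's bound $f(k)\geq|\Segment[2k]\cup\Neigh{\Segment[2k]}|+1$ implicitly accounts for, and you are right that this is the crux: a bare induced-substructure isomorphism onto $S$ would not by itself prevent the $k$-ball of $\bij[\elem]$ in $\structbis$ from containing elements outside the image, so your radius-induction patch fills in precisely the step the paper leaves tacit.
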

Lemma~\ref{lem:envpres} has two immediate consequences:
\begin{itemize}
\item The set $\Rarebis$ contains the occurrences in \structbis of all the rare $k$-neighbourhood types (just forget about the order on the $k$-environments, and remember that \struct and \structbis have the same number of occurrences of each rare $k$-neighbourhood type).
\item All the universal segments $\ULeftbis$ and $\URightbis$, for $0\leq j\leq k$, contain at least one occurrence of each environment in \Ord, for each $\type\in\Freq$.
\end{itemize}

By virtue of $\bijsymb$ being an isomorphism, and because the order $\ordgen$ was constructed on $S'$ as the image by $\bijsymb$ of $\ordgen$, we get:

\begin{lem}
  \label{lem:tppres}
  For each $\elem,\elembis\in\Segment[k],$ we have $\tpgen[\elem,\elembis][\orderstruct]=\tpgen[\bij[\elem],\bij[\elembis]][\orderstructbis]\,.$
\end{lem}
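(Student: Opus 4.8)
The plan is to exploit directly the two structural facts on which the construction of $\ordbis$ rests: that $\bijsymb$ is an isomorphism of $\vocab$-structures whose domain $\Segment[2k]$ contains $\Segment[k]$, and that $\ordbis$ was defined on $S'$ precisely as the $\bijsymb$-image of $\ord$ on $\Segment[2k]$. Following the convention of the Preliminaries, the ordered $2$-type $\tpgen[\elem,\elembis][\orderstruct]$ splits into its purely relational part $\tpgen[\elem,\elembis][\struct]$ and its order part (which of $x<y$, $x>y$, $x=y$ holds). It therefore suffices to match these two pieces separately under $\bijsymb$, and then recombine.

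First I would settle the relational part. Since $\elem,\elembis\in\Segment[k]\subseteq\Segment[2k]$, both lie in the domain of $\bijsymb:\restr{\struct}{\Segment[2k]}\to\restr{\structbis}{S'}$. Being a $\vocab$-isomorphism, $\bijsymb$ preserves every atom and negated atom over $\vocab$ (equality included), so $\tpgen[\elem,\elembis][\struct]=\tpgen[\bij[\elem],\bij[\elembis]][\structbis]$ is immediate.

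Next I would settle the order part, which amounts to proving $\elem\ord\elembis$ iff $\bij[\elem]\ordbis\bij[\elembis]$; equivalently, that $\bijsymb$ is order-preserving from $(\Segment[2k],\ord)$ onto $(S',\ordbis)$. Here I would unwind the definition of the two concatenated orders. If $\elem$ and $\elembis$ lie in the same segment of $\Segment[2k]$, their images lie in the $\bijsymb$-image of that segment, whose internal order was transferred verbatim along $\bijsymb$, so the comparison is preserved. If they lie in distinct segments, the relative $\ord$-order is dictated solely by the auxiliary segment order $\prec_0$; since $\bijsymb$ sends each segment of $\Segment[2k]$ to the corresponding segment on $\structbis$, and the segment orders $\prec_0$ and $\prec_1$ follow the same fixed pattern, the relative $\ordbis$-order of the images agrees.

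The only point demanding care — and hence the main, albeit minor, obstacle — is the inter-segment case in which $\Middle$ (respectively $\Middlebis$) is interleaved between the two compared segments in the global concatenation, for instance $\elem\in\Left[2k]$ and $\elembis\in\Right[2k]$. One must check that the presence of the non-transferred block $\Middle$ sitting between these segments does not disturb the relative order. As we only ever compare two elements that both avoid $\Middle$, and since $\Middlebis$ occupies the analogous position in $\ordbis$, the comparison is governed purely by the (identical) relative positions of the two segments in $\prec_0$ and $\prec_1$, and is therefore preserved. Recombining the matched relational and order parts yields the claimed equality $\tpgen[\elem,\elembis][\orderstruct]=\tpgen[\bij[\elem],\bij[\elembis]][\orderstructbis]$.
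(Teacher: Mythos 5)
Your proof is correct and takes essentially the same approach as the paper, which derives the lemma in a single sentence from exactly the two facts you invoke: that $\bijsymb$ is a $\vocab$-isomorphism, and that $\ordbis$ on $S'$ was constructed as the $\bijsymb$-image of $\ord$. Your explicit split into the relational and order parts of the $2$-type, and your check of the inter-segment comparisons (including the interleaved $\Middle$ block), merely spell out what the paper leaves implicit.
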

In particular, for $\elem=\elembis\in\Segment[k]$, we have $\tp[\elem]=\tpbis[\bij[\elem]]\,.$

\subsection{\texorpdfstring{Proof of the \FOtwo-similarity of \orderstruct and \orderstructbis}{Proof of the FO2-similarity of (A0,<0) and (A1,<1)}}
\label{sec:EFproof}

In this section, we aim to show the following result:

\begin{prop}
  \label{prop:fotwoeq}
  $\orderstruct\fotwoeq\orderstructbis$
\end{prop}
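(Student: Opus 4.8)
The plan is to apply Theorem~\ref{th:EF} and describe a winning strategy for the duplicator in the $k$-round two-pebble \EF game on \orderstruct and \orderstructbis. All the work is concentrated in a single invariant that the duplicator maintains from round to round; once it is set up correctly, its preservation decomposes into a small number of routine cases, and the fact that it forbids the spoiler from winning is immediate.

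\emph{The invariant.} After $r$ rounds have been played, denote by $\elem_1,\elem_2$ the two pebbled elements of \struct and by $\elembis_1,\elembis_2$ those of \structbis, with $\elem_1$ matched to $\elembis_1$ and $\elem_2$ to $\elembis_2$. I would maintain that the $(k{-}r)$-environment types agree pebblewise, that is $\envtp[\elem_1][k-r]=\envtpbis[\elembis_1][k-r]$ and $\envtp[\elem_2][k-r]=\envtpbis[\elembis_2][k-r]$, and in addition that the two pebbles are in ``the same configuration'' on both sides. Concretely, either both pairs are more than $2(k{-}r)$ apart in their respective Gaifman graphs and the order atoms coincide ($\elem_1\ord\elem_2$ iff $\elembis_1\ordbis\elembis_2$), or both pairs are within distance $2(k{-}r)$ and the ordered $(\vocab\cup\{c\})$-substructure induced on $\boule[\elem_1][\struct][k-r]\cup\boule[\elem_2][\struct][k-r]$ is isomorphic to the one induced on $\boule[\elembis_1][\structbis][k-r]\cup\boule[\elembis_2][\structbis][k-r]$ via a map sending $\elem_1\mapsto\elembis_1$ and $\elem_2\mapsto\elembis_2$. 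The second clause yields $\vocabtp=\vocabtpbis$ directly, while the first yields it because distant pebbles share no binary $\vocab$-atom, so that the $2$-type is fixed by the two matching $1$-types (extracted from the environment types) together with the matching order atom.

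\emph{Initialisation and maintenance.} At the start both pebbles of each structure occupy a single prescribed element; placing them through the isomorphism $\bijsymb$ and invoking Lemmas~\ref{lem:envpres} and~\ref{lem:tppres} establishes the depth-$k$ instance of the invariant. For the inductive step, suppose the spoiler moves a pebble of \struct---say the one realising $\elem_2$---to a new element; by the symmetry of the game the remaining cases are handled identically. The duplicator examines the distance from this new element to the stationary pebble $\elem_1$. If it is at most $2(k{-}r{-}1)$, the duplicator restricts the ordered local isomorphism furnished by the invariant to the new, smaller radius and reads off the image; the distance-$(4k{+}1)$ scattering of Proposition~\ref{prop:scatter} and the presence of $2k{+}1$ left and right segments guarantee that the whole $(k{-}r{-}1)$-neighbourhood concerned lies inside $\Segment[2k]\cup\Neigh{\Segment[2k]}$, hence is governed by $\bijsymb$. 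If the new element is far from $\elem_1$, the duplicator instead has to produce a response that realises the same $(k{-}r{-}1)$-environment type, lies on the side of $\elembis_1$ prescribed by the order atom, and is far from $\elembis_1$ in \structbis. Here the universality of the segments $\ULeftbis$ and $\URightbis$ for $0\le j\le k$---each containing an occurrence of every frequent environment type---together with the fact that $\Rarebis$ carries, through $\bijsymb$, every occurrence of a rare type, supplies the witness, while the segment layout of Figure~\ref{fig:segments} lets the duplicator enforce the order atom by selecting the witness in a segment on the correct side of the segment order $\prec_i$.

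\emph{The main obstacle.} The crux is the last of these requirements: ensuring that, round after round, such a far witness of the correct environment type and on the correct side of the order keeps existing, and that it can be chosen far enough from the already-pebbled element that no subsequent move can create a Gaifman edge between the two pebbles within the remaining budget of rounds. This is exactly what the parameters $m=2(k{+}1)\cdot M_k^d!$ and $\delta=4k$ of Equation~(\ref{eq:m_and_k}) are tuned to provide: enough pairwise-distant occurrences of each frequent environment type to sustain all $k$ rounds, with a distance margin of $4k$ so that neighbourhoods of radius at most $k$ never overlap accidentally. Verifying that these reserves are never exhausted, and that the ``close'' and ``far'' alternatives of the invariant remain jointly consistent after each move, is where the detailed bookkeeping is spent.
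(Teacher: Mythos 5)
Your overall architecture (the two-pebble game via Theorem~\ref{th:EF}, environment types whose radius shrinks with the number of remaining rounds, a far/close dichotomy, universal segments supplying far witnesses) is the same as the paper's, but your invariant omits one component that the paper's proof cannot do without, and the omission is fatal rather than cosmetic. The paper maintains, in addition to the analogues of your two clauses, the invariant \refsr: whenever a pebble lies in $\Segmentind$ (the set $\Rareind$ together with the layers $\Leftind$, $\Rightind$ for $j\le r$, which occupy the extreme positions of $\ordind$), the corresponding pebble in the other structure is \emph{exactly} its image under the isomorphism $\bijsymb$. Nothing in your invariant plays this role: an environment type only records how the Gaifman ball of an element is ordered internally, not where the element sits in the global order, and your ``far'' clause records only the order atom between the two pebbles. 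Hence the following spoiler attack defeats your strategy. The spoiler places a pebble on an element $a$ with a \emph{frequent} environment type having $j\le k-2$ elements $\ord$-below it (such elements exist, e.g.\ in $\NLeft[0]$ or $\ULeft[0]$), Gaifman-far from the other pebble. Your duplicator may answer with \emph{any} element $b$ of the same environment type, far from and on the correct side of the other pebble; nothing forces $b$ to have exactly $j$ elements $\ordbis$-below it. If it has more, the spoiler then alternately moves the two pebbles strictly downward in $\orderstructbis$, forcing the duplicator strictly downward in $\orderstruct$, and exhausts the $j$ elements below $a$ within the remaining rounds. Matching the order-position relative to the endpoints (up to the round budget) is therefore mandatory, and it is exactly what \refsr{} plus the tit-for-tat response on $\Segment$ provides; moreover this fact must be carried \emph{by the invariant itself}, since later rounds rely on it (in the paper, the correctness of the order atom after a far-left move, Case~\ref{enu:farleft}, and the whole of Case~\ref{enu:left}, are derived from \refsr[r+1]).

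Two secondary points. First, in your close case you justify the duplicator's answer by claiming the relevant neighbourhoods lie inside $\Segment[2k]\cup\Neigh{\Segment[2k]}$ and are hence ``governed by $\bijsymb$''; that is false in the typical situation where both pebbles sit deep in $\Middle$ after a far move --- there the answer must be read off from the isomorphism of environments recorded in the invariant (as in the paper's Case~\ref{enu:neigh}), not from $\bijsymb$, so the appeal to Proposition~\ref{prop:scatter} here is both unnecessary and misleading. Second, for rare types it is not enough that $\Rarebis$ ``supplies a witness'' of the right type through $\bijsymb$: the witness must also sit at the correct order position relative to the endpoints and to the other rare occurrences (all of which are crammed at the left end of both orders), which again only a $\bijsymb$-tit-for-tat clause in the invariant can guarantee.
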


Recall Theorem~\ref{th:EF}: in order to prove Proposition~\ref{prop:fotwoeq}, we show that the duplicator wins the $k$-round two-pebble \EF game on \orderstruct and \orderstructbis.
For that, let us show by a decreasing induction on $r$ from $k$ to $0$ that the duplicator can ensure, after $k-r$ rounds, that the three following properties (described below) hold. Recall that $\bijsymb$ and $\bijsymb[1]$ are the isomorphisms between $\struct\mathord{\upharpoonright}_{\structdom\setminus\Middle}$ and $\structbis\mathord{\upharpoonright}_{\structbisdom\setminus\Middlebis}$ along which the segments were transferred from \struct to \structbis.
\begin{align}
  &\forall\indin\forall\epsin\ \pebblegen\in\Segmentind\ \text{implies}\ \pebblegen[1-\ind]=\bijgen[\pebblegen] \tag{$\labsr$}\\
  &\forall\epsin\ \envtp[\pebblegen[0]][r]=\envtpbis[\pebblegen[1]][r] \tag{$\laber$}\\
  &\tp=\tpbis \tag{$\labtr$}
\end{align}
Let us stress that in this context, $\pebbleone$, $\pebbletwo$, $\pebbleonebis$ and $\pebbletwobis$ denote the elements of $\struct$ and $\structbis$ on which the pebbles have been played after $k-r$ rounds, and thus depend on $r$.

The first property, \refsr, guarantees that if a pebble is close (in a sense that depends on the number of rounds left in the game) to one of the \ordgen-minimal or \ordgen-maximal elements, the corresponding pebble in the other structure is located at the same position with respect to this \ordgen-extremal element.

As for \refer, it states that two corresponding pebbles are always placed on elements sharing the same $r$-environment type. Once again, the safety distance decreases with each round that goes.

Finally, \reftr controls that both pebbles have the same relative position (both with respect to the order and the original vocabulary) in the two ordered structures.
In particular, the duplicator wins the game if \reftr is satisfied at the beginning of the game, and after each of the $k$ rounds of the game.

\subsubsection{\texorpdfstring{Base case: proofs of \refsr[k], \refer[k] and \reftr[k]}{Base case: proofs of (Sk), (Ek) and (Rk)}}

We start by proving \refsr[k], \refer[k] and \reftr[k].
At the start of the game, the duplicator places both $\pebbleone$ and $\pebbletwo$ on the \ordgen-minimal element of \orderstruct, and both $\pebbleonebis$ and $\pebbletwobis$ on the \ordgen-minimal element of \orderstructbis. In particular,
\[\pebbleonebis=\pebbletwobis=\bij=\bij[\pebbletwo]\,.\]
This ensures that \refsr[k] holds, while \refer[k] and \reftr[k] follow from Lemmas~\ref{lem:envpres}--\ref{lem:tppres}.

\subsubsection{Winning strategy for the duplicator}
\label{subsec:strategy}

We now describe the duplicator's strategy to ensure that \refsr, \refer and \reftr hold no matter how the spoiler plays.

Suppose that we have \refsr[r+1], \refer[r+1] and \reftr[r+1] for some $0\leq r < k$, after $k-r-1$ rounds of the game.
Without loss of generality, we may assume that, in the $(k-r)$-th round of the \EF game between \orderstruct and \orderstructbis, the spoiler moves $\pebbleone$ in \orderstruct.

\begin{enumerate}[(I)]
\item\label{enu:border} If $\pebbleone\in\Segment$, then the duplicator responds by placing $\pebbleonebis$ on $\bij$.

  This corresponds to a tit-for-tat strategy. If the duplicator does not respond this way, then the spoiler will be able to expose the difference between \orderstruct and \orderstructbis in the subsequent moves, by forcing the duplicator to play closer and closer to the \ordgen-minimal or -maximal elements.

\item\label{enu:neigh} Else, if $\pebbleone\notin\Segment$, and $\pebbleone\in\voisdom$, then \refer[r+1] ensures that there exists an isomorphism $\psi:\Env[\pebbletwo][r+1]\to\Envbis[\pebbletwobis][r+1]\,.$ The duplicator responds by placing $\pebbleonebis$ on $\psi(\pebbleone)$ -- in other words, on the element whose relative position to~$\pebbletwobis$ is the same as the relative position of $\pebbleone$ with respect to $\pebbletwo$. 

\item\label{enu:farleft} Else suppose that $\orderstruct\models\pebbleone<\pebbletwo$ and $\pebbletwo\notin\Left[r+1]$. In other words, the spoiler places $\pebbleone$ to the left of $\pebbletwo$, when $\pebbletwo$ is not in the leftmost segment outside of $\Segment$ (namely, $\Left[r+1]$). The situation where $\pebbletwo\in\Left[r+1]$ is covered by Case~\ref{enu:left}.

  We have $\type \coloneqq \ntp[\pebbleone]\in\Freq$ as $\pebbleone\notin\Rare$. Let $\otype \coloneqq \envtp[\pebbleone]$.
  The duplicator responds by placing $\pebbleonebis$ on $\bij[\pinLeft[\otype][r+1]]$.

\item\label{enu:left} Else, if $\orderstruct\models\pebbleone<\pebbletwo$ and $\pebbletwo\in\Left[r+1]$, then the duplicator moves $\pebbleonebis$ on $\bij$ (by \refsr[r+1], $\pebbleone$ indeed belongs to the domain of $\bijsymb$).

\item\label{enu:farright} Else, suppose that $\orderstruct\models\pebbletwo<\pebbleone$ and $\pebbletwo\notin\Right[r+1]$. This case is symmetric to Case~\ref{enu:farleft}.

  Similarly, the duplicator opts to play $\pebbleonebis$ on $\bij[\pinRight[\otype][r+1]]$, where $\otype \coloneqq \envtp[\pebbleone]$.

\item\label{enu:right} If we are in none of the cases above, it means that the spoiler has placed $\pebbleone$ to the right of $\pebbletwo$, and that $\pebbletwo\in\Right[r+1]$. This case is symmetric to Case~\ref{enu:left}.

  Once again, the duplicator places $\pebbleonebis$ on $\bij$.
\end{enumerate}

It remains to show that this strategy satisfies our invariants: under the inductive assumption that \refsr[r+1], \refer[r+1] and \reftr[r+1] hold, for some $0\leq r<k$, we need to show that this strategy ensures that \refsr, \refer and \reftr hold.

We treat each case in its own section: Section~\ref{sec:border} is devoted to Case~\ref{enu:border} while Section~\ref{sec:neigh} covers Case~\ref{enu:neigh}. Both Cases~\ref{enu:farleft} and \ref{enu:left} are treated in Section~\ref{sec:left}. Cases~\ref{enu:farright} and \ref{enu:right}, being their exact symmetric counterparts, are left to the reader.

\begin{rem}
  \label{remark:y}
Note that some properties need no verification. Since $\pebbletwo$ and $\pebbletwobis$ are left untouched by the players, \refsr[r+1] ensures that half of \refsr automatically holds, namely that
\[\forall\indin,\quad\pebblegen[\ind][y]\in\Segmentind\quad\to\quad\pebblegen[1-\ind][y]=\bijgen[\pebblegen[\ind][y]]\,.\]
Similarly, the part of \refer concerning $\pebbletwo$ and $\pebbletwobis$ follows from \refer[r+1]:
\[\envtp[\pebbletwo][r]=\envtpbis[\pebbletwobis][r]\,.\]
Lastly, notice that once we have shown that \refer holds, it follows that $\vocabtpone=\vocabtponebis$ and $\vocabtptwo=\vocabtptwobis$. This remark will help us prove \reftr.
\end{rem}

\subsubsection{When the spoiler plays near the endpoints: Case~\ref{enu:border}}
\label{sec:border}

In this section, we treat the case where the spoiler places $\pebbleone$ near the \ordgen-minimal or \ordgen-maximal element of \orderstruct. Obviously, what ``near'' means depends on the number of rounds left in the game; the more rounds remain, the more the duplicator must be cautious regarding the possibility for the spoiler to reach an endpoint and potentially expose a difference between \orderstruct and~\orderstructbis.

As we have stated in Case~\ref{enu:border}, with $r$ rounds left, we consider a move on $\pebbleone$ by the spoiler to be near the endpoints if it is made in $\Segment$. In that case, the duplicator responds along the tit-for-tat strategy, namely by placing $\pebbleonebis$ on $\bij$.

Let us now prove that this strategy guarantees that \refsr, \refer and \reftr hold. Recall from Remark~\ref{remark:y} that part of the task is already taken care of.

\paragraph*{Proof of \refsr in Case~\ref{enu:border}}

We have to show that $\forall\indin,\ \pebblegen[\ind][x]\in\Segmentind\ \to\ \pebblegen[1-\ind][x]=\bijgen[\pebblegen[\ind][x]]\,.$
This follows directly from the duplicator's strategy, since $\pebbleonebis=\bij$ (thus $\pebbleone=\bijbis$).

\paragraph*{Proof of \refer in Case~\ref{enu:border}}

We need to prove that $\envtp[\pebbleone][r]=\envtpbis[\pebbleonebis][r]\,,$ which is a consequence of Lemma~\ref{lem:envpres} given that $\pebbleonebis=\bij$ and $r<k$.

\paragraph*{Proof of \reftr in Case~\ref{enu:border}}

First, suppose that $\pebbletwo\in\Segment[r+1]$. By \refsr[r+1], we know that $\pebbletwobis=\bij[\pebbletwo]$. Thus, Lemma~\ref{lem:tppres} allows us to conclude that $\tp=\tpbis$.

Otherwise, $\pebbletwo\notin\Segment[r+1]$ and \refsr[r+1] entails that $\pebbletwobis\notin\Segmentbis[r+1]$.
We have two points to establish:
\begin{align}
  \vocabtp=\vocabtpbis\label{eq:I_vocab},\\
  \ordtp=\ordtpbis\label{eq:I_order}.
\end{align}
By construction, the neighbours in $\structind$ of an element of $\Segmentind$ all belong to $\Segmentind[r+1]$, thus $\pebbleone$ and $\pebbletwo$ never appear in the same tuple of some relation, and neither do $\pebbleonebis$ and $\pebbletwobis$. 
In other words, the only atoms that hold in $\struct$ about $\pebbleone$ and $\pebbletwobis$ concern independently $\pebbleone$ and~$\pebbletwo$.
Similarly, the only atoms that hold in $\structbis$ about $\pebbleonebis$ and $\pebbletwobis$ concern independently $\pebbleonebis$ and $\pebbletwobis$. Equation~(\ref{eq:I_vocab}) follows from this fact together with $\vocabtpone=\vocabtponebis$ and $\vocabtptwo=\vocabtptwobis$ (Remark~\ref{remark:y}).

As for Equation~(\ref{eq:I_order}), either \[\pebbleone\in\Rare\cup\bigcup_{0\leq j\leq r}\Left\quad\text{and}\quad\pebbleonebis\in\Rarebis\cup\bigcup_{0\leq j\leq r}\Leftbis\,,\] in which case $\ordtp$ and $\ordtpbis$ both contain the atom ``$x<y$'' (as $\pebbleone$ and $\pebbleonebis$ appear in segments to the left of the segments containing $\pebbletwo$ and $\pebbletwobis$, respectively) and are thus equal, or \[\pebbleone\in\bigcup_{0\leq j\leq r}\Right\quad\text{and}\quad\pebbleonebis\in\bigcup_{0\leq j\leq r}\Rightbis\,,\] in which case $\ordtp$ and $\ordtpbis$ both contain the atom ``$x>y$'', and are still equal.

\subsubsection{When the spoiler plays next to the other pebble: Case~\ref{enu:neigh}}
\label{sec:neigh}

Suppose now that the spoiler places $\pebbleone$ next to the other pebble in \struct (\ie $\pebbleone\in\voisdom$), but not in $\Segment$ (for that move would fall under the jurisdiction of Case~\ref{enu:border}).
In that case, the duplicator must place~$\pebbleonebis$ so that the relative position of $\pebbleonebis$ and $\pebbletwobis$ is the same as that of $\pebbleone$ and $\pebbletwo$.

For that, we can use \refer[r+1], which guarantees that $\envtp[\pebbletwo][r+1]=\envtpbis[\pebbletwobis][r+1]\,.$ Thus there exists an isomorphism $\psi$ between $\Env[\pebbletwo][r+1]$ and $\Envbis[\pebbletwobis][r+1]$. Note that this isomorphism is unique, as we are dealing with linear orders.

The duplicator's response is to place $\pebbleonebis$ on $\psi(\pebbleone)$. Let us now prove that this strategy is correct with respect to our invariants \refsr, \refer and \reftr.

\paragraph*{Proof of \refsr in Case~\ref{enu:neigh}}

Because the spoiler's move does not fall under Case~\ref{enu:border}, we know that $\pebbleone\notin\Segment$.
Let us now show that $\pebbleonebis$ is not near the endpoints either, \ie that $\pebbleonebis\notin\Segmentbis$.

We consider two cases, depending on whether $\pebbletwobis\in\Segmentbis[r+1]$.
Since $\pebbleonebis$ and $\pebbletwobis$ are neighbours in \structbis, if $\pebbletwobis\notin\Segmentbis[r+1]$, then by construction $\pebbleonebis\notin\Segmentbis$.
Otherwise, $\pebbletwobis\in\Segmentbis[r+1]$ and we know by \refsr[r+1] that $\pebbletwo=\bijbis[\pebbletwobis]$. Because $\psi$ is the unique isomorphism between $\Env[\pebbletwo][r+1]$ and $\Envbis[\pebbletwobis][r+1]$, $\psi$ is equal to the restriction $\widetilde\bijsymb$ of $\bijsymb$:
\[\widetilde\bijsymb\ \colon \Env[\pebbletwo][r+1]\ \to\ \Envbis[\pebbletwobis][r+1]\,.\]
Thus $\pebbleone=\psi^{-1}(\pebbleonebis)=\widetilde\bijsymb^{-1}(\pebbleonebis)=\bijbis$, and by definition of the segments on \orderstructbis, which are just a transposition of the segments of \orderstruct via $\bijsymb$, $\pebbleone\notin\Segment$ then entails $\pebbleonebis\notin\Segmentbis$.

Since we neither have $\pebbleone\in\Segment$ nor $\pebbleonebis\in\Segmentbis$, \refsr holds -- recall from Remark~\ref{remark:y} that the part concerning $\pebbletwo$ and $\pebbletwobis$ is always satisfied.

\paragraph*{Proof of \refer in Case~\ref{enu:neigh}}

Recall that the duplicator placed $\pebbleonebis$ on the image of $\pebbleone$ by the isomorphism \[\psi:\Env[\pebbletwo][r+1]\to\Envbis[\pebbletwobis][r+1]\,.\]
It is easy to check that the restriction $\widetilde\psi$ of $\psi$:
$\widetilde\psi\ :\ \Env[\pebbleone][r]\ \to\ \Envbis[\pebbleonebis][r]$
is well-defined, and is indeed an isomorphism.\\ This ensures that $\envtp[\pebbleone][r]=\envtpbis[\pebbleonebis][r]\,,$ thus completing the proof of \refer.

\paragraph*{Proof of \reftr in Case~\ref{enu:neigh}}

This follows immediately from the fact that the isomorphism~$\psi$ maps $\pebbleone$ to $\pebbleonebis$ and $\pebbletwo$ to $\pebbletwobis$: all the atomic facts about these elements are preserved.

\subsubsection{When the spoiler plays to the left: Cases~\ref{enu:farleft} and \ref{enu:left}}
\label{sec:left}

We now treat our last case, which covers both Cases~\ref{enu:farleft} and \ref{enu:left}, \ie the instances where the spoiler places~$\pebbleone$ to the left of $\pebbletwo$ (formally: such that $\orderstruct\models\pebbleone<\pebbletwo$), which do not already fall in Cases~\ref{enu:border}--\ref{enu:neigh}.

Note that the scenario in which the spoiler plays to the right of the other pebble is the exact symmetric of this one (since the $\Rareind$ play no role in this case, left and right can be interchanged harmlessly).

The idea here is very simple: since the spoiler has placed $\pebbleone$ to the left of $\pebbletwo$, but neither in $\Segment$ nor in $\voisdom$, the duplicator responds by placing $\pebbleonebis$ on an element of $\ULeftbis[r+1]$ (the leftmost universal segment not in $\Segmentbis$) sharing the same $k$-environment.
This is possible by construction of the universal segments: if $\otype \coloneqq \envtp[\pebbleone]$ (which must extend a frequent $k$-neighbourhood type, since $\pebbleone\notin\Rare$), then $\bij[\pinLeft[\otype][r+1]]$ satisfies the requirements.

There is one caveat to this strategy. If $\pebbletwobis$ is itself in $\Leftbis[r+1]$, two problems may arise: first, it is possible for $\pebbleonebis$ and $\pebbletwobis$ to be in the wrong order (\ie such that $\orderstructbis\models\pebbleonebis>\pebbletwobis$). Second, it may be the case that $\pebbleonebis$ and $\pebbletwobis$ are neighbours in \structbis, which, together with the fact that $\pebbleone$ and $\pebbletwo$ are not neighbours in \struct, would break \reftr.

This is why the duplicator's strategy depends on whether $\pebbletwobis\in\Leftbis[r+1]$:
\begin{itemize}
\item if this is not the case, then the duplicator places $\pebbleonebis$ on $\bij[\pinLeft[\otype][r+1]]$. This corresponds to Case~\ref{enu:farleft}.
\item if $\pebbletwobis\in\Leftbis[r+1]$, then \refsr[r+1] guarantees that $\pebbletwo\in\Left[r+1]$. Hence $\pebbleone$, which is located to the left of $\pebbletwo$, is in the domain of $\bijsymb$: the duplicator moves $\pebbleonebis$ to $\bij$. This situation corresponds to Case~\ref{enu:left}.
\end{itemize}
Let us prove that \refsr, \refer and \reftr hold in both of these instances.

\paragraph*{Proof of \refsr in Case~\ref{enu:farleft}}

Since the spoiler's move does not fall under Case~\ref{enu:border}, we have that $\pebbleone\notin\Segment$.
By construction, $\pinLeft[\otype][r+1]\in\Left[r+1]$, thus $\bij[\pinLeft[\otype][r+1]]\in\Leftbis[r+1]$, and $\pebbleonebis\notin\Segmentbis$.

\paragraph*{Proof of \refer in Case~\ref{enu:farleft}}

It follows from $\envtp[\pinLeft[\otype][r+1]]=\otype$ together with Lemma~\ref{lem:envpres} that \[\envtp[\pebbleone]=\envtpbis[\pebbleonebis]\,.\]
A fortiori, $\envtp[\pebbleone][r]=\envtpbis[\pebbleonebis][r]$.

\paragraph*{Proof of \reftr in Case~\ref{enu:farleft}}

Because the spoiler's move does not fall under Case~\ref{enu:neigh}, $\pebbleone\notin\voisdom$. In other words, the (positive) atoms occurring in $\vocabtp$ are exactly those appearing in $\vocabtpone$ and $\vocabtptwo$, as no atom involving both $\pebbleone$ and $\pebbletwo$ holds.

Recall the construction of $\ULeft[r+1]$: the whole $k$-neighbourhood of $\pinLeft[\otype][r+1]$ was included in this segment. In particular, $\boule[\pebbleonebis][\structbis][1]=\boule[\bij[\pinLeft[\otype][r+1]]][\structbis][1]\subseteq\ULeftbis[r+1]\,.$
By assumption, $\pebbletwobis\notin\Leftbis[r+1]$, which entails that the atoms in $\vocabtpbis$ are those occurring in $\vocabtponebis$ and $\vocabtptwobis$.
It then follows from the last observation of Remark~\ref{remark:y} that $\vocabtp=\vocabtpbis\,.$

Let us now prove that $\ordtpbis$ contains the atom ``$x<y$''.
We claim that $\pebbletwobis\notin\Rarebis\cup\bigcup_{0\leq j\leq r+1}\Leftbis$. Suppose otherwise: \refsr[r+1] would entail that $\pebbletwo\in\Rare\cup\bigcup_{0\leq j\leq r+1}\Left$ which, together with the assumption $\pebbletwo\notin\Left[r+1]$ and $\pebbleone<\pebbletwo$, would result in $\pebbleone$ being in $\Segment$, which is a contradiction.
Both $\ordtpbis$ and $\ordtp$ contain the atom ``$x<y$'' and thus must be equal, which concludes the proof of \reftr.

\paragraph*{Proof of \refsr, \refer and \reftr in Case~\ref{enu:left}}

Let us now move to the case where $\pebbletwobis\in\Leftbis[r+1]$. Recall that under this assumption, $\pebbletwo=\bijbis[\pebbletwobis]\in\Left[r+1]$ and since $\pebbleone<\pebbletwo$ and $\pebbleone\notin\Segment$, we have that $\pebbleone\in\Left[r+1]$.
The duplicator places the pebble $\pebbleonebis$ on $\bij$; in particular, $\pebbleonebis\in\Leftbis[r+1]$.

The proof of \refsr follows from the simple observation that $\pebbleone\notin\Segment$ and $\pebbleonebis\notin\Segmentbis$.
As for \refer and \reftr, they follow readily from Lemmas~\ref{lem:envpres} and \ref{lem:tppres} and the fact that
$\pebbleonebis=\bij$ and $\pebbletwobis=\bij[\pebbletwo]$.

\subsection{Counting quantifiers}
\label{sec:counting}

We now consider the natural extension $\Ctwo$ of \FOtwo, where one is allowed to use counting quantifiers of the form $\existsm{i}x$ and $\existsm{i}y$, for $i\in\N$. Such a quantifier, as expected, expresses the existence of at least $i$ elements satisfying the formula which follows it.
This logic $\Ctwo$ has been extensively studied. From an expressiveness standpoint, $\Ctwo$ strictly extends \FOtwo (which cannot express the existence of three elements over the empty vocabulary), and contrary to the latter, \Ctwo does not enjoy the small model property (meaning that contrary to \FOtwo, there exist satisfiable $\Ctwo$-sentences which do not have small, or even finite, models). However, the satisfiability problem for $\Ctwo$ is still decidable~\cite{DBLP:conf/lics/GradelOR97,DBLP:journals/logcom/Pratt-Hartmann07,DBLP:conf/wollic/Pratt-Hartmann10}. To the best of our knowledge, it is not known whether $\oictwo$ has a decidable syntax.
Let us now explain how the proof of Theorem~\ref{th:main} can be adapted to show the following stronger version:

\begin{thm}
  \label{th:mainCtwo}
  Let \classe be a class of structures of bounded degree on a purely relational vocabulary.
  Then $\oictwo\subseteq\FO$ on \classe.
\end{thm}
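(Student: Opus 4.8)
The plan is to reuse the entire architecture of the proof of Theorem~\ref{th:main} and to change only the notion of similarity we aim at, together with the combinatorial parameters. Fix an $\oictwo$-sentence of quantifier rank $k$ all of whose counting quantifiers $\existsm{i}$ satisfy $i\leq n$. Exactly as in the two-variable case, it suffices to exhibit a function $f(k,n)$ such that $\struct\foeq[f(k,n)]\structbis$ entails $\orderstruct\ctwoeq[k,n]\orderstructbis$ for the two orders $\ord,\ordbis$ built in Section~\ref{sec:construction}; order-invariance over $\classe$ together with the finite-disjunction argument of Section~\ref{sec:overview} then yields an equivalent $\FO(\vocab)$-sentence on $\classe$. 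The one genuinely new ingredient is that $\Ctwo$-similarity at depth $(k,n)$ is no longer captured by the game of Theorem~\ref{th:EF}, but by its counting (set/bijective) variant à la Hella: at each round the spoiler chooses a structure, a pebble, and a set $S$ of at most $n$ elements of that structure; the duplicator answers with a set $S'$ of the same cardinality together with a type-preserving bijection $S\to S'$; the spoiler then selects an element of $S'$ and the duplicator must respond with its preimage. I would prove that the duplicator wins the $k$-round version of this game on \orderstruct and \orderstructbis.

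Before touching the game I would revisit Section~\ref{sec:freqneigh} with the threshold $n$ in mind and enlarge the two quantitative parameters. In Equation~(\ref{eq:m_and_k}) I would take $m:=2(n+1)(k+1)\cdot M_k^d!$, so that Proposition~\ref{prop:scatter} now delivers, for every frequent $k$-neighbourhood type, enough occurrences at distance at least $4k+1$ to allocate $n$ of them per environment type and per segment index; correspondingly, when building the universal segments in Section~\ref{sec:construction} I would place $n$ scattered representatives of each $\otype\in\Ord[\type]$ rather than a single one, so that every universal segment contains at least $n$ occurrences of each $k$-environment type extending a frequent $k$-neighbourhood type. The analogue of Lemma~\ref{lem:threshold} goes through verbatim since $m$ and $\delta$ still depend only on $k$ and $n$, yielding a threshold $\thr(k,n)$; I would then require $f(k,n)$ large enough to force $\threq{\struct}{\structbis}{k}{\thr(k,n)}$ (via Proposition~\ref{prop:FO_neigh}), to guarantee $\struct\simeq\structbis$ when $\Freq=\emptyset$, and to allow the transfer of all of $\Segment[2k]\cup\Neigh{\Segment[2k]}$ through $\bijsymb$ as in the paragraph \emph{Transfer on \structbis}. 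The orders and the isomorphism $\bijsymb$ are built unchanged, and Lemmas~\ref{lem:envpres}--\ref{lem:tppres} hold as before.

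For the game I would keep the invariants \refsr, \refer and \reftr and prove by decreasing induction on the number $r$ of remaining rounds that the duplicator maintains them, the key observation being that the per-element responses computed in Section~\ref{sec:EFproof} assemble into the required bijection. Suppose the spoiler moves $\pebbleone$ and exposes a set $S$ with $|S|\leq n$; I would split $S$ according to the six cases of Section~\ref{subsec:strategy} and answer each part injectively. The elements of $S$ lying in $\Segment$ (Case~\ref{enu:border}), or handled by Cases~\ref{enu:left},~\ref{enu:right} where $\pebbletwobis\in\Leftbis[r+1]$ so that $\pebbleone$ lies in the domain of $\bijsymb$, are mapped by the restriction of $\bijsymb$; the elements in $\voisdom$ (Case~\ref{enu:neigh}) are mapped by the unique local isomorphism $\psi$ between the $(r+1)$-environments of $\pebbletwo$ and $\pebbletwobis$; and each remaining element $a$ (Cases~\ref{enu:farleft},~\ref{enu:farright}), whose $k$-neighbourhood type is necessarily frequent, is sent to one of the $n$ representatives of $\envtp[a]$ placed in $\ULeftbis[r+1]$ (resp.\ $\URightbis[r+1]$). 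Since each universal segment now holds at least $n\geq|S|$ copies of every frequent $k$-environment type, and since the three zones live in pairwise disjoint segments, these partial maps combine into a single injection on $S$; letting $S'$ be its image gives the desired bijection. The verification that \refsr, \refer and \reftr hold for each element of $S$, hence for the one eventually pebbled, is precisely the pointwise case analysis of Sections~\ref{sec:border}--\ref{sec:left}.

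The main obstacle is exactly this assembling step: one must ensure that the number of realisations of each relevant \emph{extended} type (the $r$-environment type together with the $2$-type relative to the static pebble $\pebbletwo,\pebbletwobis$) is either equal or at least $n$ in both \orderstruct and \orderstructbis, since otherwise no bijection $S\to S'$ could exist. On the rare/endpoint zone and on the $(r+1)$-environment of the static pebble this is immediate, because $\bijsymb$ and $\psi$ are genuine isomorphisms and match the counts there exactly; in the interior it follows from the enlarged abundance, which furnishes at least $n$ copies of each frequent $k$-environment type in every universal segment, while the interaction with the static pebble is confined to its environment and does not perturb the far counts, elements far from $\pebbletwo$ sharing no tuple with it so that their $2$-type with the static pebble is determined solely by the order, i.e.\ by which segment they occupy. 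The delicate bookkeeping is to check that the $\psi$-images around the static pebble and the $\bijsymb$-images near the endpoints never collide with the universal-segment images; this holds by construction, as those regions sit in distinct segments. Once disjointness and count-matching are secured, the invariants transfer exactly as in the two-variable proof, closing the induction and hence establishing the theorem.
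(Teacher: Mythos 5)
Your proposal is correct and takes essentially the same route as the paper's own proof: the paper likewise passes to a set-based counting \EF game (equivalent to your bijective formulation), enlarges $m$ in Equation~(\ref{eq:m_and_k}) so that each universal segment carries multiple scattered representatives of every frequent $k$-environment type, and establishes injectivity of the duplicator's per-element answers by the same case split, using the isomorphisms $\bijsymb$ and $\psi$ for Cases~\ref{enu:border}, \ref{enu:neigh}, \ref{enu:left}, \ref{enu:right} and the extra representatives for Cases~\ref{enu:farleft} and~\ref{enu:farright}. The only cosmetic difference is that the paper bounds both the quantifier rank and the counting indexes by the single parameter $k$, whereas you track them separately as $(k,n)$.
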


\begin{proof}
  The proof is very similar to that of Theorem~\ref{th:main}. The difference is that we now need to consider \Ctwo-similarity instead of \FOtwo-similarity. To that end, we consider two families of equivalence relations, denoted $\ctwoeq$ and $\oictwoeq$ (for $k\in\N$), such that $\struct\ctwoeq\structbis$ and  $\struct\oictwoeq\structbis$ iff \struct and \structbis agree on all $\Ctwo$-sentences and \oictwo-sentences, respectively, of quantifier rank at most~$k$ and with counting indexes at most $k$. 

  Let us rephrase Remark~\ref{rem:end_proof} in the case of \Ctwo. Assume we manage to show the existence of some function $\thr$ such that for $\struct,\structbis\in\classe$, one can construct $\ordgen$ with
  \begin{equation}
    \label{eq:ctwoeq}
    \orderstruct\ctwoeq\orderstructbis
  \end{equation}
  whenever $\threq{\struct}{\structbis}{k}{\thr(k)}$.  
  Once again considering $f:k\mapsto\hat f(k,\thr(k))$, Proposition~\ref{prop:FO_neigh} guarantees that $\struct\foeq[f(k)]\structbis$ implies $\struct\oictwoeq\structbis$.
  This amounts to saying that the equivalence relation $\foeq[f(k)]$ is more fine-grained than $\oictwoeq$ on \classe. An \oictwo-sentence of quantifier rank $k$ and with counting indexes at most $k$ can first be decomposed according to the equivalence classes of $\oictwoeq$, and then rewritten on \classe as a disjunction of \FO-sentences of rank $f(k)$ defining $\foeq[f(k)]$ classes, which concludes the proof.
  
  It only remains to show the existence of such a function $\thr$ and to detail the strategy for the construction of two linear orders satisfying Equation~(\ref{eq:ctwoeq}).

  In order to prove Equation~(\ref{eq:ctwoeq}), we need an \EF game capturing $\ctwoeq$. It is not hard to derive such a game from the \EF game for $\Ctwo$~\cite{immerman1990describing}.
  This game only differs from the two-pebble \EF game in that in each round, once the spoiler has chosen a structure (say \orderstruct) and a pebble to move (say $\pebbleone$), the spoiler picks not only one element of that structure, but a set $\set$ of up to $k$ elements. Then the duplicator must respond with a set $\setbis$ of same cardinality in \orderstructbis. The spoiler then places $\pebbleonebis$ on any element of $\setbis$, to which the duplicator responds by placing $\pebbleone$ on some element of $\set$. As usual, the spoiler wins after this round if $\tp\neq\tpbis\,.$ Otherwise, the game goes on until $k$ rounds are played.

  It is not hard to establish that this game indeed captures $\ctwoeq$, in the sense that $\orderstruct\ctwoeq\orderstructbis$ if and only if the duplicator has a winning strategy for $k$ rounds of this game. The restriction on the cardinality of the set chosen by the spoiler (which is at most~$k$) indeed corresponds to the fact that the counting indexes of the formulae are at most~$k$. As for the number of rounds (namely, $k$), it corresponds as usual to the quantifier rank. This can be easily derived from a proof of Theorem 5.3 in~\cite{immerman1990describing}, and is left to the~reader.

  Let us now explain how to modify the construction of \ordgen presented in Section~\ref{sec:construction} in order for the duplicator to maintain similarity for $k$-round in such a game. The only difference lies in the choice of the universal elements. Recall that in the previous construction, we chose, for each $k$-environment type $\tau_l$ extending a frequent $k$-neighbourhood type and each segment $\ULeft$, an element $\pinLeft$ whose $k$-environment type in \orderstruct is destined to be $\tau_l$ (and similarly for $\URight$ and $\pinRight$).

  In the new construction, we pick $k$ such elements, instead of just one. Just as previously, all these elements must be far enough from one another in the Gaifman graph of \struct. Once again, this condition can be met by virtue of the $k$-neighbourhood type $\tau$ underlying $\tau_l$ being frequent, and thus having many occurrences scattered across \struct (remember that we have a bound on the degree of \struct, thus all the occurrences of $\tau$ cannot be concentrated). We only need to multiply the value of $m$ by $k$ in Equation~(\ref{eq:m_and_k}). This in turn means that \thr must be larger than for the case of \FOtwo.
  
  When the spoiler picks a set of elements of size at most $k$ in one of the structures (say~$\set$ in \orderstruct), the duplicator responds by selecting, for each one of the elements of $\set$, an element in \orderstructbis along the strategy for the \FOtwo-game explained in Section~\ref{subsec:strategy}. All that remains to be shown is that it is possible for the duplicator to answer each element of $\set$ with a different element in \orderstructbis.

  Note that if the duplicator follows the strategy from Section~\ref{subsec:strategy}, they will never answer two moves by the spoiler falling under different cases among Cases~\ref{enu:border}-\ref{enu:right} with the same element. Thus we can treat separately each one of these cases; and for each case, we show that if the spoiler chooses up to $k$ elements in \orderstruct falling under this case in~$\set$, then the duplicator can find the same number of elements in \orderstructbis, following the aforementioned~strategy.
  \begin{itemize}
  \item For Case~\ref{enu:border}, this is straightforward, since the strategy is based on the isomorphism between the borders of the linear orders. The same goes for Cases~\ref{enu:neigh}, \ref{enu:left} and \ref{enu:right}, as the strategy in these cases also relies on an isomorphism argument.
  \item Suppose now that $\pebbletwo\notin\Left[r+1]$, and assume that the spoiler chooses several elements to the left of $\pebbletwo$, but outside of $\Segment$ and not adjacent to $\pebbletwo$. This corresponds to Case~\ref{enu:farleft}. Recall that our new construction guarantees, for each $k$-environment type extending a frequent $k$-neighbourhood type, the existence in $\Leftbis[r+1]$ of $k$ elements having this environment. This lets us choose, in $\Leftbis[r+1]$, a distinct answer for each element in the set selected by the spoiler, sharing the same $k$-environment type. Case~\ref{enu:farright} is obviously symmetric.
  \end{itemize}
  This concludes the proof of Theorem~\ref{th:mainCtwo}.
\end{proof}

\subsection{Constant symbols}
\label{sec:constants}
Theorem~\ref{th:main} and \ref{th:mainCtwo} assume the vocabulary is purely relational. We now explain how to lift this constraint.

To extend these theorems to a vocabulary \vocab containing constant symbols, it suffices to alter the definition of $k$-neighborhoods in the proof. Instead of including only elements at distance at most $k$ from \elem, $\neighgen$ now also includes the interpretations of all the (finitely many) constant symbols in \vocab. The only notable difference in the proof is an increase in the bound on the size of neighborhoods -- more precisely, $M_k^d$ has to be increased by the number of constant symbols in Remark~\ref{rem:M}. This increase has no bearing on the proof, as $M_k^d$ is still a function of \vocab, $k$ and $d$.


\section{Conclusion}\label{sec:conclusion}

In this paper, we made significant progress towards a better understanding of the two-variable fragment of order-invariant first-order logic:
\begin{itemize}\itemsep0em
\item From a complexity point of view, we established the $\coNET$-completeness of the problem of deciding if a given \FOtwo-sentence is order-invariant (Theorem~\ref{thm:complexity}), significantly simplifying and improving the result by Zeume and Harwath~\cite[Thm.~12]{ZeumeH16}.
\item From an expressivity point of view, we addressed the question of whether every property definable in order-invariant $\FOt$ can also be expressed in plain $\FO$.
We failed short of fully answering the question, but provided two interesting results.
The first one (namely, Theorem~\ref{thm:beyondfo}) establishes that under a more relaxed notion of order-invariance, the answer to the above question is ``no''. While this does not bring a fully-satisfactory answer to the problem, this leads us to believe that order-invariant $\FOt$ can indeed express properties beyond the scope of $\FO$.
The second one (Theorem~\ref{th:main}) states that when the degree is bounded, every property expressible in order-invariant $\FOt$ is definable in $\FO$ without the use of the order. This is an important step towards resolving the conjecture that order-invariant $\FO$ over classes of structures of bounded degree cannot express properties beyond the reach of $\FO$.
\end{itemize} 

Results of Section~\ref{sec:expr_power} also apply to the case of the two-variable logic with counting, $\Ctwo$. 
While order-invariant $\Ctwo$ has decidable satisfiability and validity problems~\cite[Theorem 6.20]{Charatonik016a}, it is open if it has a decidable syntax (\ie whether the problem of determining if a given $\Ctwo$-sentence is order-invariant is decidable). Unfortunately the techniques introduced in Section~\ref{sec:complexity} are of no use here, as $\Ctwo$ lacks the finite model property.
Finally, it might be a good idea to study order-invariant $\FOt$ over graph classes beyond classes of bounded-degree, \eg planar graphs or nowhere-dense classes of graphs.

\section*{Acknowledgements}
Bartosz Bednarczyk was supported by the ERC Consolidator Grant No.~771779 (DeciGUT).
He would like to thank Antti Kuusisto and Anna Karykowska for many insightful discussions on the problem.
The authors want to thank Thomas Schwentick and the anonymous reviewers for their helpful comments.

\bibliographystyle{alphaurl}
\bibliography{biblio}

\end{document}